\begin{document}
%%%%%%%%%%%%%%%%%%%%%%%%%%%%%%%%%%%%%%%%%%%%%%%%%%%%%%%%%%%%%%%%%%%
\newcommand{\tool}[0]{\textsc{Atropos}}        % name of the tool
\newcommand{\papername}[0]{
Repairing Serializability Bugs in Distributed Database Programs via Automated Schema Refactoring
}
  
%%%%%%%%%%%%%%%%%%%%%%%%%%%%%%%%%%%%%%%%%%%%%%%%%%%%%%%%%%%%%%%%%%%

%%%%%%%%%%%%%%%%%%%%%%%%%%%%%%%%%%%%%%%%%%%%%%%%%%%%%%%%%%%%%%%%%%%
\newcommand\BD[1]{\textcolor{red}{[\texttt{BD: #1}]}} % Ben's Comments
\newcommand\SJ[1]{\textcolor{red}{[\texttt{SJ: #1}]}} % Suresh's Comments
\newcommand\KR[1]{\textcolor{red}{[\texttt{KR: #1}]}} % Kia's Comments

\newcommand{\crr}[1]{\textsc{cr\Small{#1}}}  %consistency requirement
\newcommand{\passed}[0]{\color{tick}\ding{51}\color{black}}
\newcommand{\failed}[0]{\color{red}\ding{53}\color{black}}

%%%%%%%%%%%%%%%%%%%%%%%%%%%%%%%%%%%%%%%%%%%%%%%%%%%%%%%%%%%%%%%%%%%
%% NOTES
\newcommand\TODO[1]{\colorbox{yellow!20}{\textcolor{red}{[\texttt{#1}]}}}

%%%%%%%%%%%%%%%%%%%%%%%%%%%%%%%%%%%%%%%%%%%%%%%%%%%%%%%%%%%%%%%%%%%
%% TEXT STYLES
\newcommand{\stx}[1]{{#1}}   % For backward compatibility
\newcommand{\sql}[1]{\textcolor{sql}{\uppercase{\mathtt{#1}}\;}}
\newcommand{\param}[1]{{\color{param}#1}}

%% TEXT CONSTANTS
\newcommand{\intt}[0]{\mathbb{Z}}   % integers
\newcommand{\booll}[0]{\mathbb{B}}   % booleans
\newcommand{\ALT}{\,\mid\,}
\newcommand{\IN}[0]{\mathtt{IN}}

\newcommand{\orac}[0]{\mathcal{O}}

%%%%%%%%%%%%%%%%%%%%%%%%%%%%%%%%%%%%%%%%%%%%%%%%%%%%%%%%%%%%%%%%%%%
%% MATH STYLES
\newcommand{\seq}[1]{{#1}^*}
\newcommand{\set}[1]{\overline{#1}}
\newcommand{\func}[1]{\mathtt{#1}}

\newcommand{\vc}[0]{\mathbb{V}}
\newcommand{\rel}[0]{\mathbb{R}}
\newcommand{\txn}[0]{\mathbb{T}}
\newcommand{\hist}[0]{\mathcal{H}}
\newcommand{\map}[0]{\mathcal{M}}
\newcommand{\prog}[0]{\mathbb{P}}

%%%%%%%%%%%%%%%%%%%%%%%%%%%%%%%%%%%%%%%%%%%%%%%%%%%%%%%%%%%%%%%%%%%
%% OPERATIONAL SEMANTICS
\newcommand{\vis}[0]{\mathtt{vis}}
\newcommand{\dap}[0]{\chi}
\newcommand{\cnt}[0]{\mathtt{cnt}}
\newcommand{\store}[0]{\mathtt{str}}
\newcommand{\obs}[0]{\mathtt{obs}}
\newcommand{\st}[0]{\mathtt{st}}
\newcommand{\ar} [0]{\mathtt{ar}}
\newcommand{\so} [0]{\mathtt{so}}
\newcommand{\B}[2]{#2, #1 \Downarrow }
\newcommand{\I}[2]{#2, #1 \Downarrow }
\newcommand{\ir}[0]{$\mathcal{AR}$}
\newcommand{\rdd}[1]{\textcolor{eff}{\mathsf{rd}#1}}
\newcommand{\F}[0]{\mathcal{F}}
\newcommand{\wrr}[1]{\textcolor{eff}{\mathsf{wr}#1}}

\definecolor{bostonuniversityred}{rgb}{0.8, 0.0, 0.0}
\newcommand{\eval}[1]{\stx{[\![#1]\!]}}
\newcommand{\RULE}[2]{\footnotesize
\frac{\begin{array}{c}#1\end{array}}
     {\begin{array}{c}#2\end{array}}}
\newcommand{\ruleLabel}[1]{
\begin{flushleft}
    {\scriptsize
    \textrm{\sc{\color{bostonuniversityred} (#1)}}}
    \vspace{-1mm}
  \end{flushleft}}
\newcommand{\step}[3]{#1\xrightarrow{#2} #3}
\newcommand{\Rstep}[3]{#1\Rightarrow{#2} #3}
\newcommand{\hstepl}[3]{#1\xhookrightarrow{#2} #3}
\newcommand{\hstepc}[3]{#1\hookrightarrow^* #3}
\newcommand{\sstep}[2]{#1 \rightsquigarrow #2}
\newcommand{\sstepl}[3]{#1 \xrightsquigarrow{#2} #3}
\newcommand{\sstepc}[2]{#1 \rightsquigarrow^* #2}

\def\sectionautorefname{Section}
\def\subsectionautorefname{Section}
\def\figureautorefname{Figure}

\makeatletter
\newcommand{\removelatexerror}{\let\@latex@error\@gobble}
\makeatletter
\def\@algocf@pre@ruled{}%
\makeatother
\newcommand{\myalgorithm}{%
\begingroup
\removelatexerror% Nullify \@latex@error
\begin{algorithm*}[H]
%...
\end{algorithm*}
\endgroup}

%% MAIN COLORS
\definecolor{Main-0}{HTML}{DDDDF8}
\definecolor{Main-1}{HTML}{D5D5F6}
\definecolor{Main-2}{HTML}{CACAF4}
\definecolor{Main-3}{HTML}{BDBDF1}
\definecolor{Main-4}{HTML}{ACACED}
\definecolor{Main-5}{HTML}{9797E9}
\definecolor{Main-6}{rgb}{0.99,0.95,0.99}    
\definecolor{Main-7}{rgb}{0.95,0.99,0.95}    
\definecolor{Main-8}{HTML}{3333D3}
\definecolor{Main-9}{HTML}{0000C8}
\definecolor{Main-10}{HTML}{0000A0}
\definecolor{Main-11}{HTML}{000080}

%% COMPLEMENT COLORS
\definecolor{Comp-1}{HTML}{6D0041}
\definecolor{Comp-2}{HTML}{A33180}
\definecolor{Comp-3}{HTML}{C474C2}
\definecolor{Comp-4}{HTML}{DDB3E4}

%% TEXT COLORS
\definecolor{sql}{RGB}{0, 0, 160}
\definecolor{tick}{RGB}{0, 120, 0}
\definecolor{eff}{RGB}{10, 10, 170}
\definecolor{param}{RGB}{10, 10, 230}

%% GREYS
\definecolor{grey10}{rgb}{0.99,0.99,0.93}
\definecolor{grey11}{rgb}{0.93,0.93,0.93}
\definecolor{grey12}{rgb}{0.98,0.98,0.98}
\definecolor{grey9}{rgb}{0.9,0.9,0.9}
\definecolor{grey8}{rgb}{0.8,0.8,0.8}
\definecolor{grey7}{rgb}{0.7,0.7,0.7}
\definecolor{grey6}{rgb}{0.6,0.6,0.6}
\definecolor{grey5}{rgb}{0.5,0.5,0.5}
\definecolor{grey4}{rgb}{0.4,0.4,0.4}
\definecolor{grey3}{rgb}{0.3,0.3,0.3}
\definecolor{grey2}{rgb}{0.2,0.2,0.2}

%% Title information
\title[Distributed Database Programs Repair via Schema Refactoring] {\papername}
                                        %% [Short Title] is optional;
                                        %% when present, will be used in
                                        %% header instead of Full Title.
%\titlenote{Sisyphus: (n) Greek legend: a king in ancient Greece who offended Zeus and whose punishment was to roll a huge boulder to the top of a steep hill; each time the boulder neared the top it rolled back down and Sisyphus was forced to start again}            %% \titlenote is optional;
                                        %% can be repeated if necessary;
                                        %% contents suppressed with 'anonymous'
%\subtitle{Subtitle}                    %% \subtitle is optional
%\subtitlenote{with subtitle note}      %% \subtitlenote is optional;
                                        %% can be repeated if necessary;
                                        %% contents suppressed with 'anonymous'

%% Author information
%% Contents and number of authors suppressed with 'anonymous'.
%% Each author should be introduced by \author, followed by
%% \authornote (optional), \orcid (optional), \affiliation, and
%% \email.
%% An author may have multiple affiliations and/or emails; repeat the
%% appropriate command.
%% Many elements are not rendered, but should be provided for metadata
%% extraction tools.

% 1
%% Author with single affiliation.
\author{Kia Rahmani}
%% \authornote{with author1 note}          %% \authornote is optional;
\orcid{0000-0001-9064-0797}
\affiliation{
  %\position{Research Assistant}
  %\department{Department of Computer Science}              %% \department is recommended
  \institution{Purdue University}            %% \institution is required
 % \streetaddress{305 N. University Street}
  %\city{West Lafayette}
  %\state{Indiana}
 % \postcode{47906}
  %\country{USA}                    %% \country is recommended
}
\email{rahmank@purdue.edu}          %% \email is recommended

% 2
\author{Kartik Nagar}
\affiliation{
  %\position{Assistant Professor}
  %\department{Department of CSE}             %% \department is recommended
  \institution{IIT Madras}           %% \institution is required
 % \streetaddress{305 N. University Street}
 % \city{West Lafayette}
 % \state{Indiana}
 % \postcode{47906}
  %\country{India}                    %% \country is recommended
}
\email{nagark@cse.iitm.ac.in}         %% \email is recommended

% 3
\author{Benjamin Delaware}
\affiliation{
  %\position{Assistant Professor}
  %\department{Department of Computer Science}             %% \department is recommended
  \institution{Purdue University}           %% \institution is required
 % \streetaddress{305 N. University Street}
 % \city{West Lafayette}
 % \state{Indiana}
 % \postcode{47906}
  %\country{USA}                    %% \country is recommended
}
\email{bendy@purdue.edu}         %% \email is recommended

% 4
\author{Suresh Jagannathan}
\affiliation{
  %\position{Samuel D. Conte Professor}
  %\department{Department of Computer Science}             %% \department is recommended
  \institution{Purdue University}           %% \institution is required
 % \streetaddress{305 N. University Street}
 % \city{West Lafayette}
 % \state{Indiana}
 % \postcode{47906}
  % \country{USA}                    %% \country is recommended
}
\email{suresh@cs.purdue.edu}         %% \email is recommended

%% Abstract
%% Note: \begin{abstract}...\end{abstract} environment must come
%% before \maketitle command

%% 2012 ACM Computing Classification System (CSS) concepts
%% Generate at 'http://dl.acm.org/ccs/ccs.cfm'.
%\begin{CCSXML}
%<ccs2012>
%<concept>
%<concept_id>10011007.10011074.10011099.10011102.10011103</concept_id>
%<concept_desc>Software and its engineering~Software testing and debugging</concept_desc>
%<concept_significance>300</concept_significance>
%</concept>
%</ccs2012>
%\end{CCSXML}
%\ccsdesc[300]{Software and its engineering~Software testing and debugging}

%% Keywords
%% comma separated list
%\keywords{}  %% \keywords are mandatory in final camera-ready submission

%% \maketitle
%% Note: \maketitle command must come after title commands, author
%% commands, abstract environment, Computing Classification System
%% environment and commands, and keywords command.

\begin{abstract}

Serializability is a well-understood concurrency control mechanism
that eases reasoning about highly-concurrent database
programs. Unfortunately, enforcing serializability has a high
performance cost, especially on geographically distributed database
clusters.  Consequently, many databases allow programmers
to \emph{choose} when a transaction must be executed under
serializability, with the expectation that transactions would only be
so marked when necessary to avoid serious concurrency bugs.  However,
this is a significant burden to impose on developers, requiring them
to (a) reason about subtle concurrent interactions among potentially
interfering transactions, (b) determine when such interactions would
violate desired invariants, and (c) then identify the minimum number
of transactions whose executions should be serialized to prevent these
violations.  To mitigate this burden, this paper presents a sound
and fully automated schema refactoring procedure that transforms a
program's data layout -- rather than its concurrency control logic --
to eliminate statically identified concurrency bugs, allowing more
transactions to be safely executed under weaker and more performant
database guarantees.  Experimental results over a range of realistic
database benchmarks indicate that our approach is highly effective in
eliminating concurrency bugs, with safe refactored programs showing an
average of 120\% higher throughput and 45\% lower latency compared to
a serialized baseline.

\end{abstract}

\maketitle

%%%%%%%%%%%%%%%%%%%%%%%%%%%%%%%%%%%%%%%%%%%%%%%%%%%%%%
\section{Introduction}

Programs that concurrently access shared data are ubiquitous: bank
accounts, shopping carts, inventories, and social media applications
all rely on a shared database to store information. For performance
and fault tolerence reasons, the underlying databases that manage
state in these applications are often replicated and distributed
across multiple, geographically distant
locations~\cite{psi,WF+13,Cassandra,mongo}.  Writing programs which
interact with such databases is notoriously difficult, because the
programmer has to consider an exponential space of possible
interleavings of database operations in order to ensure that a client
program behaves correctly. One approach to simplifying this task is to
assume that sets of operations, or \emph{transactions}, executed by
the program are \emph{serializable}~\cite{Papadimitriou:1979:Ser},
i.e. that the state of the database is always consistent with some
sequential ordering of those transactions. One way to achieve this is
to rely on the underlying database system to seamlessly enforce this
property.  Unfortunately, such a strategy typically comes at a
considerable performance cost.  This cost is particularly significant
for distributed databases, where the system must rely on expensive
coordination mechanisms between different replicas, in effect limiting
\emph{when} a transaction can see the effects of another in a way that
is consistent with a serializable execution~\cite{BA13}. This cost is
so high that developers default to weaker consistency guarantees,
using careful design and testing to ensure correctness, only relying
on the underlying system to enforce serializable transactions when
serious bugs are discovered~\cite{LBL04,pldi15,GO16,Rahmani:Syncope:2018}.

Uncovering such bugs is a delicate and highly error-prone task even in
centralized environments: in one recent study, \citet{WB17} examined
12 popular eCommerce applications used by over two million well-known
websites and discovered 22 security vulnerabilities and invariant
violations that were directly attributable to non-serializable
transactions.  To help developers identify such bugs, the community
has developed multiple program analyses that report potential
\emph{serializability anomalies}~\cite{GO16, KR18, KJ18,
  BDM+17, BR18}.  Automatically repairing these anomalies, however, has
remained a challenging open problem: in many cases full application
safety is only achievable by relying on the system to enforce strong
consistency of all operations.  Such an approach results in developers
either having to sacrifice performance for the sake of correctness, or
conceding to operate within a potentially restricted ecosystem with
specialized services and APIs~\cite{BB+17} architected with
strong consistency in mind.

% Q: What is the problem we are trying to solve:
% A: Automatically repair serialability
% Q: Why is it important?
% A: Database programs which concurrently access a shared database are
% ubiqutus: banks, shopping carts, etc. Why use a distributed database
% program? Performance reasons, fault tolerance.

% It is hard to reason about the possible
% interleavings of accesses is notoriously difficult, but this burden
% can be simplified with serializability (define). This allows users
% to... The underlying system
% can guarantee this property, but at a peformance cost. This cost is
% particularly significant for distributed databases, where it requires
% expensive synchronization between the different copies, in effect
% limiting \emph{when} various operations can see the effects of another
% in a way that is consistent with a serializable execution.

% This cost is so high that developers typically use weaker consistency
% guarantees, relying on testing and smarts to ensure they are correct,
% using serializable transactions only as a last resort. Identifying
% errors is is a delicate an error-prone endeavor: Balis. To help users
% address this problem, many tools have been developed to identify
% serializability anomalies.

% However, effectively repairing these anomalies is hard. Default is to
% selectively turn on serializablle transactions. Unfortunately... this
% forces developers to chose between correctness and performance when
% elminating concurrency bugs in clients of distrbuted databases.

In this paper, we propose a novel language-centric approach to
resolving concurrency bugs that arise in these distributed
environments.  Our solution is to alter the \emph{schema}, or data
layout, of the data maintained by the database, rather than the
consistency levels of the transactions that access that data.  Our key
insight is that it is possible to modify shared state to remove
opportunities for transactions to witness changes that are
inconsistent with serializable executions.  We, therefore, investigate
automated schema transformations that change \emph{how} client
programs access data to ensure the absence of concurrency bugs, in
contrast to using expensive coordination mechanisms to limit
\emph{when} transactions can concurrently access the database.

For example, to prevent transactions from observing non-atomic updates
to different rows in different tables, we can fuse the offending
fields into a single row in a single table whose updates are
guaranteed to be atomic under any consistency guarantee.  Similarly,
consecutive reads and writes on a row can be refactored into
``functional'' inserts into a new table, which removes the race
condition between concurrently running instances of the program.
%to prevent transactions
%from prematurely witnessing updates to a table, we can refactor the
%schema and client programs to replace updates with ``functional''
%inserts and queries with aggregation over these inserts suffices to
%prevent the violation, all without imposing undue restrictions on how
%the transaction observes other non-problematic state changes.
By changing the schema (and altering how transactions access data
accordingly), without altering a transaction's atomicity and isolation
levels, we can make clients of distributed databases \emph{safer}
without \emph{sacrificing performance}. In our experimental
evaluation, we were able to fix on average 74\% of all identified
serializability anomalies with only a minimal impact (less than 3\% on average)
on performance in an environment that provides only weak eventually
consistent guarantees~\cite{Bu14}.  For the remaining 26\% of anomalies that were
not eliminated by our refactoring approach, simply marking the
offending transactions as serializable yields a provably safe program
that nonetheless improves the throughput (resp. latency) of its fully serialized
counterpart by 120\% (resp. 45\%) on average.

%% For those bugs that weIn fact, it is possible tox
%% eliminate \emph{all} anomalies by using database-enforced serializable
%% transactions on the repaired program with 120\% performance gain over
%% the original program.

This paper makes the following contributions:
\begin{enumerate}
\item We observe that serializability violations in database programs
  can be eliminated by changing the schema of the underlying database
  and the client programs in order to eliminate problematic accesses to
  shared database state.

\item Using this observation, we develop an automated refactoring
  algorithm that iteratively repairs statically identified
  serializability anomalies in distributed database clients.  We
  show this algorithm both preserves the semantics of the original
  program and eliminates many identified serializability anomalies.

% \item We formalize our techniques and present the conditions under
%   which a refactoring is sound and does not introduce new
%   anomalies. We then prove our approach is sound.

\item We develop a tool, (\tool), implementing these ideas, and
  demonstrate its ability to reduce the number of serializability
  anomalies in a corpus of standard benchmarks with minimal
  performance impact over the original program, but with substantially
  stronger safety guarantees.

\end{enumerate}

\noindent The remainder of the paper is structured as follows. The next section
presents an overview of our approach. ~\autoref{sec:database_programs}
defines our programming model and formalizes the notion of concurrency
bugs.  ~\autoref{sec:refactor} provides a formal treatment of our
schema refactoring strategy. Sections \ref{sec:fix} and \ref{sec:impl}
describe our repair algorithm and its implementation, respectively.
\autoref{sec:eval} describes our experimental evaluation.  Related
work and conclusions are given in~\autoref{sec:related}
and~\autoref{sec:conc}.

\section{Overview}
\label{sec:over}

    To illustrate our approach, consider an online course management
    program that uses a database to manage a list of
    course offerings and registered
    students.  \autoref{fig:over_example} presents a simplified code
    snippet implementing such a program. The database consists of
    three tables, maintaining information regarding courses, students,
    and their email addresses. The $\func{STUDENT}$ table maintains a
    reference to a student's email entry in schema $\func{EMAIL}$ (via
    secondary key $\mathtt{st\_em\_id}$) and a reference to a course
    entry in table $\func{COURSE}$ (via secondary key
    $\mathtt{st\_co\_id}$) that the student has registered for.  A
    student's registration status is stored in field
    $\func{st\_reg}$. Each entry in table $\func{COURSE}$ also stores
    information about the availability of a course and the number of
    enrolled students.

    The program includes three sets of database operations or
    \emph{transactions}.  Transaction $\func{getSt}$, given a
    student's id, first retrieves all information for that student
    (\texttt{S1}). It then performs two queries, (\texttt{S2} and
    \texttt{S3}), on the other tables to retrieve their email address
    and course availability.  Transaction $\func{setSt}$ takes a student's
    id and updates their name and email address. It includes a query
    (\texttt{S4}) and an update (\texttt{U1}) to table
    $\func{STUDENT}$ and an update to the $\func{EMAIL}$ table
    (\texttt{U2}). Finally, transaction $\func{regSt}$ registers a
    student in a course. It consists of an update to the student's
    entry (\texttt{U3}), a query to $\func{COURSE}$ to determine the
    number of students enrolled in the course they wish to register for
    (\texttt{S5}), and an update to that course's availability
    (\texttt{U4}) indicating that it is available now that a
    student has registered for it.

    The desired semantics of this program is these transactions should
    be performed \emph{atomically} and in \emph{isolation}.
    Atomicity guarantees that a transaction never observes
    intermediate updates of another
    transaction.  Isolation guarantees that a transaction never
    observes changes to the database
    by other committed transactions
    once it begins executing.  Taken together, these properties ensure
    that all executions of this program are \emph{serializable},
    yielding behavior that corresponds to some sequential interleaving
    of these transaction instances.

%%%%%%%%%%%%%%%%%%%%%%%%%%%
\begin{figure}[t]

\definecolor{pgrey}{rgb}{0.26,0.25,0.28}
\definecolor{javared}{rgb}{0.2,0.2,0.7} % for strings
\definecolor{javagreen}{rgb}{0.2,0.45,0.3} % comments
\definecolor{javapurple}{rgb}{0.5,0,0.35} % keywords
\definecolor{javadocblue}{rgb}{0.25,0.35,0.75} % javadoc
\definecolor{weborange}{RGB}{0,75,0}

\lstset{language=Java,
  basicstyle=\ttfamily\scriptsize,
  breaklines=true,
  backgroundcolor=\color{Main-6},
  frame=single,
  rulecolor=\color{pgrey},
  keywordstyle=\color{black},
  stringstyle=\color{javared},
  commentstyle=\color{javagreen},
  morecomment=[s][\color{javadocblue}]{/**}{*/},
  numbers=left,
  %title=\footnotesize{getStudent(id)},
  xleftmargin=0.2em,
  framexleftmargin=1.25em,
  numberstyle=\tiny\color{pgrey},
  stepnumber=1,
  numbersep=5pt,
  tabsize=1,
  captionpos=b
  showspaces=false,
  showstringspaces=false,
  classoffset=2, % starting new class  
  morekeywords={update,select,set,from,where},
  keywordstyle=\color{sql},
  moredelim=[is][\textcolor{red}]{\%\%}{\%\%},
}
\vspace{-2mm}
\begin{flushleft}
\includegraphics[width=0.48\textwidth]{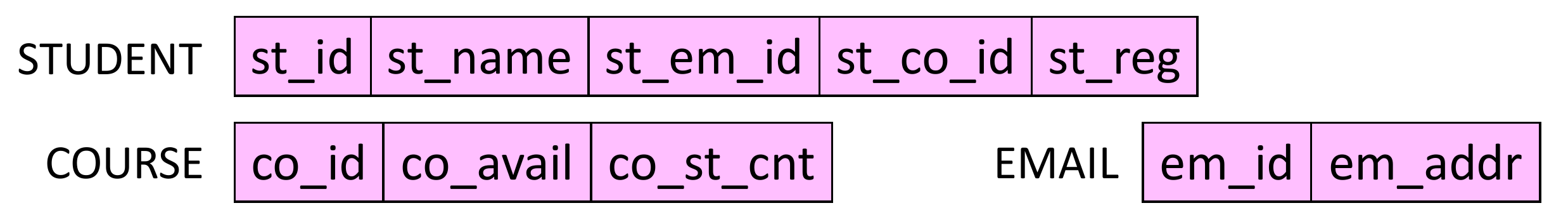}
\end{flushleft}
%%-----------------
% code figure
%
\begin{minipage}[b]{0.485\textwidth}
\begin{lstlisting}[]
getSt(id):
 x:=select * from STUDENT where st_id=id//S1
 y:=select em_addr from EMAIL where em_id=x.st_em_id//S2
 z:=select co_avail from COURSE where co_id=x.st_co_id//S3
\end{lstlisting}
\end{minipage}
\begin{minipage}[b]{0.485\textwidth}
\begin{lstlisting}[]
setSt(id,name,email):
 x := select st_em_id from STUDENT where st_id=id//S4
 update STUDENT set st_name=name where st_id=id//U1
 update EMAIL set em_addr=email where em_id=x.st_em_id//U2         
\end{lstlisting}
\end{minipage}
\begin{minipage}[b]{0.485\textwidth}
\begin{lstlisting}[]
regSt(id,course):
 update STUDENT set st_co_id=course, st_reg=true 
    where st_id=id//U3
 x:=select co_st_cnt from COURSE where co_id=course//S5
 update COURSE set co_st_cnt=x.co_st_cnt+1,
    co_avail=true where co_id=course  //U4
\end{lstlisting}
\end{minipage}

\caption{Database schemas and code snippets from an online course management program}
\label{fig:over_example}
\end{figure}

%%%%%%%%%%%%%%%%%%%%%%%%%%%

    While serializability is highly desirable, it requires using
    costly centralized locks~\cite{ullmanbook} or complex version
    management systems~\cite{BG81}, which severely reduce the system's
    available concurrency, especially in distributed environments
    where database state may be replicated or partitioned to improve
    availability.  In such environments, enforcing serializability
    typically either requires coordination among all replicas whenever
    shared data is accessed or updated, or ensuring replicas always
    witness the same consistent order of operations~\cite{spanner}.
    As a result, in most modern database systems, transactions can be
    executed under weaker isolation levels, e.g. permitting them to
    observe updates of other committed transactions during their
    execution~\cite{mysqliso,postgresiso,mongo,Cassandra}.
    Unfortunately, these weaker guarantees can result in
    \emph{serializability anomalies}, or behaviors that would not
    occur in a serial execution.  To illustrate, \autoref{fig:anml}
    presents three concurrent executions of this program's transaction
    instances that exhibit non-serializable behaviors.  %We note that
    %these executions are permitted in popular widely-used databases
    %such as MySQL~\cite{mysqliso} or Postgres~\cite{postgresiso}.

    The execution on the left shows instances of the $\func{getSt}$
    and $\func{setSet}$ transactions. Following the order in which
    operations execute (denoted by red arrows), observe that (S2)
    witnesses the update to a student's email address, but (S1) does
    not see their updated name. This anomaly is known as a
    \emph{non-repeatable read}. The execution in the center depicts
    the concurrent execution of instances of $\func{getSt}$ and
    $\func{regSt}$.  Here, (S1) witnesses the effect of (U3) observing
    that the student is registered, but (S3) sees that the course is
    unavailable, since it does not witness the effect of (U4). This is
    an instance of a \emph{dirty-read} anomaly.  Lastly, the execution
    on the right shows two instances of $\func{regSt}$ that attempt to
    increment the number of students in a course. This undesirable
    behavior, known as a \emph{lost update}, leaves the database in a
    state inconsistent with any sequential execution of the two
    transaction instances.  All of these anomalies can arise if
    the strong atomicity and isolation guarantees afforded by
    serializability are weakened.
    %%%%%%%%%%%%%%%%%%%%%%%%%%%
%%-----------------
% settings

\begin {figure}[h]

\includegraphics[scale=0.295]{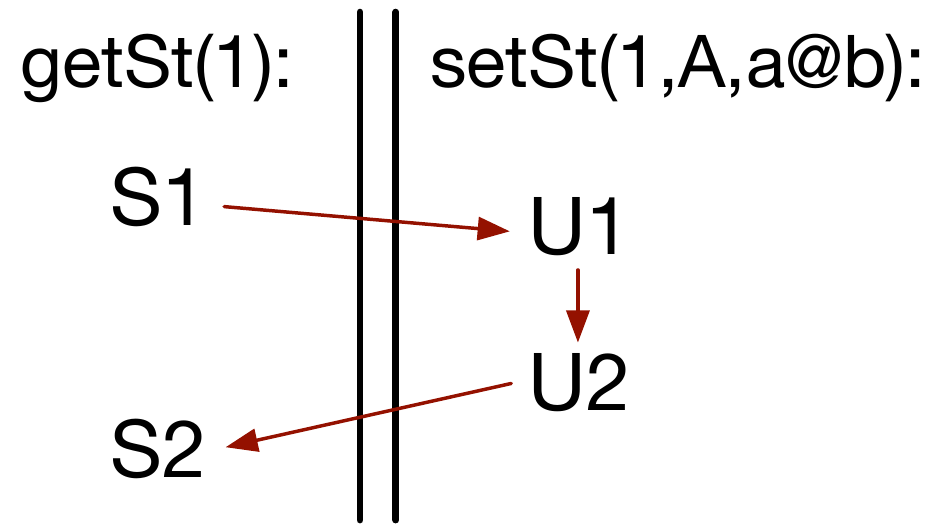}
~
\hfill
~
\includegraphics[scale=0.295]{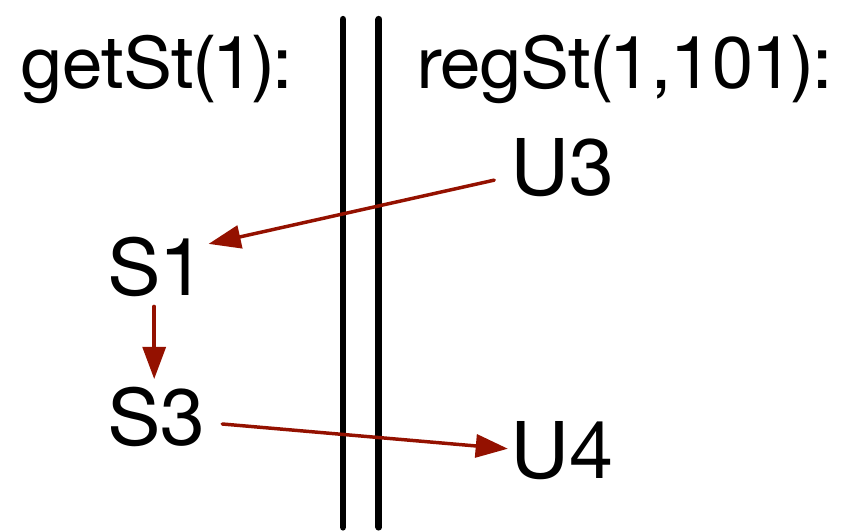}
~
\hfill
~
\includegraphics[scale=0.295]{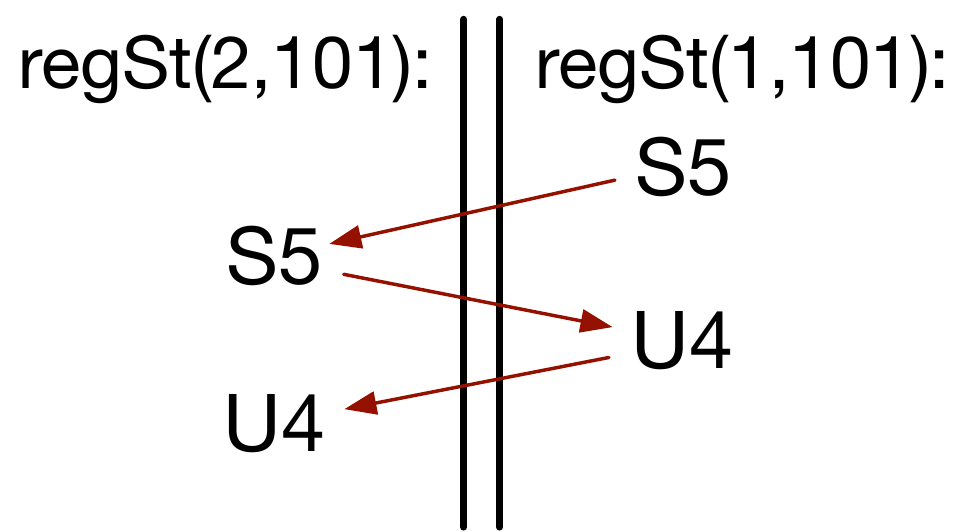}
\caption{Serializability Anomalies}
\label{fig:anml}
\end{figure}

%%%%%%%%%%%%%%%%%%%%%%%%%%%

Several recent proposals attempt to identify such undesirable
behaviors in programs using a variety of static or dynamic program
analysis and monitoring techniques~\cite{BDM+17,BR18,KR18,WB17}. Given
potential serializability violations, the standard solution is to
strengthen the atomicity and isolation requirements on the offending
transactions to eliminate the undesirable behaviour, at the cost of
increased synchronization overhead or reduced
availability~\cite{Bailis:2014:RAMP,pldi15,GO16}.

 %   (e.g., by
 %   demanding that the underlying implementation execute transactions
 %   at a certain isolation level); such strengthening restricts
 %   allowable behaviors at the cost of potentially increased
 %   synchronization overhead or reduced
 %   availability~\cite{Bailis:2014:RAMP,pldi15,GO16}, unpalatable
 %   tradeoffs that require developers to make difficult choices on how
 %   to best balance the need for correctness with the desire for
 %   performance and scalability.

%%%%%%%%%%%%%%%%%%%%%%%%%%%%%%%%%%%%%%%%%%%%%%%%%%%%%%
\subsection{\tool}
%%%%%%%%%%%%%%%%%%%%%%%%%%%%%%%%%%%%%%%%%%%%%%%%%%%%%%

%This paper asks if this tradeoff between correctness and
%performance is always necessary to remediate serializability errors.
%In other words,
Are developers obligated to sacrifice concurrency and availability in
order to recover the pleasant safety properties afforded by
serializability?  Surprisingly, we are able to answer this question in
the negative.  To see why, observe that a database program consists of
two main components - a set of computations that includes
transactions, SQL operations (e.g., selects and updates), locks,
isolation-level annotations, etc.; and a memory abstraction expressed
as a relational schema that defines the layout of tables and the
relationship between them.  The traditional candidates picked for
repairing a serializability anomaly are the transactions from the
computational component: by injecting additional concurrency control
through the use of locks or isolation-strengthening annotations,
developers can control the degree of concurrency permitted,
albeit at the expense of performance and availability.

This paper investigates the under-explored alternative of transforming
the program's schema
% (rather than the program's control-flow)
to reduce the number of potentially conflicting accesses to shared
state.  For example, by \emph{aggregating} information found in
multiple tables into a single row on a single table, we can exploit
built-in \emph{row-level atomicity} properties to eliminate
concurrency bugs that arise because of multiple non-atomic accesses to
different table state.  Row-level atomicity, a feature supported in
most database systems, guarantees that other concurrently executing
transactions never observe partial updates to a particular row.
Alternatively, it is possible to \emph{decompose} database state to
minimize the number of distinct updates to a field, for
example by logging state changes via table \emph{inserts}, rather than
recording such changes via \emph{updates}.  The former effectively acts as a
functional update to a table.  To be sure, these transformations
affect read and write performance to database tables and change the memory
footprint, but they notably impose no additional synchronization
costs.  In scalable settings such as replicated distributed
environments, this is a highly favorable tradeoff since the cost of
global concurrency control or coordination is often problematic in
these settings, an observation that is borne our in our experimental results.

      %% Instead of analyzing programs \emph{as they are} and then assigning
      %% them stronger database semantics, in this paper we propose an unorthodox
      %% approach for elimination of serializability anomalies. We suggest that
      %% programs should first be \emph{refactored} into
      %% semantically equivalent versions, which are however better suited for the
      %% data access semantics of modern database systems. We show that through
      %% this refactoring stage, we can eliminate many
      %% of the undesired concurrency bugs from the program, without paying the cost of
      %% any stronger database level guarantee.

%%%%%%%%%%%%%%%%%%%%%%%%%%%
\begin{figure}[t]

\definecolor{pgrey}{rgb}{0.26,0.25,0.28}
\definecolor{javared}{rgb}{0.2,0.2,0.7} % for strings
\definecolor{javagreen}{rgb}{0.2,0.45,0.3} % comments
\definecolor{javapurple}{rgb}{0.5,0,0.35} % keywords
\definecolor{javadocblue}{rgb}{0.25,0.35,0.75} % javadoc
\definecolor{weborange}{RGB}{0,75,0}

\lstset{language=Java,
  basicstyle=\ttfamily\scriptsize,
  breaklines=true,
  backgroundcolor=\color{Main-7},
  frame=single,
  rulecolor=\color{pgrey},
  keywordstyle=\color{black},
  stringstyle=\color{javared},
  commentstyle=\color{javagreen},
  morecomment=[s][\color{javadocblue}]{/**}{*/},
  numbers=left,
  %title=\footnotesize{getStudent(id)},
  xleftmargin=1.9em,
  framexleftmargin=1.25em,
  numberstyle=\tiny\color{pgrey},
  stepnumber=1,
  numbersep=5pt,
  tabsize=1,
  captionpos=b
  showspaces=false,
  showstringspaces=false,
  classoffset=2, % starting new class  
  morekeywords={into,insert,update,select,set,from,where,values},
  keywordstyle=\color{sql},
  moredelim=[is][\textcolor{red}]{\%\%}{\%\%},
}

%%-----------------
% code figure
%
\begin{flushleft}
 \includegraphics[width=0.47\textwidth]{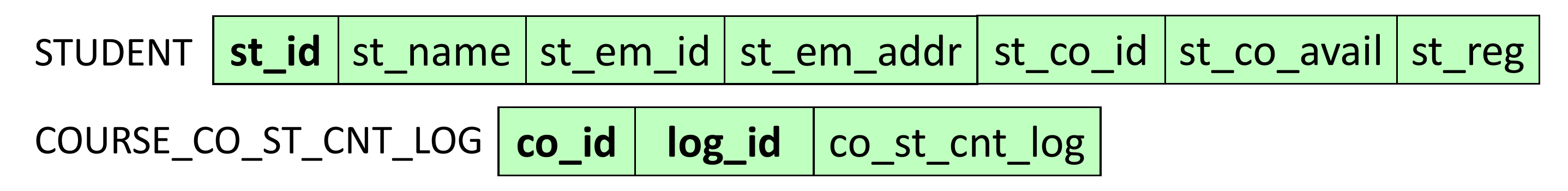}
 \end{flushleft}
\begin{minipage}[b]{0.475\textwidth}
\begin{lstlisting}[]
getSt(id):
 x:=select * from STUDENT where st_id=id //RS1,RS2,RS3
\end{lstlisting}
\end{minipage}
\begin{minipage}[b]{0.475\textwidth}
\begin{lstlisting}[]
setSt(id,name,email):
 update STUDENT set st_name=name,st_em_addr=email 
    where st_id=id //RU1,RU2
\end{lstlisting}
\end{minipage}
\begin{minipage}[b]{0.475\textwidth}
\begin{lstlisting}[]
regSt(id,course):
 update STUDENT set st_co_id=course, st_co_avail=true, 
    st_reg=true where st_id=id //RU3
 insert into COURSE_CO_ST_CNT_LOG values 
    (co_id=course,log_id=uuid(),co_st_cnt_log=1) //RU4
\end{lstlisting}
\end{minipage}

\caption{Refactored transactions and database schemas}
\label{fig:over_example_refactored}
\end{figure}

%%%%%%%%%%%%%%%%%%%%%%%%%%%
%%%%%%%%%%%%%%%%%%%%%%%%%%%
\begin {figure}[t]
\vspace{-2mm}
    \includegraphics[width=0.49\textwidth]{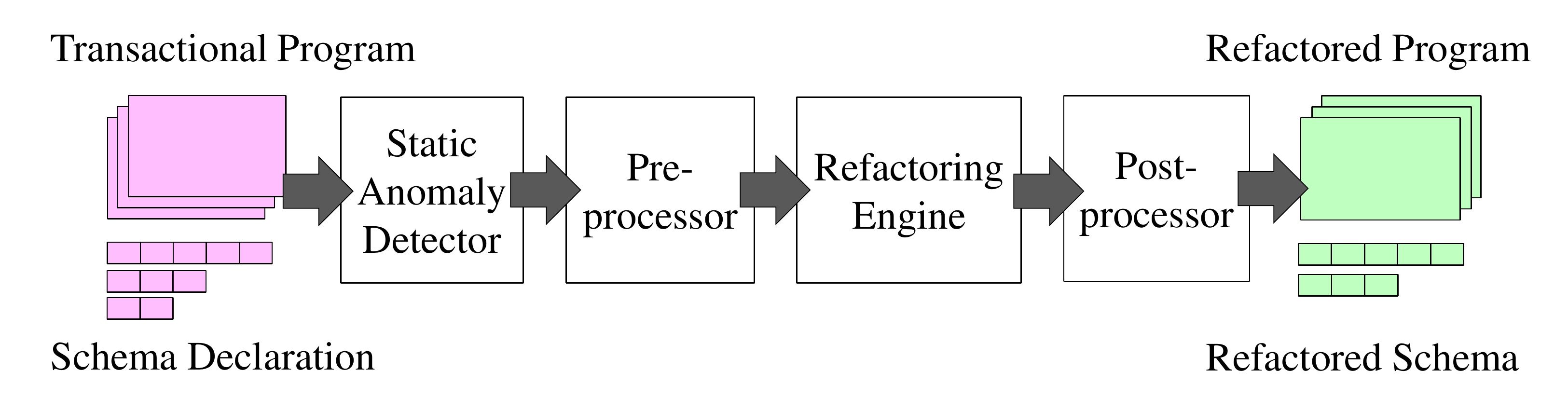}
  \caption{Schematic overview of \tool}
\label{fig:overview}
\end{figure}

%%%%%%%%%%%%%%%%%%%%%%%%%%%
To illustrate the intuition behind our approach, consider the database
program depicted in \autoref{fig:over_example_refactored}. This
program behaves like our previous example, despite featuring very
different database schemas and transactions. The first of the two
tables maintained by the program, $\func{STUDENT}$, removes the
references to other tables from the original $\func{STUDENT}$ table,
instead maintaining independent fields for the student's email address
and their course availability.  These changes make the original course
and email tables obsolete, so they have been removed.  In addition,
the number of students in each course is now stored in a dedicated
table $\func{COURSE\_CO\_ST\_CNT\_LOG}$.  Each time the enrollment of
a course changes, a new record is inserted into this table to record
the change. Subsequent queries can retrieve all corresponding records
in the table and aggregate them in the program itself to determine the
number of students in a course.

The transactions in the refactored program are also modified to
reflect the changes in the data model.  The transaction $\func{getSt}$
now simply selects a single record from the student table to retrieve
all the requested information for a student.  The transaction
$\func{setSt}$ similarly updates a single record. Note that both these
operations are executed atomically, thus eliminating the problematic
data accesses in the original program.  Similarly, $\func{regSt}$
updates the student's $\func{st\_co\_id}$ field and inserts a new
record into the schema $\func{COURSE\_CO\_ST\_CNT\_LOG}$. Using the
function $\func{uuid()}$ ensures that a new record is inserted every
time the transaction is called.  These updates remove potential
serializability anomalies by replacing the disjoint updates to fields
in different tables from the original with a simple atomic row
insertion.  Notably, the refactored program can be shown to be a
meaningful \emph{refinement} of the original program, despite eliminating problematic serializability errors found in it. Program refinement ensures that 
the refactored program maintains \emph{all} information maintained by the original program without exhibiting \emph{any} new behaviour.

The program shown in \autoref{fig:over_example_refactored} is the result of several database schema refactorings
\cite{Ambler:2006:Refactoring,Faroult:2008:Refactoring,
  Fowler:2019:Refactoring}, incremental changes to a database
program's data model along with corresponding semantic-preserving
modifications to its logic. Manually searching for such a refactored
program is unlikely to be successful. On one hand, the set of
potential solutions is large \cite{Ambler:2006:Refactoring}, rendering
any manual exploration infeasible. On the other hand, the process of
rewriting an application for a (even incrementally) refactored schema
is extremely tedious and error-prone~\cite{Wang:2019:Synthesizing}.

We have implemented a tool named \tool{} that, given a database
program, explores the space of its possible schema and program
refactorings, and returns a new version with possibly many fewer
concurrency bugs. The refactored program described above, for example,
is automatically generated by \tool\ from the original shown in
\autoref{fig:over_example}. \autoref{fig:overview} presents the \tool\
pipeline.  A static analysis engine is used to identify potential
serializability anomalies in a given program.  The program is then
preprocessed to extract the components which are involved in at least
one anomaly, in order to put it into a form amenable for our analysis.
Next, a refactoring engine applies a variety of transformations in an
attempt to eliminate the bugs identified by our static analysis.
Finally, the program is analyzed to eliminate dead code, and the
refactored version is then reintegrated into the program from which it
was extracted.

   % for example, allow
%    transactions to operate under weaker isolation levels~\cite{AD00},
%    permitting them to observe updates of other committed transactions
%    during their execution.
    %These systems do so to improve performance
    %and scalability by minimizing the use of centralized locks that
    %can reduce concurrency; systems that use
    %optimistic concurrency control methods that do not rely on locking
    %must deal with other equally-challenging complexities such as
    %timestamp and version management~\cite{BG81}.
   % In distributed
%    environments where database state may be replicated to ensure
%    availability in the face of network partitions and long latencies,
%    guaranteeing that all updates performed by a transaction on one
%    node are witnessed by transactions executing on other nodes is
%    complicated, and often entails significant
%    overhead~\cite{BA13,Bu14}.

    %implementations in these settings may
    %allow transactions to observe partially committed state to reduce
    %these complexities.

%%%%%%%%%%%%%%%%%%%%%%%%%%%%%%
\section{Database Programs}
\label{sec:database_programs}
%%%%%%%%%%%%%%%%%%%%%%%%%%%%%%

%
\begin{figure}[t]
  \centering
  \begin{minipage}{0.485\textwidth}
    \begin{mdframed}[backgroundcolor=grey12,linecolor=grey11]
%%%%%%%%%%%%%%%%%%%%%%%%%%%%%%%%%%%%%%%%
\hspace{-3mm}
\begin{minipage}[t]{0.44\textwidth}
    \begin{mathpar}
  \footnotesize
  \begin{array}{lcl}
    f & \in & \mathtt{FldName} \\
    \rho & \in &  \mathtt{SchmName} \\
    t & \in & \mathtt{TxnName} \\
    a & \in &  \mathtt{Arg} \\
    x &\in &\mathtt{Var}  \\
    n &\in &\mathtt{Val} \\
    \mathtt{agg} &\in& \{\mathtt{sum,min,max}\} 
    \\[2mm] 
\stx{e} 
& \hspace{-9mm} \coloneqq & \hspace{-5mm}
n
\ALT
\stx{a}
\ALT \stx{e\oplus e}
\ALT \stx{e\odot e}
\ALT e \circ e
\ALT 
\mathtt{iter}
\ALT \stx{\mathtt{agg}(x.f)} \ALT
\stx{\func{at}^e(x.f)}
\\
\phi_{} 
& \hspace{-9mm} \coloneqq & \hspace{-5mm}
\mathtt{this}.f \odot e \ALT
%\IN\,(x) \ALT
\phi_{} \circ \phi_{}
\\
\stx{q} 
& \hspace{-9mm} \coloneqq & \hspace{-5mm}
\stx{x:=}\; \sql{select} {\set{f}}\; \sql{from} {R}\ \sql{where} \stx{\phi}\ALT
\sql{update} {R}\;\sql{set} \set{{f} = e} \; \sql{where} \phi
%\\ & &
%\sql{insert} \sql{into} {R}\ \sql{values}\, \set{f=e} \ALT
%\sql{delete \;from} {R}\; \sql{where} \phi
 \\[0.2mm]
\stx{c} 
& \hspace{-9mm} \coloneqq & \hspace{-5mm}
\stx{q}
\ALT
\mathtt{iterate(e)\{c\}}
\ALT
\mathtt{if(e)}\{c\}
\ALT
\mathtt{skip}
\ALT
c;c
%\\
%T & \coloneqq & t(\seq{a})\{c\}
%\\
%R & \coloneqq & r:\seq{f}
\end{array}
  \end{mathpar}
\end{minipage}
\hfill
%%%%%%%%%%%%%%%%%%%%%%%%%%%%%%%%%%%%%%%%
\begin{minipage}[t]{0.43\textwidth}
  \begin{mathpar}
    \footnotesize
  \begin{array}{rcl}
    \oplus & \in & \{+, -, \times, /\} \\
    \odot & \in & \{<, \leq, =, >, \geq\} \\
    \circ & \in & \{\wedge,\vee\}  \\
T & \coloneqq & t(\set{a})\{c; \func{return}\; e \}
\\
R & \coloneqq & \rho:\set{f}
\\
F & \coloneqq & \langle \overline{f : n} \rangle
\\
P & \coloneqq & (\set{R},\set{T})
  \end{array}
  \end{mathpar}
\end{minipage}
\vspace{1mm}
\end{mdframed}
\end{minipage}
  \caption{Syntax of database programs}
  \label{fig:syntax}
\end{figure}

The syntax of our database programs is given in
\autoref{fig:syntax}. A program $P$ is defined in terms of a set of
database schemas ($\set{R}$), and a set of transactions ($\set{T}$).
A database schema consists of a schema name ($\rho$) and a set of
field names ($\set{f}$). A database \emph{record} ($F$) for schema $R$
is comprised of a set of value bindings to $R$'s fields. A database
table is a set of records.  Associated with each schema is a non-empty
subset of its fields that act as a \emph{primary key}.  Each
assignment to these fields identifies a unique record in the table. In
the following, we write $R_\func{id}$ to denote the set of all
possible primary key values for the schema $R$. In our model, a table
includes a record corresponding to \emph{every} primary key.  Every
schema includes a special Boolean field, $alive\in\func{FldName}$,
whose value determines if a record is actually present in the
table. This field allows us to model $\sql{DELETE}$and
$\sql{INSERT}$commands without explicitly including them in our
program syntax.

Transactions are uniquely named, and are defined by a sequence of
parameters, a body, and a return expression.  The body of a
transaction ($c$) is a sequence of \emph{database commands} ($q$) and
\emph{control commands}.  A database command either modifies or
retrieves a subset of records in a database table.  The records
retrieved by a database query are stored locally and can be used in
subsequent commands. Control commands consist of conditional guards,
loops, and return statements. Both database commands
($\sql{select}$and $\sql{update}\!$) require an explicit \emph{where
  clause} ($\phi$) to filter the records they retrieve or update.
$\phi_\func{fld}$ denotes the set of fields appearing in a clause $\phi$.
%In addition to arithmetic comparisions on fields and
%standard boolean operators, where clauses can use the $\IN(x)$ clause
%to reference records that were stored in the variable $x$ by a
%previous query. \BD{Should include an concrete example of the $\IN(x)$
%  clause in action.}

Expressions ($e$) include constants, transaction arguments, arithmetic and
Boolean operations and comparisons, iteration counters and field
accessors. The values of field $f$ of records stored in a variable $x$
can be aggregated using $\func{agg}(x.f)$, or accessed individually,
using $\func{at}^e(x.f)$.

%%%%%%%%%%%%%%%%%%%%%%%%%%%%%%
\subsection{Data Store Semantics}
\label{subsec:system_model}
%%%%%%%%%%%%%%%%%%%%%%%%%%%%%%

\begin{comment}
The precise semantics of programs written in this core language
depends upon the consistency and isolation guarantees on transactional
operations provided by the underlying database.  These guarantees
influence the visibility of updates performed by one transaction to
another.  We express these visibility properties axiomatically, over
sequences of events called \emph{histories} and visibility relations
defined over histories, which we describe below.
\end{comment}

Database states $\Sigma$ are modeled as a triple $(\store,\vis,\cnt)$,
where $\store$ is a set of \emph{database events} ($\eta$) that
captures the history of all reads and writes performed by a program
operating over the database, and $\vis$ is a partial order on those
events.  The execution counter, $\cnt$, is an integer that represents
a global timestamp that is incremented every time a database command
is executed; it is used to resolve conflicts among concurrent
operations performed on the same elements, which can be used to define
a \emph{linearization} or \emph{arbitration order} on
updates~\cite{burckhardt14}. Given a database state ($\Sigma$), and a
primary key $r \in R_\texttt{id}$, it is possible to reconstruct
each field $f$ of a record $r$, which we denote as
$\Sigma(r.f)$.

Retrieving a record from a table $R$ generates a set of \emph{read
  events}, $\rdd{(\tau, r, f)}$, which witness that the field $f$ of
the record with the primary key $r\in R_\func{id}$ was
accessed when the value of the execution counter was $\tau$.
Similarly, a \emph{write event}, $\wrr{(\tau,r,f, n)}$,
records that the field $f$ of record $r$ was assigned the value $n$ at
timestamp $\tau$.  The timestamp (resp. record) associated with an event $\eta$ is
denoted by $\eta_\tau$ (resp. $\eta_r$).

% Central to our development is the notion of \emph{visibility} that
% captures when a transaction's effects become available to other
% transactions.

Our semantics enforces record-level atomicity guarantees: transactions
never witness intermediate (non-committed) updates to a record in a
table by another concurrently executing one.  Thus, all updates to
fields in a record from a database command happen atomically.  This
form of atomicity is offered by most commercial database systems, and
is easily realized through the judicious use of locks. Enforcing
stronger multi-record atomicity guarantees is more challenging,
especially in distributed environments with replicated database
state~\cite{kemmevldb,twopl,redblueosdi,bernsigmod13,bailisvldb}. In
this paper, we consider behaviors induced when the database guarantees
only a very weak form of consistency and isolation that allows
transactions to 
%(a) witness \emph{partial} updates of committed
%actions and (b) 
see an \emph{arbitrary subset} of committed updates by
other transactions.  Thus, a transaction which accesses multiple
records in a table is not obligated to witness \emph{all} updates
performed by another transaction on these records.
%and is allowed to
%witness changes to the records it does observe by other transactions
%as it executes.
%The behaviors admitted by our programming model is
%akin to those possible under eventually consistent data stores %that
%operatee in geo-replicated distributed environments, such as %DynamoDB
%and Cassandra~\cite{Cassandra,DynamoDB,HBase}.

% To capture these behaviors, our formalization introduces
% a \emph{visibility} relation between events, $\vis$, that
% relates two events if one witnesses the other at the time of its creation,
% and establishes a \emph{local view} of the database  whenever a
% database operation ($\sql{select}$or $\sql{update}$\!) is executed:
% =======

To capture these behaviors, we use a \emph{visibility}
relation between events, $\vis$, that relates two events when one
witnesses the other in its \emph{local view} of the database at the
time of its creation. A local view is captured by
the relation $\lhd\subseteq\Sigma\times\Sigma$ between database states,
which is constrained as follows:
 \ruleLabel{ConstructView}
 \vspace{-1.25mm}
\begin{mathpar}
\RULE{
\store'\subseteq\store
\qquad
\forall_{\eta'\in\store'} \forall_{\eta\in\store}\,
(\eta_r\,=\,\eta'_r\,\wedge\,\eta_{\tau}\,=\,\eta'_{\tau}) \Rightarrow (\eta\in\store')
\\
\vis'=\vis|_{\store'}
\qquad
\cnt'=\cnt
}{(\store',\vis',\cnt') \lhd (\store,\vis,\cnt)}
\end{mathpar}
The above definition ensures that an event can only be present in a
local view, $\store'$, if all other events {\bf on the same record with the same counter value} are also present in $\store'$ (ensuring record-level atomicity).  Additionally, the visibility
relation permitted on the local view, $\vis'$, must be consistent with
the global visibility relation, $\vis$.

\def\horizontalSpace{2.6mm}
\begin{figure*}[t]
  \begin{minipage}{1.06\textwidth}
      \centering
      \hspace{-9mm}
  \begin{minipage}{\textwidth}
\hfill
\begin{minipage}[t]{.3\textwidth}
  \ruleLabel{txn-invoke}
$$
\RULE{
  n\in\func{Val}
  \qquad
  t(\set{a})\{c; \mathtt{return}\; e\} \in P_\func{txn}
}
%% ------------------------------------------------------------------------------------------------
{
\stx{
  \Rstep{\Sigma,\Gamma}
    {}% ->
    {\Sigma,\Gamma \!\cup\! \{t: c[\overline{a/n}]; \func{skip}\; ; e[\overline{a/n}]; \varnothing \}}
  }
}
$$
\end{minipage}
\hfill
\begin{minipage}[t]{.28\textwidth}
  \ruleLabel{txn-step}
$$
\RULE{
\step{\Sigma, \Delta, c}{}{\Sigma', \Delta', c'}

}
%% ------------------------------------------------------------------------------------------------
{
\stx{
  \Rstep{\Sigma,\{t: c; e; \Delta\}\!\cup\!\Gamma}
    {}% ->
    {\Sigma',\{t: c'; e; \Delta'\}\!\cup\!\Gamma}
  }
}
$$

%\caption{A simplified rule capturing execution of an $\sql{update}$ operation}
\end{minipage}
\hfill
\begin{minipage}[t]{.32\textwidth}
  \ruleLabel{txn-ret}
$$
\RULE{
  e \not\in \mathtt{Val}
  \qquad
  \Delta,e\Downarrow m
}
%% ------------------------------------------------------------------------------------------------
{
\stx{
  \Rstep{\Sigma,\{t: \func{skip}; e; \Delta\}\!\cup\!\Gamma}
    {}% ->
    {\Sigma,\{t: \func{skip}; m; \Delta\}\!\cup\!\Gamma}
  }
}
$$

%\caption{A simplified rule capturing execution of an $\sql{update}$ operation}
\end{minipage}
\hfill
\\[\horizontalSpace]
\hfill
\begin{minipage}[t]{.17\textwidth}
\ruleLabel{seq}
  $$
\RULE{
  \stx{
    \step{\Sigma, \Delta, c}
    {}% ->
    {\Sigma', \Delta', c''}
  }

}
{
\stx{
  \step{\Sigma, \Delta, c; c'}
    {}% ->
    {\Sigma', \Delta', c''; c'}
  }
}
$$
\end{minipage}
\hfill
\begin{minipage}[t]{.15\textwidth}
\ruleLabel{skip}
$$
\RULE{
  \\
}
%% ------------------------------------------------------------------------------------------------
{
\stx{
  \step{\Sigma, \Delta, \mathtt{skip};c}
    {}% ->
    {\Sigma, \Delta, c}
  }
}
$$
\end{minipage}
\hfill
%
%
%
%
%
%
%%%%%%%%%%%%%%%%%%%%%%%%%%%%%%%%%%%%%%%%%%%%%%%%%%%%%%%%%%%%%%%%%%%%%%%%%%%%%%%%%%%%%%%%%%%%%%%%%%%%
\begin{minipage}[t]{.17\textwidth}
\ruleLabel{cond-t}
$$
\RULE{
\Delta, e \Downarrow \mathtt{true}
%\\
%\Lambda'=\Lambda[i\xmapsto{\cup}(e,\mathtt{true})]
}
{
\stx{
  \step{\Sigma, \Delta, \mathtt{if}(e)\{c\}}
    {}% ->
    {\Sigma, \Delta, c}
  }
}
$$
\end{minipage}
\hfill
\begin{minipage}[t]{.19\textwidth}
\ruleLabel{cond-f}
$$
\RULE{
\Delta, e \Downarrow \mathtt{false}
%\\
%\Lambda'=\Lambda[i\xmapsto{\cup}(e,\mathtt{false})]
}
{
\stx{
  \step{\Sigma, \Delta, \mathtt{if}(e)\{c\}}
    {}% ->
    {\Sigma, \Delta, \mathtt{skip}}
  }
}
$$
\end{minipage}
\hfill
\begin{minipage}[t]{0.28\textwidth}
\ruleLabel{iter}
  $$
\RULE{
  \Delta, e\Downarrow n
%  \\
%\Lambda'=\Lambda\cup\{(e,n)\}
}
{
\stx{
  \step{\Sigma, \Delta, \mathtt{iterate}(e)\{c\} }
    {}% ->
    {\Sigma, \Delta, \mathtt{concat}(n,c)}
  }
}
$$
\end{minipage}
\hfill
\end{minipage}
\\[\horizontalSpace]
\begin{minipage}{1.02\textwidth}
\centering
\hspace{-13mm}
      \hfill
\begin{minipage}[b]{0.49\textwidth}
\ruleLabel{select}
\vspace{-1mm}
    $$
\RULE{
%
%  \Sigma \equiv (\store,\vis,\func{cnt})
%  \qquad
  \Sigma' \lhd \Sigma 
  \qquad
  \varepsilon_1 = \{\rdd{(\cnt,r,f)}\ALT r\in R_\func{id} \;\wedge\; f\in \phi_\func{fld} \}
  \\
   \qquad
% \varepsilon = \varepsilon_1\cup\varepsilon_2
  \func{results} = \{(r, \langle \overline{f : n}\rangle) \ALT r\in R_\func{id}
  \; \wedge\; 
  \Sigma'(r) = \langle \overline{f' : n'} \rangle
  \;\wedge\; \\ \qquad \qquad
  \Delta, \phi(\langle \overline{f' : n'} \rangle) \Downarrow \mathtt{true}
  \;\wedge\;
  \overline{f : n} \subseteq \overline{f' : n'}
  \}\\
  \varepsilon_2 = \{\rdd{(\cnt,r,f'_i)}\ALT
  (r, \langle \overline{f' : n'}\rangle) \in \func{results}
  \;\wedge\;
  f'_i\in \overline{f}
  \}
\\
\store' = \Sigma.\store\cup \varepsilon_1 \cup \varepsilon_2
\qquad
\vis'= \Sigma.\vis\cup\{(\eta,\eta')|\, \eta'\in\varepsilon_1 \cup
\varepsilon_2\; \wedge\; \eta \in
\Sigma'.\store)\}
  }
{
\stx{
  \step{\Sigma, \Delta,
    x:=\sql{select} \overline{f}\; \sql{from} R \; \sql{where} \phi
}
    {}% ->
    {(\store',\vis',\cnt+1), \Delta[x\mapsto \mathtt{results}], \texttt{skip}}
  }
}
$$
\end{minipage}
\hfill
\begin{minipage}[b]{0.42\textwidth}
\ruleLabel{update}
\vspace{-1mm}
  $$
\RULE{
% \Sigma \equiv (\store,\vis,\cnt)
% \qquad
% \sigma\sqsubseteq \Sigma
  \Sigma' \lhd \Sigma \\
% \qquad
\stx{\varepsilon} =
  \{\wrr{(\cnt,r,f_i,m)} \,|\,
  r\in R_\func{id}
  \;\wedge\;
  \Sigma'(r) = \langle \overline{f : n} \rangle
  \;\wedge\;
  \qquad\qquad\qquad
  \\
  \qquad
  \Delta, \phi(\langle \overline{f : n} \rangle) \Downarrow \mathtt{true}
  \;\wedge\;
  (f_i=e_i) \in \overline{f=e}
  \;\wedge\;
  \Delta,e_i \Downarrow m
  \}
\\
\store' = \Sigma.\store\cup\varepsilon
\qquad
\vis'= \Sigma.\vis\cup\{(\eta,\eta')|\, \eta'\in\varepsilon \wedge \eta
\in \Sigma'.\store \}
}
{
\stx{
  \step{\Sigma, \Delta,
    \sql{update} R \;\sql{set} \overline{f=e}\; \sql{where} \phi
}
    {}% ->
    {(\store',\vis',\cnt+1), \Delta, \texttt{skip}}
  }
}
$$
\end{minipage}
\hfill
\hspace{1mm}
\end{minipage}
\end{minipage}
\caption{Operational semantics of weakly-isolated database programs.}
\label{fig:operational_semantics}
\end{figure*}

\autoref{fig:operational_semantics} presents the operational semantics
of our language, which is defined by a small-step reduction relation,
$\Rightarrow\ \subseteq \Sigma\times\Gamma\times\Sigma\times\Gamma$,
between tuples of data-store states ($\Sigma$) and a set of currently
executing transaction instances
($\Gamma \subseteq c \times e \times (\mathtt{Var} \rightharpoonup
\overline{R_\texttt{id} \times F})$). A transaction instance is a
tuple consisting of the unexecuted portion of the transaction body
(i.e., its continuation), the transaction's return expression, and a
local store holding the results of previously processed query
commands. The rules are parameterized over a program $P$ containing
a set of transactions, $P_\func{txn}$.  At every step, a new transaction
instance can be added to the set of currently running transactions via
(\textsc{txn-invoke}). Alternatively, a currently running transaction
instance can be processed via (\textsc{txn-step}). Finally, if the
body of a transaction has been completely processed, its
$\func{return}$ expression is evaluated via (\textsc{txn-ret}); the
resulting instance simply records the binding between the transaction
instance ($t$) and its return value ($m$).

The semantics of commands are defined using a local reduction relation
($\rightarrow$) on database states, local states, and commands. The
semantics for control commands are straightforward outside of the
(\textsc{iter}) rule, which uses an auxiliary function
$\func{concat}(n,c)$ to sequence $n$ copies of the command $c$.
Expression evaluation is defined using the big-step relation
$\Downarrow \,\subseteq (\mathtt{Var} \rightharpoonup
\overline{R_\texttt{id} \times F}) \times e\times \func{Val}$ which,
given a store holding the results of previous query commands,
determines the final value of the expression. The full definition of
$\Downarrow$ can be found in the supplementary material.

The semantics of database commands, given by the (\textsc{select}) and
(\textsc{update}) rules, expose the interplay between global and local
views of the database.  Both rules construct a local view of the
database $(\Sigma' \lhd \Sigma)$ that is used to select or update the
contents of records.  Neither rule imposes any restrictions on
$\Sigma'$ other than the consistency constraints defined by
(\textsc{ConstructView}). The key component of each rule is how it
defines the set of new events ($\varepsilon$) that are added to the
database. In the \textsc{select} rule, $\varepsilon_1$ captures the
retrievals that occur on database-wide scans to identify records
satisfying the $\sql{select}$command's where clause. In an abuse of
notation, we write
$\Delta,\phi(\langle \overline{f : n} \rangle) \Downarrow n$ as
shorthand for
$\Delta, \phi[\overline{\mathtt{this.f}/n}] \Downarrow n$.
$\varepsilon_2$ constructs the appropriate read events of these
retrieved records.  The (\textsc{update}) rule similarly defines
$\varepsilon$, the set of write events on the appropriate fields of
the records that satisfy the where clause of the $\sql{update}$command
under an arbitrary (but consistent) local view ($\Sigma'$) of the
global store ($\Sigma$).  Both rules increment the local timestamp,
and establish new global visibility constraints reflecting the
dependencies introduced by the database command, i.e., all the
generated read and write events depending upon the events in the local
view.  All updates are performed atomically, as the set of
corresponding write events all have the same timestamp value, however,
other transactions are not obligated to see all the effects of an
update since their local view may only capture a subset of these
events.

%%%%%%%%%%%%%%%%%%%%%%%%%%%%%%
\subsection{Anomalous Data Access Pairs}
%%%%%%%%%%%%%%%%%%%%%%%%%%%%%%

We reason about concurrency bugs on transactions induced by our data
store programming model using \emph{execution histories}; finite
traces of the form:
$\Sigma_1,\Gamma_1 \Rightarrow \Sigma_2,\Gamma_2 \Rightarrow \dots
\Rightarrow \Sigma_k,\Gamma_k$ that capture interleaved execution of
concurrently executing transactions.  A \emph{complete} history is one
in which all transactions have finished, i.e., the final $\Gamma$ in
the trace is of the form:
$\{ t_1 : \texttt{skip}; m_1, \Delta_1\}\cup \ldots \cup \{t_k :
\texttt{skip}; m_k, \Delta_k \}$.  As a shorthand, we refer to the
 final state in a history $h$ as $h_\func{fin}$. % In the following, we
% write $\hist(P)$ to represent the set of all complete execution
% histories for a program $P$.
A \emph{serial} execution history satisfies two important properties:

\begin{enumerate}
\item {\bf Strong Atomicity:}
 $(\forall{\eta,\eta'}.\; \eta_{\mathtt{cnt}} < \eta'_{\mathtt{cnt}} \Rightarrow
           \vis(\eta,\eta')) \wedge \forall{\eta,\eta',\eta''}.\; \st(\eta,\eta') \wedge
          (\vis(\eta,\eta'')\Rightarrow \vis(\eta',\eta''))$
\item {\bf Strong Isolation:}:
 $\forall{\eta,\eta',\eta''}.\;
\st(\eta,\eta')\wedge \vis(\eta'',\eta')\Rightarrow \vis(\eta'',\eta)$.
\end{enumerate}

The strong atomicity property prevents non-atomic interleavings of
concurrently executing transactions.  The first constraint linearizes
events, relating timestamp ordering of events to visibility.  The
second generalizes this notion to multiple events, obligating
\emph{all} effects from the same transaction (identified by the $\st$
relation) to be visible to another if any of them are; in particular,
any recorded event of a transaction $T_1$ that precedes an event in
$T_2$ requires all of $T_1's$ events to precede all of $T_2$'s.

The strong isolation property prevents a transaction from observing
the commits of other transactions once it begins execution.  It does so
through visibility constraints on a transaction $T$ that
require any event $\eta''$ generated by any other transaction that is
visible to an event $\eta'$ generated by $T$ to be visible to any
event $\eta$ that precedes it in $T$'s execution.

\begin{comment}
While serial executions represent a comprehensible and thus desirable
execution model for developers, they impose significant cost on
implementations, by severely restricting the set of admissible
behaviors allowed~\cite{BA13}.  In recent years, many static and
dynamic program analysis, monitoring, and verification methods have
been introduced, which aim to find undesirable non-serial execution
histories of a given program allowed by implementations i.e. to
detect \emph{serializability anomalies}~\cite{KR18,BDM+17,BR18, WB17}.
\end{comment}

A \emph{serializability anomaly} is an execution history with a final state
that violates at least one of the above constraints.  These sorts of
anomalies capture when the events of a transaction instance
are either not made visible to other events in totality (in the
case of a violation of strong atomicity) or which themselves
witness different events (in the case of a violation of
strong isolation). Both kinds of anomalies can be eliminated by
identifying commands which generate sets of problematic events and
altering them to ensure \emph{atomic execution}.  Two events are
executed atomically if they witness the same set of events and they
are both made visible to other events simultaneously, i.e.
$
\func{atomic}(\eta,\eta')\equiv\forall \eta''.\;
(\vis(\eta,\eta'')\Rightarrow
\vis(\eta',\eta''))\wedge(\vis(\eta'',\eta)\Rightarrow \vis(\eta'',\eta'))
$.

%In the remainder of this paper, our goal is to ensure these atomicity
%requirements are realized without requiring potentially expensive
%run-time checks, lock injection, or other concurrency control
%mechanisms.  Instead, we propose to refactor a program's database
%schema to prevent conflicting events that lead to an anomaly.

Given a program $P$, we define a \emph{database access pair} ($\dap$)
as a quadruple $(c_1,\overline{f}_1,c_2,\overline{f}_2)$ where $c_1$
and $c_2$ are database commands from a transaction in $P$, and
$\overline{f}_1$ (resp. $\overline{f}_2$) is a subset of the fields
that are accessed by $c_1$ (resp. $c_2$). An access pair is anomalous
if there is at least one execution in the execution history of P that
results in an event generated by $c_1$ accessing a field
$f_1\in \overline{f}_1$ which induces a serializability anomaly with
another event generated by $c_2$ accessing field
$f_2\in \overline{f}_2$. An example of an anomalous access pair for
the program from in \autoref{sec:over}, is
$(S1,\{\func{st\_name}\},S2,\{\func{em\_addr}\})$ and
$(U1,\{\func{st\_name}\},$ $U2, \{\func{em\_addr}\})$; this pair
contributes to that program's non-repeatable read anomaly from
\autoref{fig:anml}.

We now turn to the development of an automated static repair strategy
that given a program $P$ and a set of anomalous access pairs produces
a semantically equivalent program $P'$ with fewer anomalous access
pairs. In particular, we repair programs by \emph{refactoring} their
database schemas in order to benefit from record-level atomicity
guarantees offered by most databases, without introducing new
observable behaviors. We elide the details of how anomalous access
pairs are discovered, but note that existing tools~\cite{BR18,
  Rahmani:2019:Clotho} can be adapted for this
purpose. \autoref{sec:impl} provides more details about how this works
in \tool{}.

% \begin{definition}
%   Given a program $P$, an anomaly oracle, $\mathcal{O}:P \rightarrow
%   \set{\dap}$, returns a set of anomalous access pairs from that program.
% \end{definition}

% The focus of the following section is the development of an automated
% static repair strategy that given a program $P$ and an anomaly oracle
% $\mathcal{O}$ finds a semantically equivalent program $P'$ that has
% fewer anomalous access pairs than $P$ as determined by $\mathcal{O}$.
% \begin{comment}
% Specifically, we develop a tool that repairs programs by
% \emph{refactoring} their database schemas in order to benefit from
% record-level atomicity guarantees offered by most databases, without
% introducing new observable behaviors.

% We first formalize the notion
% of refactoring and repair, and discuss the realization of this
% formalism in an implementation thereafter.
% \end{comment}
% The discussion elides
% specific details on how $\mathcal{O}$ is implemented to simplify the
% presentation.  Although the choice of a particular oracle is
% orthogonal to our discussion, existing tools such as~\cite{BR18} or
% ~\cite{Rahmani:2019:Clotho} can be adapted for this purpose;
% we provide additional details in~\autoref{sec:impl}.

\section{Refactoring of Database Programs}
\label{sec:refactor}

\begin{comment}
  In this section, we identify conditions under which a program
  refactoring is safe to perform, using the notion of \emph{program
    refinement}.  A refactored program $P'$ is a refinement of the
  original program $P$ if:
\begin{enumerate}
\item[(A1)] Every complete history of $P'$ has a corresponding
  complete history in $P$ with the same set of finalized transaction
  instances,
  \item[(A2)] Every serializable execution of $P$ has a corresponding
  execution in $P'$.
\end{enumerate}
The first condition ensures that $P'$ does not introduce any new
behaviors over $P$, while the second ensures that $P'$ does not remove
any desirable behavior exhibited by $P$.
\end{comment}

In this section, 
we establish the soundness properties on the space of database program refactorings and then introduce our particular choice of sound refactoring rules.

The correctness of our approach relies on being able to show that each
program transformation maintains the invariant that \emph{at every
step in any history of a refactored program, it is possible to
completely recover the state of the data-store for a corresponding
history of the original program}. To establish this property, we begin by
formalizing the notion of a \emph{containment} relation between tables.

\subsection{Database Containment}
\label{sec:DBContain}
Consider the
tables in \autoref{fig:example}, which are instances of the schemas
from \autoref{sec:over}. Note that every field of
$\func{COURSE}_0$ can be computed from the values of some other field
in either the $\func{STUDENT}_0$ or $\func{COURSE\_ST\_CNT\_LOG}_0$
tables: $\func{co\_avail}$ corresponds to the value of the
$\func{st\_co\_avail}$ field of a record in $\func{STUDENT}_0$,
while $\func{co\_st\_cnt}$ can be recovered by summing up the values
of the $\func{co\_cnt\_log}$ field of the records in
$\func{COURSE\_ST\_CNT\_LOG}_0$ whose $\func{co\_id}$ field has the same
value as the original table.

The containment relation between a table (e.g. $\func{COURSE}_0$) and
a set of tables (e.g. $\func{STUDENT}_0$ and
$\func{COURSE\_ST\_CNT\_LOG}_0$) is defined using a set of mappings
called \emph{value correspondences}~\cite{Wang:2019:Synthesizing}.  A
value correspondence captures how to compute a field in the contained
table from the fields of the containing set of tables.
%assuming the first table is contained in the second
%set.
Formally, a value correspondence between field $f$ of schema $R$
and field $f'$ of schema $R'$ is defined as a tuple $(R, R',
f,f',\theta,\alpha)$ in which:
\begin{enumerate*}[label=(\roman*)]
\item a total \emph{record correspondence function}, denoted by
  $\theta:R_\func{id}\rightarrow \set{R'_\func{id}}$,
  relates
  every record of any instance of $R$ to
  a \emph{set} of records in any instance of $R'$
  and
\item a fold function on values,
  denoted by $\alpha: \set{\func{Val}}\rightarrow \func{Val}$ is used
  to aggregate a set of values.
\end{enumerate*}
We say that a table $X$ is contained by a set of tables
$\overline{X}$ under a set of value correspondences $V$, if $V$
accurately explains how to compute $X$ from $\overline{X}$, i.e.
\small
\begin{mathpar}
\hspace{-4mm}
  \begin{array}{lll}
    X \sqsubseteq_V \overline{X} & \hspace{-1mm} \equiv  &\hspace{-1mm}
    \forall f\in X_\func{fld}.\;
      \exists (R,R',f,f',\theta,\alpha)\in V.\;
    \exists X' \in \overline{X}.\\ & &
  \hspace{-10mm}\forall r\in R_\func{id}.\;
    X(r.f) ~=~ \alpha(\{m\,|\,r'\in\theta(r)\wedge X'(r'.f')=m\})

  \end{array}
\end{mathpar}
\normalsize

\begin{figure}[t]
  \centering
  \begin{minipage}{0.49\textwidth}
    \centering
  \includegraphics[width=\textwidth]{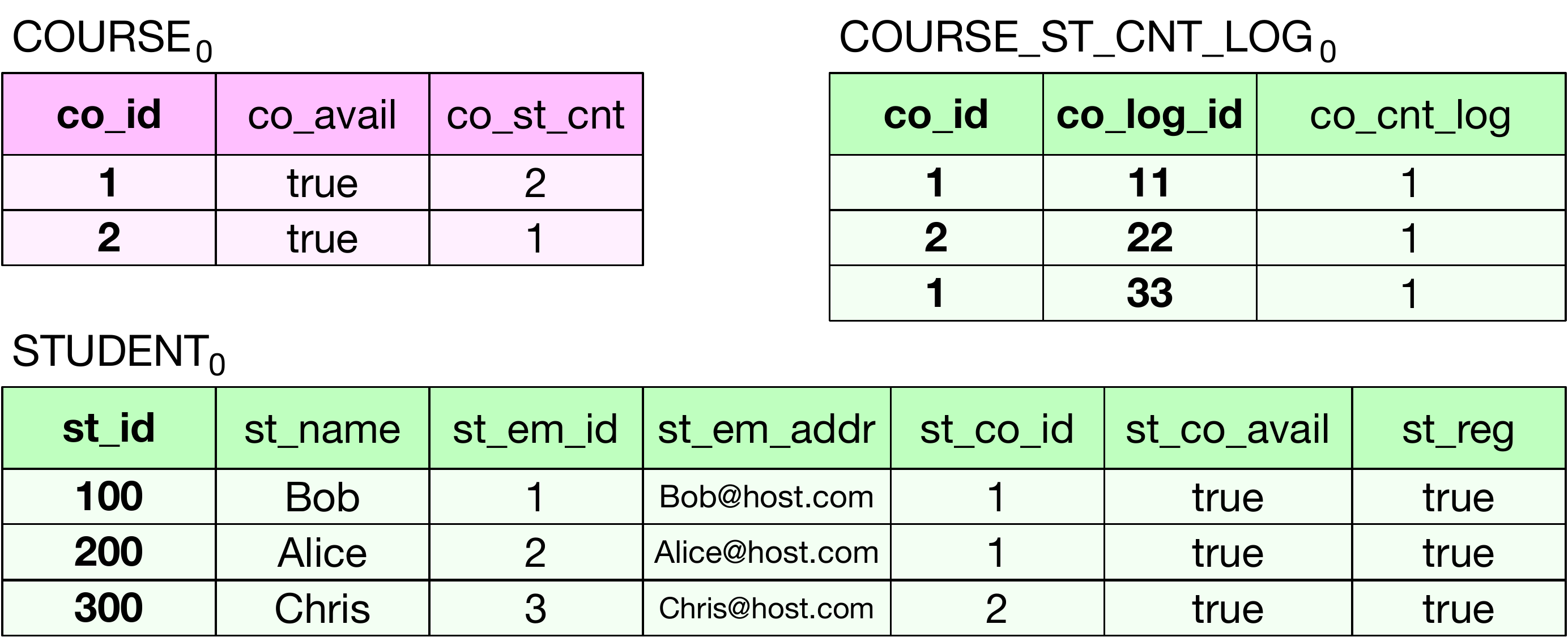}
  \caption{An example illustrating value correspondences.
  %Information
  %  found in $\func{COURSE}_0$ can be extracted from data found in
  %  tables $\func{COURSE\_ST\_CNT\_LOG}_0$ and $\func{STUDENT}_0$
  }
  \label{fig:example}
\end{minipage}
\end{figure}

For example, the table
$\func{COURSE}_0$ is contained in the set of tables
\(\{\func{STUDENT}_0, \func{COURSE\_ST\_CNT\_LOG}_0\}\) under the pair of value
correspondences,
\((\func{COURSE}, \)
\(\func{STUDENT}, \)
\(\func{co\_avail}, \)\;\;\;
\(\func{st\_co\_avail}\),
\(\theta_1,\)
\(\func{any})\) and
\((\func{COURSE},\) \(\func{COURSE\_ST\_CNT\_LOG},\)
\(\func{co\_st\_cnt},\func{co\_cnt\_log},\) \(\theta_2,\func{sum})\),
where  $\theta_1(1)=\{100,200\}$, $\theta_1(2)=\{300\}$,
$\theta_2(1)=\{(1,11),(1,33)\}$ and
$\theta_2(2)=\{(2,22)\}$.
The aggregator function $\func{any}:\set{\func{Val}}\rightarrow \func{Val}$  returns a non-deterministically chosen value from a
set of values. The containment relation on tables is straightforwardly lifted to
data store states, denoted by $\Sigma\sqsubseteq_V\Sigma'$, if all tables in $\Sigma$ are contained by the set of tables in $\Sigma'$.

We define the soundness of our program refactorings using a pair of
\emph{refinement} relations between execution histories and between
programs.  An execution history $h'$ (where
$h'_\func{fin} = (\Sigma', \Gamma')$) is a refinement of an execution
$h$ (where $h_\func{fin}=(\Sigma,\Gamma)$) if and only if $\Gamma'$
and $\Gamma$ have the same collection of finalized transaction
instances and there is a set of value correspondences $V$ under which
$\Sigma$ is contained in $\Sigma'$, i.e.
$\Sigma \sqsubseteq_V \Sigma'$.

Lastly, we define a refactored program $P'$ to be a refinement of the original program $P$ if the following conditions are satisfied:
\begin{enumerate}[label=(\Roman*)]
\item Every history $h'$ of $P'$ has a corresponding
  history $h$ in $P$ such that $h'$ is a refinement of $h$.
  \item Every serializable history $h$ of $P$ has a corresponding
  history $h'$ in $P'$  such that
  $h'$ is a refinement of $h$.
\end{enumerate}
\noindent The first condition ensures that $P'$ does not introduce any new
behaviors over $P$, while the second ensures that $P'$ does not remove
any desirable behavior exhibited by $P$.

\subsection{Refactoring Rules}
%%%%%%%%%%%%%%%%%%%%%%%%%%%%%%%%%%%%%%%%%%%%%%%%%%%
\label{sec:refactoring-semantics}
\begin{comment}
The notion of refinement presented above is defined extensionally, in
terms of execution histories and transaction final states.  But, our
refactoring procedure is intended to be applied statically without the
benefit of having concrete table instances to guide the process.

In
this section, we first describe the constraints necessary to reason
about refactorings \emph{intensionally} and entirely in terms of modifications
to database schema and the programs that use them. After providing the
intuitions behind these constraints on the rules employed by \tool{},
we formally state a theorem establishing they are sufficient to
preserve refinement between the original and the refactored

We
end the section by describing the transformations used by \tool{} in
more detail; while not explicit in the rules, their shape is
influenced by our interest in eliminating serializability anomalies.
\end{comment}

% and thus guides the shape of refactorings our implementation
% considers. We defer the

%A safe realization of
% these rules is presented in \autoref{subsec:ins}.

We describe \tool{}'s refactorings using a relation
$\hstepl{}{}{}\ \subseteq \set{V}\times P\times \set{V}\times P$,
between programs and sets of value correspondences. The rules in
\autoref{fig:ref_rules} are templates of the three categories of transformations
employed by \tool{}. % Together, they cover many
% refactoring techniques studied in the
% literature~\cite{Ambler:2006:Refactoring,Faroult:2008:Refactoring}.
These categories are: (1) adding a new schema to the program, captured by
the rule (\textsc{intro $\rho$}); (2) adding a new field to an existing
schema $\rho$, captured by rule (\textsc{intro $\rho.f$}); and, (3)
relocating certain data from one table to another while modifying the
way it is accessed by the program, captured by the rule (\textsc{intro
  $v$}).
%The changes described by the first two rules are straightforward -
%they simply update the program's schemas, without modifying its
%transactions.  \input{figures/schema_ref_example} For example,
%\autoref{fig:schema_ref_example} shows how the $\func{STUDENT}$ schema
%from the example discussed in Section~\ref{sec:over} is transformed
%into its final version in two refactoring steps; first by
%(\textsc{intro $\func{STUDENT.st\_em\_addr}$}) and then by
%(\textsc{intro $\func{STUDENT.st\_co\_avail}$}), both of these being
%applications of the ({\sc intro} $\rho.f$) rule.

The refactorings represented by (\textsc{intro $v$}) introduce a new
value correspondence $v$, and modify the body and return expressions
of a programs transactions via a rewrite function, $\eval{.}_v$.  A
particular instantiation of $\eval{.}_v$ must ensure the same data is
accessed and modified by the resulting program, in order to guarantee
that the refactored program refines the original. At a high-level, it
is sufficient for $\eval{\cdot}_v$ to ensure the following relationship
between the original ($P$) and refactored programs ($P'$) :
\begin{itemize}
  \item[(R1)] $P'$ accesses the same data as $P$,
    which may be maintained by different schemas;
  \item [(R2)] $P'$ returns the same final value as $P$;
  \item [(R3)] and, $P'$ properly updates all data
    maintained by $P$.
\end{itemize}

%%%%%%%%%%%%%%%%%%%%%%%%%
\def\horizontalSpace{3mm}
\begin{figure}[t]
\begin{minipage}[b]{0.49\textwidth}
    \begin{mdframed}[backgroundcolor=white,linecolor=white]
%%%%%%%%%%%%%%%%%%%%%%%%%%%%%%%%%%%%
      \centering
      \hspace{-5mm}
\begin{minipage}{0.47\textwidth}
  \ruleLabel{intro $\rho$}
\begin{mathpar}
  \RULE{
    \rho\not\in\set{R}_\func{RelNames}
  }{\hstepl{\set{V},(\set{R},\set{T})}{}{\set{V},(\set{R}\cup\{\rho:\emptyset\},\set{T})}}
\end{mathpar}
\end{minipage}
\hfill
%%%%%%%%%%%%%%%%%%%%%%%%%%%%%%%%%%%%%%%%%%%%%%%%%%%%%%%%%%%%%%%%%%%%%%%%
\begin{minipage}{0.52\textwidth}
  \ruleLabel{intro $\rho.f$}
\begin{mathpar}
  \RULE{
    R = \rho:\set{f}
    \quad
    f\not\in \set{f}
    \quad
    R' = \rho:\set{f}\cup\{f\}
    }{\hstepl{\set{V},(\{R\}\cup
  \set{R},\set{T})}{}{\set{V},(\{R'\}\cup \set{R},\set{T})}}
\end{mathpar}
\end{minipage}
%%%%%%%%%%%%%%%%%%%%%%%%%%%%%%%%%%%%%%%%%%%%%%%%%%%%%%%%%%%%%%%%%%%%%%%%
\\[\horizontalSpace]
\hspace{-12mm}
\begin{minipage}{0.93\textwidth}
  \ruleLabel{intro $v$}
\begin{mathpar}
  \RULE{
    v\not\in\set{V}
    \qquad
    \set{T'} = \{ t(\set{a})\{\eval{c}_v;\func{return}\ \eval{e}_v\}   \ALT
      t(\set{a})\{c;\func{return}\ {e}\}  \in \set{T}
    \}
  }{\hstepl{\set{V},(\set{R},\set{T})}{}{\set{V}\cup\{v\},(\set{R},\set{T'})}}
\end{mathpar}
\end{minipage}

%%%%%%%%%%%%%%%%%%%%%%%%%%%%%%%%%%%%%%
\end{mdframed}
\end{minipage}
\vspace{-5mm}
\caption{Refactoring Rules}
\label{fig:ref_rules}
\vspace{-5mm}
\end{figure}

%%%%%%%%%%%%%%%%%%%%%%%%%

\begin{comment}
decides how (a) a refactoring must redirect commands to retrieve
proper records from a different relation; (b) aggregation and field select
expressions are affected by such a refactoring; and (c) update commands are
transformed to maintain the constraints
imposed by $v$.
In general $\eval{\cdot}$ is a partial function, where
\textsc{intro $v$} simply does not apply to a program when the results
of $\eval{\cdot}$ are undefined.
%
% The rule is furthermore parameterized on three additional
% \emph{sub-transformations}
% which are used by its command and expression rewriting functions, both denoted by
% $\eval{.}$.  \SJ{The three sub-transformations don't show up explicitly in the rule;
%   if they do, they need to be explicitly stated.  Can't we informally present them
%   here and elaborate subsequently?}
% The way these components are instantiated defines how the
% body of all transactions will be rewritten to
% take advantage of changes in the underlying data model induced by the value correspondence
% under consideration.
This rewrite function needs to ensure the same set of data is accessed
and modified by the rewritten program, in order to preserve the
refinement relation (introduced in \autoref{subsec:ref}) between the
original and the refactored programs. At a high-level, it is
sufficient for $\eval{\cdot}$ to satisfy the following conditions:
\begin{itemize}
  \item[(R1)] the refactored program  properly accesses the same set of data as the original
    program, which may be maintained within different schemas;
  \item [(R2)] the refactored program always returns the same final value as the original
  program;
  \item [(R3)] and,
    the refactored program properly updates all data maintained by the
    original program.
\end{itemize}
\end{comment}

\begin{figure*}[t]
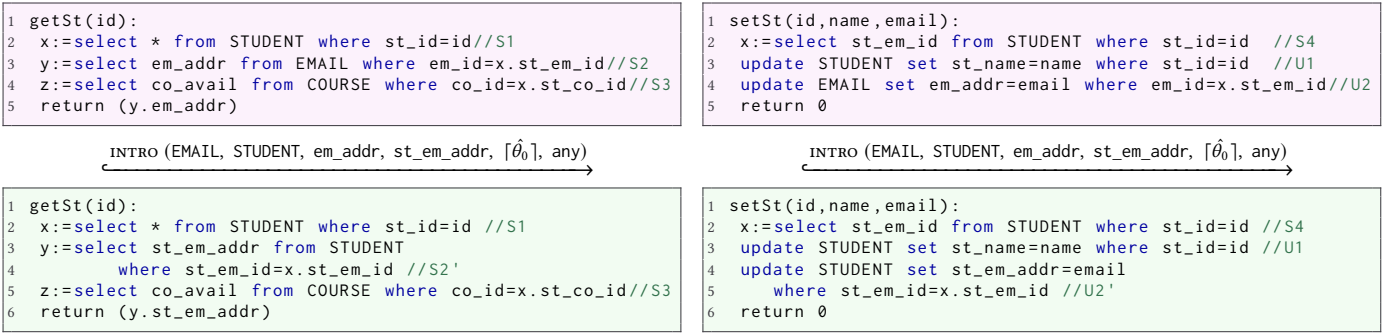


\definecolor{pgrey}{rgb}{0.26,0.25,0.28}
\definecolor{javared}{rgb}{0.2,0.2,0.7} % for strings
\definecolor{javagreen}{rgb}{0.2,0.45,0.3} % comments
\definecolor{javapurple}{rgb}{0.5,0,0.35} % keywords
\definecolor{javadocblue}{rgb}{0.25,0.35,0.75} % javadoc
\definecolor{weborange}{RGB}{0,75,0}

\lstset{language=Java,
  basicstyle=\ttfamily\scriptsize,
  breaklines=true,
  backgroundcolor=\color{Main-6},
  frame=single,
  rulecolor=\color{pgrey},
  keywordstyle=\color{black},
  stringstyle=\color{javared},
  commentstyle=\color{javagreen},
  morecomment=[s][\color{javadocblue}]{/**}{*/},
  numbers=left,
  %title=\footnotesize{getStudent(id)},
  xleftmargin=0.3em,
  xrightmargin=0.3em,
  framexleftmargin=0.9em,
  numberstyle=\tiny\color{pgrey},
  stepnumber=1,
  numbersep=5pt,
  tabsize=1,
  captionpos=b
  showspaces=false,
  showstringspaces=false,
  classoffset=2, % starting new class
  morekeywords={update,select,set,from,where},
  keywordstyle=\color{sql},
  moredelim=[is][\textcolor{red}]{\%\%}{\%\%},
}
\hspace{-2mm}
\begin{subfigure}[t]{0.49\textwidth}
\begin{minipage}[t]{\textwidth}
\begin{lstlisting}[]
getSt(id):
 x:=select * from STUDENT where st_id=id//S1
 y:=select em_addr from EMAIL where em_id=x.st_em_id//S2
 z:=select co_avail from COURSE where co_id=x.st_co_id//S3
 return (y.em_addr)
\end{lstlisting}
\vspace{-0.5mm}
\end{minipage}
\centering
$\hstepl{}{\textsc{intro $(\func{EMAIL}, \;
\func{STUDENT}, \;
\func{em\_addr},\;
\func{st\_em\_addr},\;
\lceil\hat{\theta_0}\rceil,\;
\func{any})$}}{}$
\lstset{backgroundcolor=\color{Main-7}}
\begin{minipage}[t]{\textwidth}
\vspace{-3mm}
\begin{lstlisting}[]
getSt(id):
 x:=select * from STUDENT where st_id=id //S1
 y:=select st_em_addr from STUDENT
        where st_em_id=x.st_em_id //S2'
 z:=select co_avail from COURSE where co_id=x.st_co_id//S3
 return (y.st_em_addr)
\end{lstlisting}
\end{minipage}
%\caption{}
\end{subfigure}
\hfill
\begin{subfigure}[t]{0.49\textwidth}
\lstset{backgroundcolor=\color{Main-6}}
\begin{minipage}[t]{\textwidth}
\begin{lstlisting}[]
setSt(id,name,email):
 x:=select st_em_id from STUDENT where st_id=id  //S4
 update STUDENT set st_name=name where st_id=id  //U1
 update EMAIL set em_addr=email where em_id=x.st_em_id//U2
 return 0
\end{lstlisting}
\vspace{-0.5mm}
\end{minipage}
\centering
$\hstepl{}{\textsc{intro $(\func{EMAIL}, \;
\func{STUDENT}, \;
\func{em\_addr},\;
\func{st\_em\_addr},\;
\lceil\hat{\theta_0}\rceil,\;
\func{any})$}}{}$
\lstset{backgroundcolor=\color{Main-7}}
\begin{minipage}[t]{\textwidth}
\vspace{-3mm}
\begin{lstlisting}[]
setSt(id,name,email):
 x:=select st_em_id from STUDENT where st_id=id //S4
 update STUDENT set st_name=name where st_id=id //U1
 update STUDENT set st_em_addr=email
    where st_em_id=x.st_em_id //U2'
 return 0
\end{lstlisting}
\end{minipage}
%\caption{}
\end{subfigure}
\hspace{-3mm}
\vspace{-1mm}
\caption{A single program refactoring step, where $\hat{\theta_0}(\func{EMAIL.em\_addr})=\func{STUDENT.st\_em\_addr}$}
\label{fig:prog_ref_example}
\end{figure*}

To see how a rewrite function might ensure R1 to R3, consider
the original (top) and refactored (bottom) programs presented in \autoref{fig:prog_ref_example}.
% which considers commands that do not access the
% source schema of the newly added value correspondence
This example depicts a refactoring of transactions $\func{getSt}$ and
$\func{setSt}$ to utilize a value correspondence from
$\func{em\_addr}$ to $\func{st\_em\_addr}$,
moving email addresses to the $\func{STUDENT}$ table, as described in \autoref{sec:over}.
The select
commands $\func{S1}$ and $\func{S3}$ in $\func{getS}$
remain unchanged after the refactoring, as they do not access the
affected table. However, the query $\func{S2}$, which
originally accessed the $\func{EMAIL}$ table is redirected to the
$\func{STUDENT}$ table.

More generally, in order to take advantage of
a newly added value correspondence $v$, $\eval{.}_v$ must alter every
query on the source table and
field in $v$ to use the target table of $v$ instead, so that the
new query accesses the same data as the original. This rewrite has the general form:
\[
\begin{array}{l}
[\![x\!:=\sql{select} f\; \sql{from} {R}
      \sql{where} \phi ]\!]_v \equiv \\
  x\!:=\sql{select} f'\; \sql{from} {R'}\ \sql{where}
  \func{redirect}({\phi},v.\theta)
\end{array}
\]
% where  the function  $\func{redirect}: \set{V}\times \phi\times
% \theta\rightarrow \phi$
% satisfies the following constraint:
% \begin{equation}
%       \RULE{
%         r\in R_\func{id}
% \qquad
%   \Sigma\sqsubseteq_V \Sigma'
%   \qquad
%   \Delta\sqsubseteq_V\Delta'
%   \qquad
%   \Delta,\phi(\Sigma(r))\Downarrow \func{true}
%       }{
%   \theta(r) = \{r'\,|\, \Delta',{\func{redirect}(V,\phi,\theta)}(\Sigma'(r')) \Downarrow \func{true}\}
%     }
%     \tag{R1}
% \end{equation}
Intuitively, in order for this transformation to ensure R1,
the $\func{redirect}$ function must return a new where clause
on the target table which selects a set of records
corresponding to set selected by the original
clause.

In order to preserve R2,
program expressions also need to be rewritten to evaluate to the same value as in the original
program. For example, observe that the
$\func{return}$ expression in $\func{getSt}$ is  updated to
reflect that the records held in the variable $\func{y}$ now adhere to
a different schema.
% \begin{equation}
%       \RULE{
%         v=(R,R',f,f',\theta,\alpha)
%         \qquad
%       \Delta \sqsubseteq_{V\cup\{v\}} \Delta'
%         \qquad
%         \Delta, e \Downarrow n
%       }{
%       \Delta',\eval{V,e}_v \Downarrow n
%     }
%       \tag{R2}
%     \end{equation}

%     For example, in \autoref{fig:prog_ref_example_getSet}
%     the expression $\func{y.em\_addr}$ used in the return statement is properly
%     updated in the refactored program to return the same intended information
%     regarding the student's email address.

The transformation performed in \autoref{fig:prog_ref_example} also rewrites the
update ($\func{U2}$) of transaction $\func{setSt}$.  In this case, the update
is rewritten using the same redirection strategy as ($\func{S2}$), so that it
correctly reflects the updates that would be performed by the original
program to the $\func{EMAIL}$ record.

Taken together, $R1-R3$ are sufficient to ensure that a particular
instance of \textsc{intro $v$} is sound\footnote{A complete formalization of all three
refactoring rules, their correctness criteria, and proofs of
soundness is presented in the supplementary materials of this submission.}:
\begin{theorem}
  Any instance of \textsc{intro $v$} whose instantiation of
  $\eval{\cdot}_v$ satisfies $R1-R3$ is always produces a refactored
  program that is a refinement of the original.
  %% \begin{proof}
  %%   \autoref{app:proofs}
  %% \end{proof}
  \label{th2}
\end{theorem}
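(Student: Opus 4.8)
The plan is to verify both clauses, (I) and (II), of the refinement definition by means of step-indexed simulations between $P$ and the refactored program $P'$, organized around a single \emph{coupling invariant} $\mathcal{I}$ on pairs of configurations. Given a reachable $P$-configuration $(\Sigma,\Gamma)$ and a reachable $P'$-configuration $(\Sigma',\Gamma')$, $\mathcal{I}$ asserts: (a) $\Sigma \sqsubseteq_{\set{V}\cup\{v\}} \Sigma'$; (b) the two configurations have exactly the same finalized transaction instances, with the same return values; and (c) for every still-running transaction $t$, if its triple in $\Gamma$ is $(c,e,\Delta)$ and in $\Gamma'$ is $(c',e',\Delta')$, then $c' = \eval{c}_v$, $e' = \eval{e}_v$, and $\Delta$ and $\Delta'$ are \emph{$v$-related}: each local variable holds records of the source schema of $v$ in $\Delta$ and records of the target schema in $\Delta'$, linked through $v$'s record correspondence $\theta$. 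Because \textsc{intro $v$} rewrites every transaction body with $\eval{\cdot}_v$ and leaves the schema set untouched, $\mathcal{I}$ holds of the initial configurations; and once $\mathcal{I}$ is shown to be preserved along an entire history and to hold of its final configuration, unwinding the definitions immediately gives that the $P'$-history refines the $P$-history. So the proof reduces to two simulation lemmas plus the bookkeeping that threads $\mathcal{I}$ through a full trace.

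Before the simulations I would extract from R1--R3 the precise per-command and per-expression obligations they must carry, since the paper states them only informally. The key auxiliary fact is \emph{expression adequacy}: whenever $\Delta$ and $\Delta'$ are $v$-related, $\Delta, e \Downarrow m$ iff $\Delta', \eval{e}_v \Downarrow m$. This is R2 demanded not only of the top-level $\func{return}$ expression but of every subexpression the operational semantics evaluates --- the where-clauses $\phi$ of \textsc{select} and \textsc{update}, the right-hand sides of an update's assignments, conditional guards, and iteration counts --- and it is what forces control flow and record selection to agree on the two sides. Its proof is a routine induction on $e$; the only substantive clauses are $\func{agg}(x.f)$ and $\func{at}^e(x.f)$, handled using how $\eval{\cdot}_v$ rewrites field accessors to range over $\theta$-images and fold with $\alpha$, together with the defining containment equation $X(r.f) = \alpha(\{m \mid r'\in\theta(r)\wedge X'(r'.f')=m\})$.

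For clause (I) I would prove \emph{forward simulation}: every $P'$-step $(\Sigma',\Gamma')\Rightarrow(\Sigma'_1,\Gamma'_1)$ out of a configuration $\mathcal{I}$-coupled to a $P$-configuration $(\Sigma,\Gamma)$ is matched by a $P$-step $(\Sigma,\Gamma)\Rightarrow(\Sigma_1,\Gamma_1)$ with $\mathcal{I}$ again relating the results; iterating over $h'$ then produces the required $h$. The \textsc{txn-invoke} case is immediate from the shape of \textsc{intro $v$}; the \textsc{txn-step} cases for control commands (\textsc{seq}, \textsc{skip}, \textsc{cond-t}, \textsc{cond-f}, \textsc{iter}) follow because $\eval{\cdot}_v$ commutes with those constructors and from expression adequacy; \textsc{txn-ret} is exactly R2, giving equality of finalized return values. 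The \textsc{select} and \textsc{update} cases carry the weight. For a redirected \textsc{select} that in $P'$ picks a local view $\Sigma'' \lhd \Sigma'$, I would take the matching $P$-side view $\tilde\Sigma$ to be the \emph{$v$-preimage} of $\Sigma''$ --- keep the source-table records and reconstruct every field through the value correspondences --- check that it satisfies (\textsc{ConstructView}), the record-level-atomicity closure being the delicate point, invoke R1 to conclude that $\func{redirect}(\phi,v.\theta)$ selects over $\Sigma''$ exactly the $\theta$-images of what $\phi$ selects over $\tilde\Sigma$, and note that containment survives because the step only appends read events and visibility edges, neither of which feeds the reconstruction $\Sigma(r.f)$. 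The \textsc{update} case is parallel but uses R3: the transformed command --- a redirect, or a functional insert into a log table --- emits write events whose effect, once the original table's fields are recomputed through the augmented $V$, is exactly the state change the original \textsc{update} would have made, so $\Sigma_1 \sqsubseteq_{\set{V}\cup\{v\}} \Sigma'_1$ persists; the matching $P$-side step reuses the same $v$-preimage view so that the where-clause-satisfying record set and, by expression adequacy, the assigned values coincide.

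For clause (II), given a \emph{serializable} history $h$ of $P$, I would use that it is state- and outcome-equivalent to a serial history $h_s$ over the same transaction instances, and construct the witness $h'$ of $P'$ by executing the rewritten transactions $\eval{T}_v$ in the order of $h_s$, each with the maximal consistent local view. Replaying the same case analysis in the constructive direction, R1, R3 and expression adequacy show that each $\eval{T}_v$ mirrors $T$ step for step and that $\mathcal{I}$ is maintained; because the view is maximal there is no partial-visibility mismatch, so a redirected read of a relocated or aggregated field returns precisely the value prescribed by the containment equation. Hence $h'$ refines $h_s$, and --- same final state, same finalized instances --- also refines $h$. I expect the main obstacle to be the \textsc{update} case of the forward simulation: making R3 precise enough to transport an \emph{arbitrary} weakly-isolated local view on the $P'$ side to a legitimate local view on the $P$ side, discharging the (\textsc{ConstructView}) side conditions for the $v$-preimage and especially record-level atomicity, and confirming that folding the source table through the $\alpha$ functions after the new, possibly log-shaped, write events reproduces the original \textsc{update} exactly. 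The remaining cases are essentially bookkeeping once expression adequacy is in hand.
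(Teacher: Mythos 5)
Your proposal is correct and follows essentially the same route as the paper's proof: a forward simulation by induction on the steps of the refactored history, maintaining the containment invariant $\Sigma \sqsubseteq_V \Sigma'$ and $\Delta \sqsubseteq_V \Delta'$, with a case analysis on the operational rules in which \textsc{select} is discharged by R1, expression evaluation, control flow, and \textsc{txn-ret} by R2, and \textsc{update} by R3, the matching local view on the original side being constructed as the preimage of the refactored view under the value correspondences; clause (II) is likewise handled by replaying the rewritten transactions in serial order. The only cosmetic difference is that you propose to prove ``expression adequacy'' by induction on $e$, whereas in this theorem R2 is a hypothesis (it is only discharged later for the concrete instantiations), but this does not change the structure of the argument.
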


\noindent Although our focus has been on preserving the semantics of
refactored programs, note that as a direct consequence of our
definition of program refinement, this theorem implies that sound
transformations do not introduce any new anomalies.

% In the next subsection, we will present one instantiation of rewriting rules
% which satisfy the properties defined in this section.

%%%%%%%%%%%%%%%%%%%%%%%%%%%%%%%%%%%%%%%%%%%%%%%%%%
% \subsection{Instantiation}
% \label{subsec:ins}
%%%%%%%%%%%%%%%%%%%%%%%%%%%%%%%%%%%%%%%%%%%%%%%%%%

We now present the instantiations of \textsc{intro $v$} used by
\tool{}, explaining along the way how they ensure R1-R3.
% In the previous subsection we discussed the high-level properties of a sound
% refactoring semantics, but did not discuss any specific implementation.
% We now turn to the question of how to actually implement the rewriting function for a real
% database program.
\begin{comment}
While we acknowledge that there may be other possible
refactorings, our current set supports all of the prevalent patterns
found in many real-world and benchmark database programs, and more
importantly, has demonstrated particular utility in eliminating
serializability anomalies.
\end{comment}

\subsubsection{The redirect rule}   % \textsc {$(R, R', f, f', \lceil\hat{\theta}\rceil, \func{any})$}}
\label{subsub:redirect}
Our first refactoring rule is parameterized over the choice of
schemas and fields and uses the aggregator $\func{any}$.
Given data store states $\Sigma$ and $\Sigma'$,
the record
correspondence is defined as:
\(\lceil\hat{\theta}\rceil(r) = \{r' \ALT r'\in R'_\func{id} \wedge \forall_{f\in R_\func{id}}\forall_{n}.\;
  \Sigma(r.f)\Downarrow n \Rightarrow
  \Sigma'(r'.\hat{\theta}(f))\Downarrow n\}\). \normalsize In essence,
  the lifted function $\hat{\theta}$ identifies how the value of the primary key
$f$ of a record $r$ can
be used to constrain the value of field $\hat{\theta}(f)$ in the target schema
to recover the set of records corresponding to $r$, i.e. $\theta(r)$.
The record correspondences from \autoref{sec:DBContain} were
defined in this manner, where
\vspace{-1mm}
\begin{flalign*}
&\hat{\theta_1}(\func{COURSE.co\_id})=\func{STUDENT.st\_co\_id}\,,\;\;\; \text{and}
\\[-1mm]
&\hat{\theta_2}(\func{COURSE.co\_id})=\func{COURSE\_CO\_ST\_CNT\_LOG.co\_id}.
\end{flalign*}
\noindent Defining the record correspondence this way ensures that if a record
$r$ is selected in $\Sigma$, the corresponding set of records in
$\Sigma'$ can be determined by identifying the values that were used
to select $r$, without depending on any particular instance of the
tables.

The definition of
$\func{redirect}$ for this rule is straightforward:
\begin{mathpar}
\func{redirect}(\phi,\lceil\hat{\theta}\rceil) =
\bigwedge_{f\in\phi_\func{fld}}
  \func{this}.\hat{\theta}(f)=\phi[f]_\func{exp}
  \vspace{-3mm}
\end{mathpar}
\normalsize
\begin{comment}
\small
\begin{align*}
  \func{redirect}(\phi,\lceil\hat{\theta}\rceil) \equiv &  \bigwedge_{f\in\phi_\func{fld}}
  \func{this}.\hat{\theta}(f)=\phi[f]_\func{exp}
                                     & \phi\ \textsf{is well-formed}\\
  \func{redirect}(\phi,\lceil\hat{\theta}\rceil) = \;& \bot & \mathsf{otherwise}.
\end{align*}
\normalsize
\end{comment}
The one wrinkle in this
definition of $\func{redirect}$  is that it is only defined when the
where clause $\phi$ is \emph{well-formed}, i.e. $\phi$ only consists
of conjunctions of equality constraints on primary key
fields. The expression used in such a constraint is denoted by
$\phi[f]_\func{exp}$.  As an example, the where clause of command
($\func{S2}$) in \autoref{fig:prog_ref_example} (left) is well-formed, where
$\phi[\func{em\_id}]_\func{exp} = \func{x.\st\_e\_id}$.  However, the
where clause in ($\func{S2}'$) after the refactoring step is not
well-formed, since it does not constrain the primary key of the
schema. This restriction ensures only select clauses accessing a
single record of the original table will be rewritten. Expressions
using variables containing the results of those queries are rewritten by substituting the source field name with the target field name, e.g.
$\eval{\func{at}^1(x.f)}_{v} \equiv  \func{at}^1(x.f')$.
\begin{comment}
(and thus the
variable containing the results of those queries will hold a single
record), allowing us to straightforwardly define the rewrite rule for
aggregations and accesses of $f$:
\small
\begin{equation}
  \label{I2-1}
  \begin{array}{lclr}
    \eval{\func{at}^e(x.f)}_{v} & := & \func{at}^e(x.f')  & \qquad e=1\\
    \eval{\func{at}^e(x.f)}_{v} & := & \bot  & \qquad e \neq 1 \\
    \eval{\func{agg}(x.f)}_{v} & := & \func{agg}(x.f')  &
    % \eval{V,\func{at}^1(f,x)}_{v} & := &
    % v.\alpha(f,x) &\quad (if\; v.\alpha\not\eq \func{any})
  \end{array}
\tag{I2-1}
\end{equation}
\normalsize

Finally, the transformation this refactoring applies to on updates can be
straightforwardly defined by reusing the definition of
$\func{redirect}(\phi,\theta)$ from above:
\small
\begin{align}
\begin{split}
  \label{I3-1}
  &\eval{\sql{update} R\; \sql{set} f=e \;\sql{where} \phi}_v \equiv \\
  &\sql{update} R'\; \sql{set} (f'=\eval{e}_v) \;\sql{where}
  \func{redirect}({\phi},v.\theta)
  \end{split}
    \tag{I3-1}
\end{align}
\normalsize
\end{comment}

\subsubsection{The logger rule}% \textsc{ $(R, LogR, f, f', \lceil\hat{\theta}\rceil, \func{sum})$}}
\label{subsub:logging}
Unfortunately, instantiating \textsc{intro $v$} is not so straightforward
when we want to utilize value correspondences with more
complicated aggregation functions than $\func{any}$. To
see why, consider how we would need to modify an $\sql{update}$ when
$\alpha=\func{sum}$ is used.  In this case, our rule transforms the program to
insert \emph{a new record} corresponding to each update performed by
the original program. Hence, the set of corresponding records in the
target table always grows and cannot be statically identified.

We enable these sorts of transformations by using \emph{logging}
schema for the target schema. A logging schema for source schema $R$
and the field $f$ is defined as follows:
\begin{enumerate*}[label=(\roman*)]
\item the target schema ($LogR$) has a primary key field, corresponding to every primary key
  field of the original schema ($R$);
\item the schema has one additional primary key field, denoted by
  $LogR.\func{log\_id}$, which allows a \emph{set} of records in $LogR$ to
  represent each record in $R$; and
\item the schema $LogR$ has a single field corresponding to the
  original field $R.f$, denoted by $LogR.f'$.
\end{enumerate*}

Intuitively, a logging schema captures the \emph{history} of updates
performed on a record, instead of simply replacing old values with new
ones.  Program-level aggregators can then be utilized to determine the
final value of each record, by observing all corresponding entries in
the logging schema. The schema $\func{COURSE\_CO\_ST\_CNT\_LOG}$ from
\autoref{sec:over} is an example of a logging schema for the source
schema and field $\func{COURSE}.\func{co\_st\_cnt}$.

Under these restrictions, we can define an implementation of
$\eval{\cdot}$ for the logger rule using $\func{sum}$ as an
aggregator. This refactoring also uses a lifted function
$\lceil\hat{\theta}\rceil$ for its value correspondence, which allows
$\eval{\cdot}$ to reuse our earlier definition of
$\func{redirect}$. We define $\eval{\cdot}$ on accesses to $f$ to use
program-level aggregators, e.g.
$\eval{\func{at}^1(x.f)}_{v} := \func{sum}(x.f')$.

\begin{comment}
\small
\begin{equation}
  \label{I2-2}
  \begin{array}{lclr}
    \eval{\func{at}^e(x.f)}_{v} & := & \func{sum}(x.f') &\qquad e=1 \\
    \eval{\func{at}^e(x.f)}_{v} & := & \bot  & \qquad e \neq 1 \\
    \eval{\func{agg}(x.f)}_{v} & := & \bot &
  \end{array}
\tag{I2-2}
\end{equation}
\normalsize
\end{comment}
Finally, the rewritten $\sql{update}$commands simply need to log
any updates to the field $f$, so its original value can be recovered in
the transformed program, e.g.
\vspace{-0.6mm}
\small
\begin{equation*}
  \begin{array}{l}
  \!\!\!\!\eval{\sql{update} R\; \sql{set} f = e + \func{at}^1(x.f) \;\sql{where}
  \phi}_v  \;\equiv \;
  \sql{update} R' \\ \;\;\; \sql{set} f'=\eval{e}_v
           \sql{where}
  \func{redirect}(\phi,v.\theta) \wedge
  R'.\mathtt{log\_id} = \func{uuid()}.
\end{array}
\end{equation*}
\normalsize

\begin{comment}
Note that the above approach can be expanded into much richer programming
models. For example, we can introduce various types of
refactorings in a program maintaining a set of objects, in order to eliminate
anomalies cause by concurrent addition or deletion of elements from the set.
\end{comment}

 Having introduced the particular refactoring rules instantiated in \tool, we are now ready to establish the soundness of those refactorings:
\begin{theorem}
(Soundness) Any sequence of refactorings using the
rewrite rules described in this section
  satisfy the correctness properties R1-R3.
  %%   \begin{proof}
  %%   \autoref{app:proofs}.
  %% \end{proof}
  \label{th1}
\end{theorem}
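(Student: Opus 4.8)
The plan is to separate the two trivial rules from the substantive one. The schema-introduction rules \textsc{intro $\rho$} and \textsc{intro $\rho.f$} leave every transaction body and return expression untouched --- their freshness side conditions ($\rho\notin\set{R}_\func{RelNames}$, $f\notin\set{f}$) merely prevent name clashes --- so they are sound for free, and R1--R3 are only at issue for \textsc{intro $v$}. For \textsc{intro $v$} there are exactly two concrete rewrite functions to check, the \emph{redirect} rule and the \emph{logger} rule; once each is shown to satisfy R1--R3, Theorem~\ref{th2} makes every individual step a refinement, and a transitivity argument lifts this to an arbitrary sequence.

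For the redirect rule I would establish R1--R3 by structural induction on commands and expressions, with all the difficulty concentrated in one lemma about $\func{redirect}(\phi,\lceil\hat\theta\rceil)$. Since the rule fires only when $\phi$ is \emph{well-formed} --- a conjunction of equalities fixing the primary key of $R$ --- the original query selects a single record $r$, and unfolding the definition of the lifted correspondence $\lceil\hat\theta\rceil$ shows that the rewritten clause $\bigwedge_{f} \func{this}.\hat\theta(f)=\phi[f]_\func{exp}$ on the target schema selects precisely $\theta(r)$. The value correspondence $(R,R',f,f',\lceil\hat\theta\rceil,\func{any})$ then guarantees that reading $f'$ over those records recovers $r.f$ (R1); renaming the field in accessors and in the return expression, e.g. $\func{at}^1(x.f)\mapsto\func{at}^1(x.f')$, preserves returned values (R2); and redirecting the update through the same $\func{redirect}$ writes $f'$ on exactly $\theta(r)$ (R3). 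The induction must carry the invariant that all records in $\theta(r)$ agree on $f'$, so that $\func{any}$ is effectively deterministic; this is re-established by R3 at each update. The well-formedness guard makes some later steps inapplicable (as with $\func{S2}'$) but never unsound, so it does not bear on the theorem.

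The logger rule has the same skeleton, but an update is compiled to the \emph{insertion} of a fresh record --- its primary key extended with $\func{log\_id}=\func{uuid}()$ --- and accessors are compiled to $\func{sum}(x.f')$. The governing invariant becomes: for every source record $r$, $\Sigma(r.f)$ equals the $\func{sum}$ of $\Sigma'(r'.f')$ over $r'\in\theta(r)$. Then R1 and R2 hold because $\func{sum}(x.f')$ reconstructs the current value, and R3 holds because rewriting an update $f = e+\func{at}^1(x.f)$ to an insert of value $\eval{e}_v$ raises that sum by exactly $\eval{e}_v$, with the freshness of $\func{uuid}()$ preventing the new log record from aliasing an old one. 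I expect this to be the main obstacle: under the weak-consistency semantics a local view $\Sigma'\lhd\Sigma$ may hold only a \emph{subset} of the log records for $r$, so the reconstructed sum need not be the globally current value. The proof must exhibit, for each such partial view of the refactored store, a corresponding partial view of the original store in which exactly the matching subset of incremental updates is visible; the record-level atomicity clause of \textsc{ConstructView} (events on the same record and counter come in together) is what keeps that matching coherent, and arranging it so that $\Sigma\sqsubseteq_V\Sigma'$ is maintained \emph{step by step}, not merely on fully consistent states, is the delicate part.

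Finally, for a sequence $P_0\hookrightarrow P_1\hookrightarrow\cdots\hookrightarrow P_k$ every step is sound --- the schema rules trivially, each \textsc{intro $v$} step by Theorem~\ref{th2} using the R1--R3 facts above --- and it remains to check that the program-refinement relation of \autoref{sec:refactor} is transitive. Condition (I) composes by composing the witnessing ``corresponding history'' maps and by composing the value-correspondence sets (lifting the record-correspondence functions along one another and composing the folds, which for our rules never leaves $\{\func{any},\func{sum}\}$). Condition (II) composes after invoking the remark following Theorem~\ref{th2} that a refinement introduces no new anomalies: the $P_{i+1}$-history refining a serializable $P_i$-history can itself be taken serializable, so condition (II) for the next step applies to it. Chaining these facts shows $P_k$ refines $P_0$, i.e. the entire sequence is sound.
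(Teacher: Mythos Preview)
Your core verifications for the redirect and logger rules are essentially what the paper does: unfold the definition of $\func{redirect}(\phi,\lceil\hat\theta\rceil)$ against the well-formedness guard to see that the rewritten clause selects exactly $\theta(r)$ (R1), check that the field-renaming in accessors preserves evaluation (R2), and for the logger case observe that inserting a fresh log record with value $\eval{e}_v$ raises the running sum by exactly the increment of the original update (R3). On that part you and the paper agree.

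Where you diverge is scope. In the appendix R1--R3 are formalized as implications whose \emph{premises} already assert the containment $\Sigma\sqsubseteq_V\Sigma'$ and $\Delta\sqsubseteq_V\Delta'$; the theorem asks only that each concrete $\eval{\cdot}_v$ validates those implications. There is no induction on execution steps to run, no invariant (``all records in $\theta(r)$ agree on $f'$'') to carry --- that agreement is literally what the premise $\Sigma\sqsubseteq_V\Sigma'$ gives you, and you simply use it. The paper's proof is therefore a direct, non-inductive check of six cases (selects, expressions, updates for each of the two rules). Your ``main obstacle'' about partial views $\Sigma'\lhd\Sigma$ under weak consistency is a genuine concern, but it lives in the proof of Theorem~\ref{th2} (Lemma~B.1 in the appendix), which is where one shows that the containment \emph{premise} is preserved step-by-step through an execution. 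Likewise, the transitivity argument for sequences is the content of Lemma~B.2 and the surrounding inductive theorem in the appendix, not of this one; the word ``sequence'' in the present statement is incidental --- R1--R3 are per-rule conditions and there is nothing to compose.

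So your proposal is not wrong, but it proves strictly more than the theorem asks, and the parts you flag as delicate (matching partial views, composing value-correspondence sets) are exactly the parts that belong elsewhere. Stripping those out leaves precisely the paper's proof.
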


\def\horizontalSpace{3mm}

\begin{figure}[h]
%\begin{figure}[t]
\centering
%
%
%
% repair
%
% try_repai%
%\\[\horizontalSpace]
%
%
%
%
\begin{minipage}[t]{0.48\textwidth}
 \begin{mdframed}[backgroundcolor=grey12,linecolor=grey12]
\footnotesize
\begin{myalgorithm}
\SetKwInOut{Function}{Function}
  \DontPrintSemicolon
  \Function{$\func{repair}(P)$}
  $\set{\dap} \leftarrow \orac(P)$;\quad
$P\leftarrow \func{pre\_process}(P,\set{\dap})$

  \For{$\dap \in \set{\dap}$}{
   \lIf{$\func{try\_repair}(P,\dap)=\; P'$}{
        $P\leftarrow P'$
    }
 }
 {\bf return} $\func{post\_process}(P)$
\end{myalgorithm}

\footnotesize
\SetAlgoNoLine
\IncMargin{5em}
\SetKwInOut{Function}{Function}
\begin{myalgorithm}
  \DontPrintSemicolon
  \Function{$\func{try\_repair}(P,\dap)$}
  $ c_1 \leftarrow \dap.c_1$; \quad $c_2 \leftarrow \dap.c_2$
%  $ k_1 \leftarrow \func{kind}(c_1); \; \; k_2 \leftarrow \func{kind}(c_2)$\\
%  $ r_1 \leftarrow \func{rel}(c_1); \; \; r_2 \leftarrow \func{rel}(c_2)$\\
  
  \If{$\func{same\_kind}(c_1,c_2)$}{
    \If{$\func{same\_schema}(c_1,c_2)$}{
      {\bf return} $\func{try\_merging}(P,c_1,c_2)$
    }\uElseIf{$\func{try\_redirect}(P,c_1,c_2)=\;P'$}{
        {{\bf return} $\func{try\_merging}(P',c_1,c_2)$}
      }
    }
    {\bf return} $\func{try\_logging}(P,c_1,c_2)$
\end{myalgorithm}
\end{mdframed}
\end{minipage}
%
% try_redirect
%
%
%
%
%\begin{minipage}{0.88\textwidth}
%\centering
%\footnotesize
%\SetKwInOut{Function}{Function}
%\begin{algorithm}[H]
%  \SetAlgoLined
%  \DontPrintSemicolon
%  \Function{$\func{try\_redirect}(P,q_1,q_2)$}
%  \KwOut{$option \;\mathbb{P}$}
%  determine key correspondence (return none if not possible)\\
%  optional: determine the best split \\
%  add fields to R1\\
%  add new value correspondence \\
%  redirect q2
%\end{algorithm}
%\end{minipage}
%%
%%
%% try_delta
%%
%%
%%
%%
%\begin{minipage}{0.88\textwidth}
%\centering
%\footnotesize
%\SetKwInOut{Function}{Function}
%\begin{algorithm}[H]
%  \SetAlgoLined
%  \DontPrintSemicolon
%  \Function{$\func{try\_delta}(P)$}
%  \KwOut{$option \;\mathbb{P}$}
%  determine key and value correspondence (return none if not possible)\\
%  add a new relation \\
%  add new fields to it \\
%  add the determined key and val. corr.\\
%  call redirect all (return none if not possible) \\
%  call trim \\
%\end{algorithm}
%\end{minipage}
%
%
\caption{The repair algorithm}
\label{fig:algorithm}
\end{figure}

\section{Repair Procedure}
\label{sec:fix}

\autoref{fig:algorithm} presents our algorithm for eliminating
serializability anomalies using the refactoring rules from the
previous section.  The algorithm ($\func{repair}$) begins by applying
an anomaly detector $\mathcal{O}$ to a program to identify a set of
anomalous access pairs.  As an example, consider $\func{regSt}$ from
our running example. For this transaction, the anomaly oracle
identifies two anomalous access pairs:
\begin{align}
  &\func{(U3,\{st\_co\_id,st\_reg\},U4,\{co\_avail\})} \tag{$\dap_1$} \\
  &\func{(S5, \func{\{co\_st\_cnt\}}, \func{U4},\func{\{co\_st\_cnt\}})} \tag{$\dap_2$}
\end{align}

\noindent The first of these is involved in the dirty read anomaly from
\autoref{sec:over}, while the second is involved in the lost update
anomaly.

The repair procedure next performs a preprocessing phase, where database
commands are split into multiple commands such that each command is
involved in at most one anomalous access pair.  For example, the first
step of repairing the $\func{regSt}$ transaction is to split command
$\func{U4}$ into two update commands, as shown in
\autoref{fig:fix_example_regSt} (top).  Note that we only perform this
step if the split fields are not accessed together in other parts of
the program; this is to ensure that the splitting does not introduce
new unwanted serializability anomalies.

After preprocessing, the algorithm iterates over all detected
anomalous access pairs~($\set{\dap}$) and attempts to repair them one
by one using $\func{try\_repair}$. This function attempts to eliminate
a given anomaly in two different ways; either by merging anomalous
database commands into a single command, and/or by removing one of
them by making it obsolete.
%The former is feasible only
%    if both commands are of the same kind (e.g. both are selects); this is captured
%    by the condition in line 3.
In the remainder of this section, we present these two strategies in
more detail, using the running example from
\autoref{fig:fix_example_regSt}.

\begin{figure}[t]
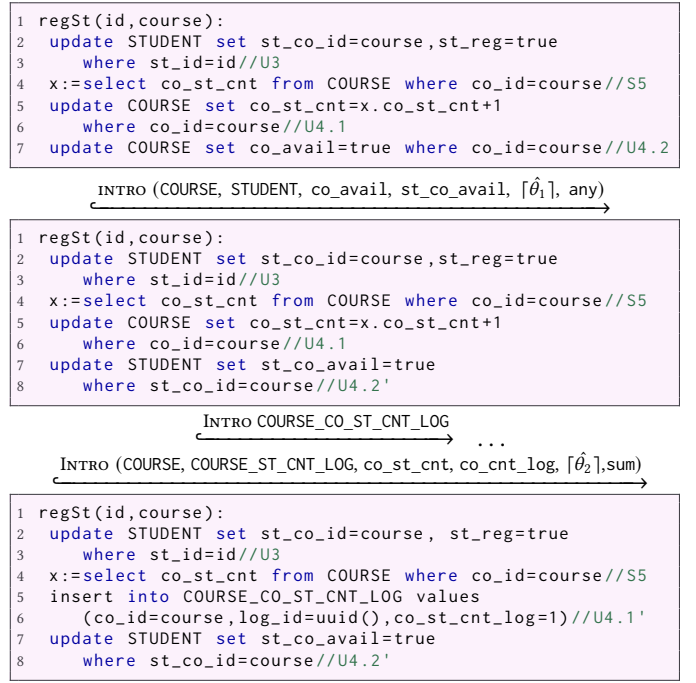


\definecolor{pgrey}{rgb}{0.26,0.25,0.28}
\definecolor{javared}{rgb}{0.2,0.2,0.7} % for strings
\definecolor{javagreen}{rgb}{0.2,0.45,0.3} % comments
\definecolor{javapurple}{rgb}{0.5,0,0.35} % keywords
\definecolor{javadocblue}{rgb}{0.25,0.35,0.75} % javadoc
\definecolor{weborange}{RGB}{0,75,0}

\lstset{language=Java,
  basicstyle=\ttfamily\scriptsize,
  breaklines=true,
  backgroundcolor=\color{Main-6},
  frame=single,
  rulecolor=\color{pgrey},
  keywordstyle=\color{black},
  stringstyle=\color{javared},
  commentstyle=\color{javagreen},
  morecomment=[s][\color{javadocblue}]{/**}{*/},
  numbers=left,
  %title=\footnotesize{getStudent(id)},
  xleftmargin=0.3em,
  framexleftmargin=1em,
  numberstyle=\tiny\color{pgrey},
  stepnumber=1,
  numbersep=5pt,
  tabsize=1,
  captionpos=b
  showspaces=false,
  showstringspaces=false,
  classoffset=2, % starting new class
  morekeywords={return,update,into,select,set,from,where},
  keywordstyle=\color{sql},
  moredelim=[is][\textcolor{red}]{\%\%}{\%\%},
}

%
\begin{comment}
\begin{minipage}[b]{0.48\textwidth}
\begin{lstlisting}[]
regSt(id,course):
 update STUDENT set st_co_id=course, st_reg=true
    where st_id=id //U3
 x:=select co_st_cnt from COURSE where co_id=course //S5
 update COURSE set co_st_cnt=x.co_st_cnt+1, co_avail=true
    where co_id=course //U4
\end{lstlisting}
\vspace{-2mm}
\end{minipage}
\footnotesize[\texttt{pre-process: split U4 to U4.1 and U4.2}]\normalsize
\end{comment}
\begin{minipage}[b]{0.478\textwidth}
  \vspace{0.5mm}
\begin{lstlisting}[]
regSt(id,course):
 update STUDENT set st_co_id=course,st_reg=true
    where st_id=id//U3
 x:=select co_st_cnt from COURSE where co_id=course//S5
 update COURSE set co_st_cnt=x.co_st_cnt+1
    where co_id=course//U4.1
 update COURSE set co_avail=true where co_id=course//U4.2
\end{lstlisting}
\vspace{-1mm}
\end{minipage}

$\hstepl{}{\textsc{intro $(\func{COURSE}, \;
\func{STUDENT}, \;
\func{co\_avail},\;
\func{st\_co\_avail},\;
\lceil\hat{\theta_1}\rceil,\;
\func{any})$}}{}$

\begin{minipage}[b]{0.478\textwidth}
\begin{lstlisting}[]
regSt(id,course):
 update STUDENT set st_co_id=course,st_reg=true
    where st_id=id//U3
 x:=select co_st_cnt from COURSE where co_id=course//S5
 update COURSE set co_st_cnt=x.co_st_cnt+1
    where co_id=course//U4.1
 update STUDENT set st_co_avail=true
    where st_co_id=course//U4.2'
\end{lstlisting}
\vspace{-1mm}
\end{minipage}

$\hstepl{}{\textsc{Intro}\;\func{COURSE\_CO\_ST\_CNT\_LOG}}{}\;\;\ldots$
$\hstepl{}{\textsc{Intro}\;(\func{COURSE},\; \func{COURSE\_ST\_CNT\_LOG},\;
\func{co\_st\_cnt},\; \func{co\_cnt\_log},\; \lceil\hat{\theta_2}\rceil,\func{sum})}{}$

\begin{minipage}[b]{0.478\textwidth}
\begin{lstlisting}[]
regSt(id,course):
 update STUDENT set st_co_id=course, st_reg=true
    where st_id=id//U3
 x:=select co_st_cnt from COURSE where co_id=course//S5
 insert into COURSE_CO_ST_CNT_LOG values
    (co_id=course,log_id=uuid(),co_st_cnt_log=1)//U4.1'
 update STUDENT set st_co_avail=true
    where st_co_id=course//U4.2'
\end{lstlisting}
\vspace{-2mm}
\end{minipage}

%\footnotesize[\texttt{post-process: merge U4.2' with U3 + remove S5}]\normalsize

\lstset{language=Java,
backgroundcolor=\color{Main-7}}

\begin{comment}
\begin{minipage}[b]{0.47\textwidth}
  \vspace{0.5mm}
\begin{lstlisting}[]
regSt(id,course):
 update STUDENT set st_co_id=course, st_co_avail=true,
    st_reg=true where st_id=id
 insert into COURSE_CO_ST_CNT_LOG values
    (co_id=course,log_id=uuid(),co_st_cnt_log=1)
\end{lstlisting}
\end{minipage}
\end{comment}

\lstset{language=Java,
  backgroundcolor=\color{Main-7},
}

%\begin{minipage}[b]{\textwidth}
%\begin{lstlisting}[]
%regSt(id,course):
%  update STUDENT set st_co_id=course, st_co_avail=true, st_reg=true where st_id=id
%  insert into COURSE_ST_CNT_LOG values (co_id=course,log_id=uuid(),co_st_cnt_log=1)
%\end{lstlisting}
%\end{minipage}

\caption{Repair steps of transaction $\func{regSt}$}
\label{fig:fix_example_regSt}
\end{figure}

We first explain the merging approach. Two database commands can only
be merged if they are of the same kind (e.g. both are
$\func{selects}$) and if they both access the same schema.  These
conditions are checked in lines 2-3.  Function $\func{try\_merge}$
attempts to merge the commands if it can establish that their where
clauses always select the exact same set of records, i.e. condition
(R1) described in \autoref{sec:refactoring-semantics}.

Unfortunately, database commands involved in anomalies are rarely on
the same schema and cannot be merged as they originally are. Using the
refactoring rules discussed earlier, \tool{} attempts to introduce
value correspondences so that the anomalous commands are redirected to
the same table in the refactored program and thus mergeable. This is
captured by the call to the procedure $\func{try\_redirect}$. This
procedure first introduces a set of fields into the schema accessed by
$c_1$, each corresponding a field accessed by $c_2$.  Next, it
attempts to introduce a sequence of value correspondences between the
two schemas using the redirect rule, such that $c_2$ is redirected to
the same table as $c_1$. The record correspondence is constructed by
analyzing the commands' where clauses and identifying equivalent
expressions used in their constraints.  If redirection is successful,
$\func{try\_merge}$ is invoked on the commands and the result is
returned (line~6).

For example, consider commands $\func{U3}$ and $\func{U4.2}$ in
\autoref{fig:fix_example_regSt} (top), which are involved in the
anomaly $\dap_1$. By introducing a value correspondence from $\func{COURSE}$ to
$\func{STUDENT}$, \tool{} refactors the program into a refined version
where $\func{U4.2}$ is transformed into $\func{U4.2'}$ and is
mergeable with $\func{U3}$.

Merging is sufficient to fix $\dap_1$, but fails to eliminate
$\dap_2$. The repair algorithm next tries to translate database
updates into an equivalent insert into a logging table using the
$\func{try\_logging}$ procedure.This procedure first introduces a new
logging schema (using the \textsc{intro $\rho$} rule) and then
introduces fields into that schema (using \textsc{intro $\rho.f$}).
It then attempts to introduce a value correspondence from the schema
involved in the anomaly to the newly introduced schema using the
logger rule.  The function returns successfully if such a translation
exists and if the select command involved in the anomaly becomes
obsolete, i.e., the command is dead-code.  For example, in
\autoref{fig:fix_example_regSt}, a value correspondence from
$\func{COURSE}$ to the logger table $\func{COURSE\_CO\_ST\_CNT}$ is
introduced, which translates the $\func{update}$ command involved in
the anomaly to an $\func{insert}$ command. The select command is
obsolete in the final version, since variable $x$ is never used.

Once all anomalies have been iterated over, \tool{} performs a
post-processing phase on the program to remove any remaining dead code
and merge commands whenever possible.  For example, the transaction
$\func{regSt}$ is refactored into its final version depicted in
\autoref{fig:over_example_refactored} after post-processing. Both
anomalous accesses ($\dap_1$ and $\dap_2$) are eliminated in the final
version of the transaction.

    %% Although the order in which the procedure treats anomalies is
    %% chosen non-deterministically, we compared different orderings in
    %% our experiments and did not observe any differences between the
    %% output programs.

%\BD{It's a little unclear to me where dead-code
%      elimination happens-- isn't it only needed after a logger
%      refactoring? Why do we do it in post processing too?}

    %Although the order in which the procedure treats anomalies is
    %chosen non-deterministically, we compared different orderings in
    %our experiments and did not observe any differences between the
    %output programs.

\section{Implementation}
\label{sec:impl}

\tool{} is a fully automated static analyzer and program repair tool
implemented in Java. Its input programs are written in a DSL similar
to the one described in~\autoref{fig:syntax}, but it would be
straightforward to extend the front-end to support popular database
programming APIs, e.g. JDBC or Python's DB-API.  \tool{} consists of a
static anomaly detection engine and a program refactoring engine and
outputs the repaired program.  The static anomaly detector in \tool{}
adapts existing techniques to reason about serializability violations
over abstract executions of a database application~\cite{BR18,KR18}.
In this approach, detecting a serializability violation is reduced to
checking the satisfiability of an FOL formula constructed from the
input program. This formula includes variables for each of the
transactional dependencies, as well as the visibility and global
time-stamps that can appear during a program's execution. The
assignments to these variables in any satisfying model can be used to
reconstruct an anomalous execution.  We use an off-the-shelf SMT
solver, Z3~\cite{Moura:2008:Z3}, to check for anomalies in the input
program and identify a set of anomalous access pairs. These access
pairs are then used by an implementation of the \textsf{repair}
algorithm build a repaired version of the input program.

\section{Evaluation}
\label{sec:eval}
%\BD{Correctness doesn't seem quite right... maybe ``safety'' or
%  ``reliability''? }

This section evaluates \tool{} along two dimensions:
\begin{enumerate}
    \item {\normalfont \bf Effectiveness:}
    Does schema refactoring eliminate serializability
  anomalies in real-world database applications? Is \tool{}  capable of
  repairing meaningful concurrency bugs?
  \item {\normalfont \bf Performance:}
  What impact does \tool{} have on the performance of
  refactored programs? How does \tool{} compare to other solutions to
  eliminating serializability anomalies, in particular by relying on
  stronger database-provided consistency guarantees?
\end{enumerate}

\subsection{Effectiveness}
To assess \tool' effectiveness, we applied it to a corpus of
standard benchmarks from the database community, including TPC-C,
SEATS and SmallBank~\cite{BR18,
  Difallah:OLTP_Bench, pldi15, GO16,Rahmani:2019:Clotho}. \autoref{tab:main} presents the results for each
benchmark. The first four columns display the number of transactions
(\#Txns), the number of tables in the original and refactored schemas
(\#Tables), and the number of anomalies detected assuming eventually
consistent guarantees for the original (EC) and refactored (AT)
programs.  For each benchmark, \tool{} was able to repair at least
half the anomalies, and in many cases substantially more, suggesting
that many serializability bugs can be directly repaired by our schema
refactoring technique.
% The total time needed by \tool\ to analyze the original program and
% construct the refactored program is also presented.

In order to compare our approach to other means of anomaly elimination
-- namely, by merely strengthening the consistency guarantees provided
by the underlying database -- we modified \tool's anomaly oracle to
only consider executions permitted under causal consistency and
repeatable read; the former enforces causal ordering in the visibility
relation, while the latter prevents results of a newly committed
transaction $T$ becoming visible to an executing transaction that has
already read state that is written by $T$.
% \footnote{This can be done by
% strengthening the encoding according to each guarantee's
% specification.  For example, CC executions are captured by the
% following axiom which enforces that any effect visible to $\eta'$
% should also be visible to effects that witness $\eta'$:
% $\forall \eta, \eta', \eta''.\; \mathsf{vis}(\eta,\eta') \wedge
% \mathsf{vis}(\eta',\eta'') \Rightarrow \mathsf{vis}(\eta,\eta'')$ }
The next two columns of \autoref{tab:main}, (CC) and (RR), show the
result of this analysis: causal consistency was only able to reduce
the number of anomalies in one benchmark (by 12\%) and repeatable read
in three (by 5\%, 15\% and 16\%). This suggests that only relying on
isolation guarantees between eventual and sequential consistency is
not likely to significantly reduce the number of concurrency bugs
that manifest in an EC execution. % The final column
% presents the total time needed by \tool{} to analyze and repair each
% program.

As a final measure of \tool's impact on correctness, we carried out a
more in-depth analysis of the SmallBank benchmark, in order to
understand \tool's ability to repair meaningful concurrency bugs. This
benchmark maintains the details of customers and their accounts, with
dedicated tables holding checking and savings entries for each
customer. By analyzing this and similar banking applications from the
literature~\cite{GO16,KJ18,WB17}, we identified three invariants to be
preserved by each transaction\footnote{Detailed descriptions of each
  invariant can be found in the supplementary materials of our submission}:
\begin{comment}
  \begin{enumerate*}[label=(\roman*)]
  \item The balance of both accounts must always be non-negative,
  \item Each account must accurately reflect the history of deposits
    to that account, and
  \item Each client must always witness a consistent state of her
    checking and savings accounts. For example, when transferring
    money between accounts, users should not see a state where the
    money is deducted from the checking account but not yet deposited
    into savings.
  \end{enumerate*}
\end{comment}
\noindent Interestingly, we were able to detect violations of \emph{all
three} invariants in the original program under EC, while the repaired
program violated \emph{only one}. This is evidence that the statically
identified serializability anomalies eliminated by \tool{} are
meaningful proxies to the application-level invariants that developers
care about.

\begin{comment}
\BD{This analysis is a bit of an orphan, as it does not address either
  of the main evaluation questions... We should discuss what to do
  with it.}  Lastly, we investigated the utility of using the results
of our oracle to drive \tool's repair procedure.  For these
experiments we removed the initial phase of analysis and instead
introduced tables and fields randomly.  The complete description of
our experiments and the results can be found in \autoref{app:eval}.
In summary, the vast majority (97\%) of random refactorings did not
eliminate \emph{any} of the anomalies. Even those experiments that
managed to repair some anomalies still resulted in a program with many
more bugs than that returned by \tool{}'s oracle-guided repair
strategy.
\end{comment}
\begin{comment}
\vspace{2mm}
\noindent\fbox{%
    \parbox{0.465\textwidth}{%
    {\bf Summary. }
       Our experiments show that \tool{} is capable of analyzing real-world benchmarks and eliminates an average of 73\% of anomalies through schema refactoring. It is also shown to be more effective at eliminating anomalies compared to well-known weak consistency guarantees.
    }%
}
\end{comment}

\setlength{\tabcolsep}{2pt}
\begin{table}[t] \centering
\begin{footnotesize}
\begin{tabular}{@{}lclccccc@{}}
\toprule
%\vspace{-0.5mm}
\textbf{Benchmark} & \textbf{\#Txns} & \thead{ \textbf{\#Tables} } &
                                                                     \textbf{EC} & \textbf{AT} & \textbf{CC} & \textbf{RR} & \thead{\textbf{Time\,(s)}} \vspace{-0.4mm}\\
\midrule
\textbf{TPC-C}
%& Warehouse/Order Management System
~\cite{TPCC,K19} & 5 & \hspace{2mm} 9, 16 & 33 & 8 & 33 & 33
%& \hspace{1mm} 76\,\%
& 81.2  \\
\textbf{SEATS}
%& Flight Search and Ticketing System
~\cite{Difallah:OLTP_Bench, seats}& 6 &\hspace{2mm} 8, 12 & 35 & 10 & 35 & 33
%& \hspace{1mm} 71\%
& 61.5  \\
\textbf{Courseware}
%& Online Education Framework
~\cite{GO16, KA18}& 5 & \hspace{2mm} 3, 2 & 5 & 0 & 5 & 5
%& \hspace{1mm} 100\%
&  12.7 \\
\textbf{SmallBank}
%& Electronic Banking System
~\cite{Difallah:OLTP_Bench, pldi15} & 6 & \hspace{2mm} 3, 5 & 24 & 8 & 21 & 20
%& \hspace{1mm} 66\%
&  68.7 \\
\textbf{Twitter}
%& Simple Micro-blogging Service
~\cite{Difallah:OLTP_Bench} & 5 &\hspace{2mm} 4, 5 & 6 & 1 & 6 & 5
%& \hspace{1mm} 83\%
&  3.6  \\
\textbf{FMKe}
%& Health Care and Patient Management
~\cite{fmke} & 7 & \hspace{2mm}  7, 9 & 6 & 2 & 6 & 6
%& \hspace{1mm} 66\%
&  33.6  \\
\textbf{SIBench}
%& Snapshot Isolation Microbenchmark
~\cite{Difallah:OLTP_Bench}& 2 &\hspace{2mm} 1, 2 & 1 & 0 & 1 & 1
%& \hspace{1mm} 100\%
&  0.3 \\
\textbf{Wikipedia}
%& Simple Online Encyclopedia
~\cite{Difallah:OLTP_Bench} & 5 & \hspace{2mm}  12, 13 & 2 & 1 & 2 & 2
%&\hspace{1mm} 50\%
&  9.0 \\
\textbf{Killrchat}
%& Online Chatroom Management
~\cite{killrchat, BR18} & 5 & \hspace{2mm}  3, 4 & 6 & 3 & 6 & 6
%& \hspace{1mm} 50\%
& 42.9
\\
\bottomrule
\end{tabular}
\end{footnotesize}
\vspace{3mm}
\caption{Statically identified anomalous access pairs in the original
  and refactored benchmark programs. Time\,(s) holds the total time
  to analyze and repair each benchmark.}
\label{tab:main}
\vspace{-5mm}
\end{table}

\definecolor{darkgreen}{rgb}{0.0, 0.5, 0.0}
\definecolor{darkyello}{rgb}{0.8, 0.73, 0.031}
\definecolor{cobalt}{rgb}{0.0, 0.28, 0.67}
%%%%%%%%%%%%%%%%%%%%%%%%%%%%%%%%%%%%%%%%%%%%%%%%%%%%%%%%%%%%%%%%%%%%%%%%%%%%%%%%%%%%%%%%

\begin {figure*}[t]
\begin{minipage}{0.95\textwidth}
\begin{footnotesize}
% VA
\begin{subfigure}[b]{0.33\textwidth}
\centering
  \begin{minipage}{\textwidth}
  \vspace{-5mm}
  \includegraphics[width=\textwidth]{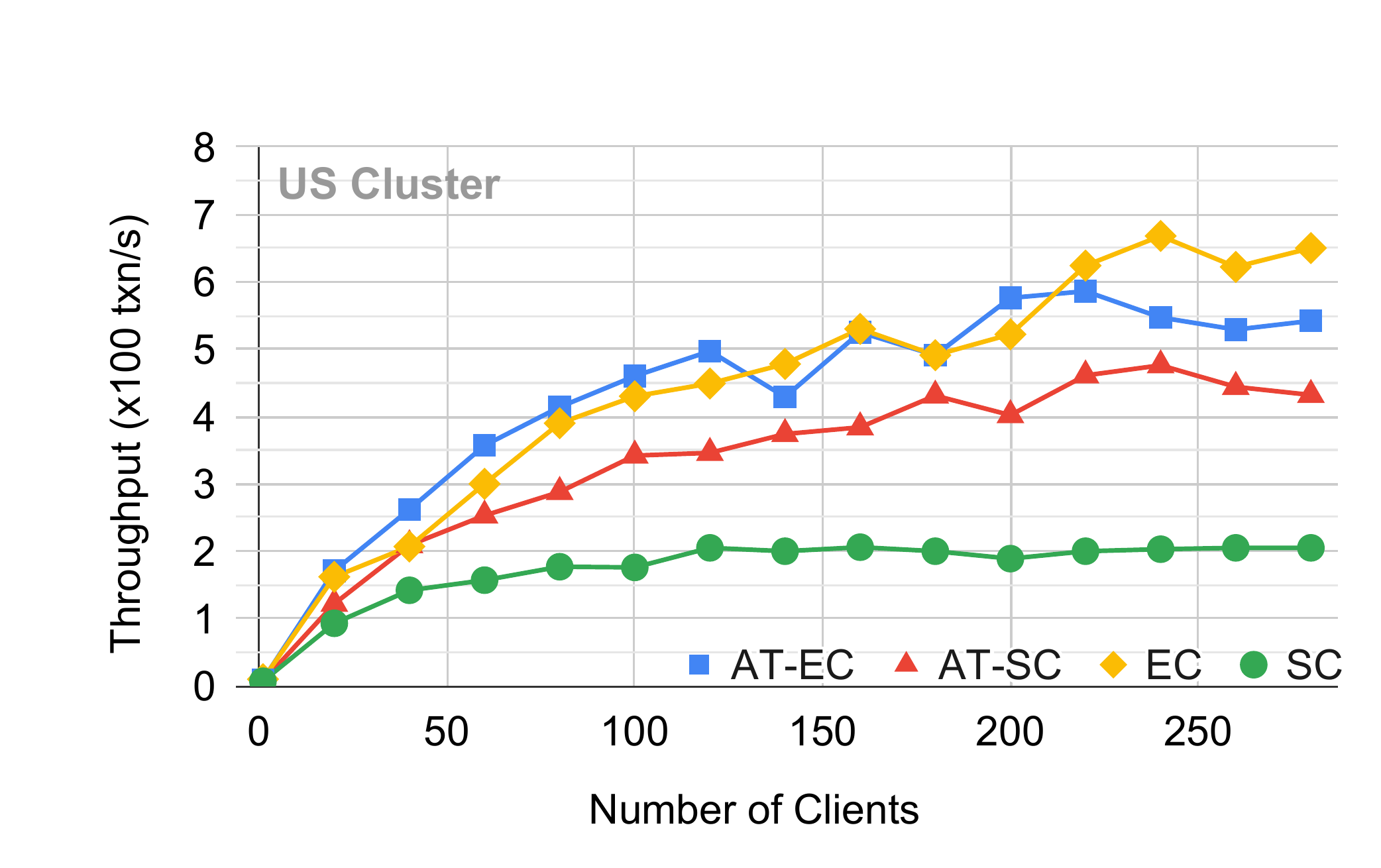}
  \end{minipage}
  \begin{minipage}{\textwidth}
    \vspace{-94mm}
  \includegraphics[width=\textwidth]{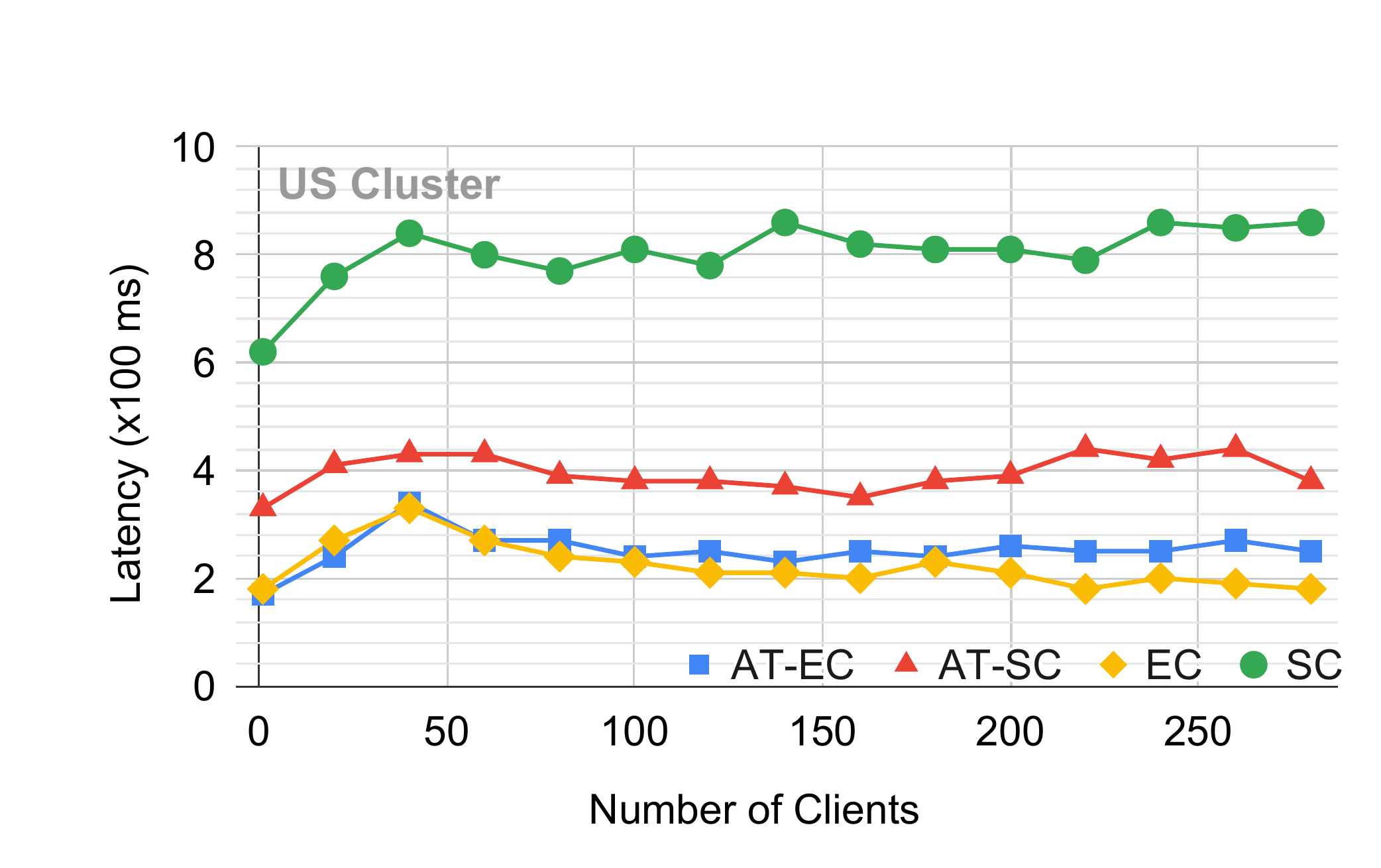}
  \end{minipage}
 \caption{SmallBank}
\end{subfigure}
%%
% US
\begin{subfigure}[b]{0.33\textwidth}
\begin{minipage}{\textwidth}
  \vspace{-5mm}
  \includegraphics[width=\textwidth]{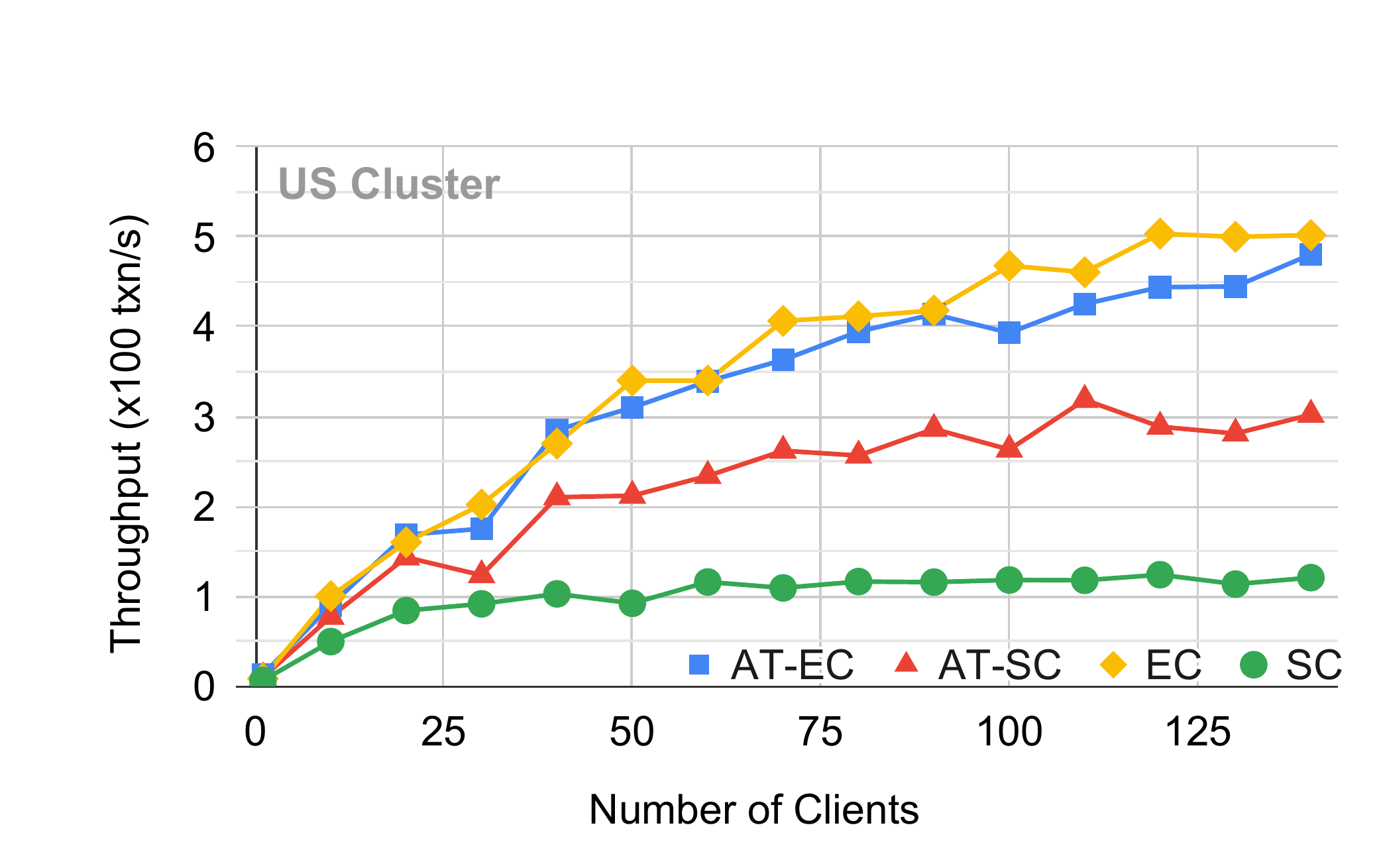}
  \end{minipage}
  \begin{minipage}{\textwidth}
    \vspace{-94mm}
  \includegraphics[width=\textwidth]{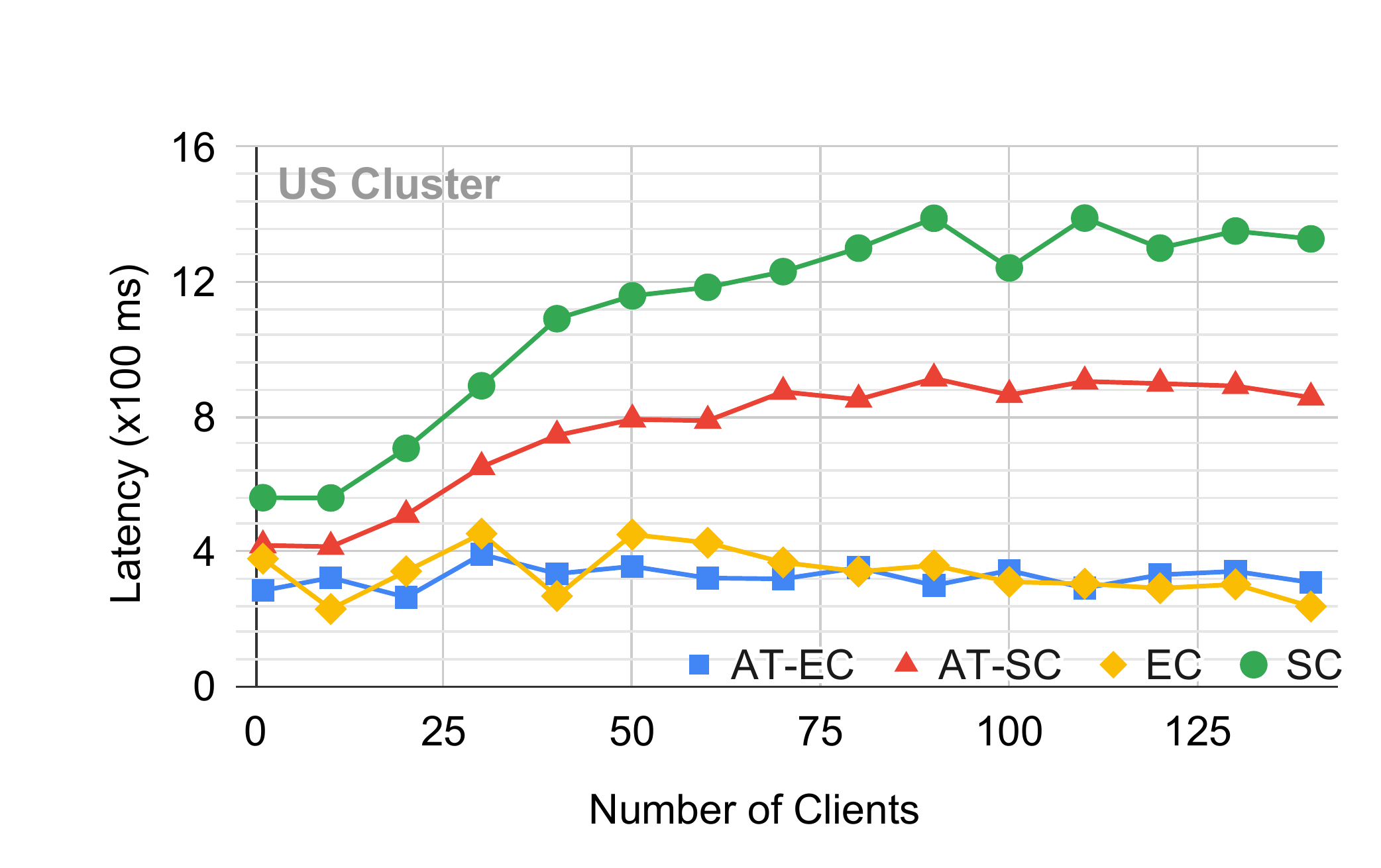}
  \end{minipage}
\caption{SEATS}
\end{subfigure}
%%
% GLOBAL
\begin{subfigure}[b]{0.33\textwidth}
\centering
\vspace{30mm}
\begin{minipage}{\textwidth}
  \vspace{-5mm}
  \includegraphics[width=\textwidth]{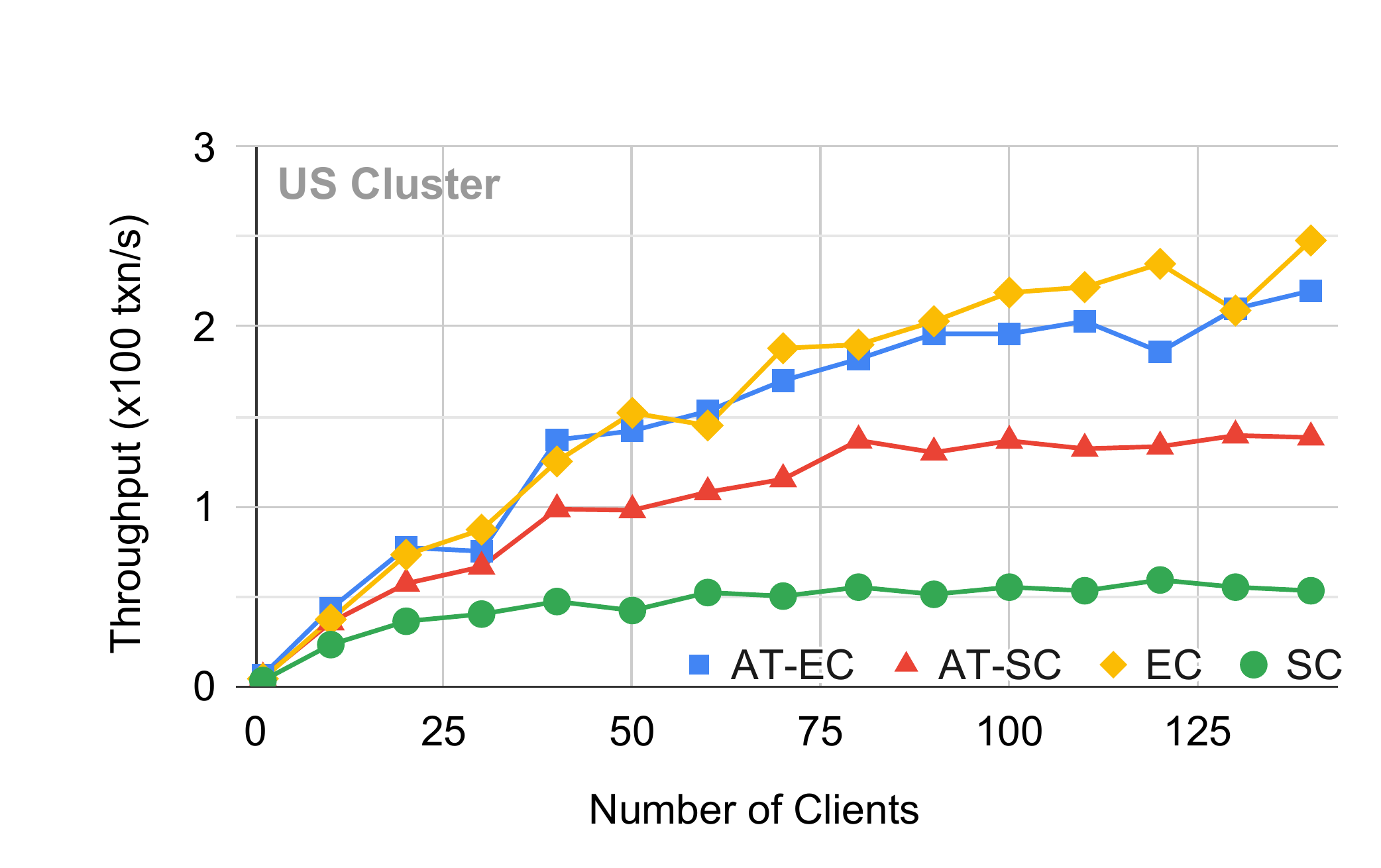}
   \end{minipage}
  \begin{minipage}{\textwidth}
    \vspace{-94mm}
  \includegraphics[width=\textwidth]{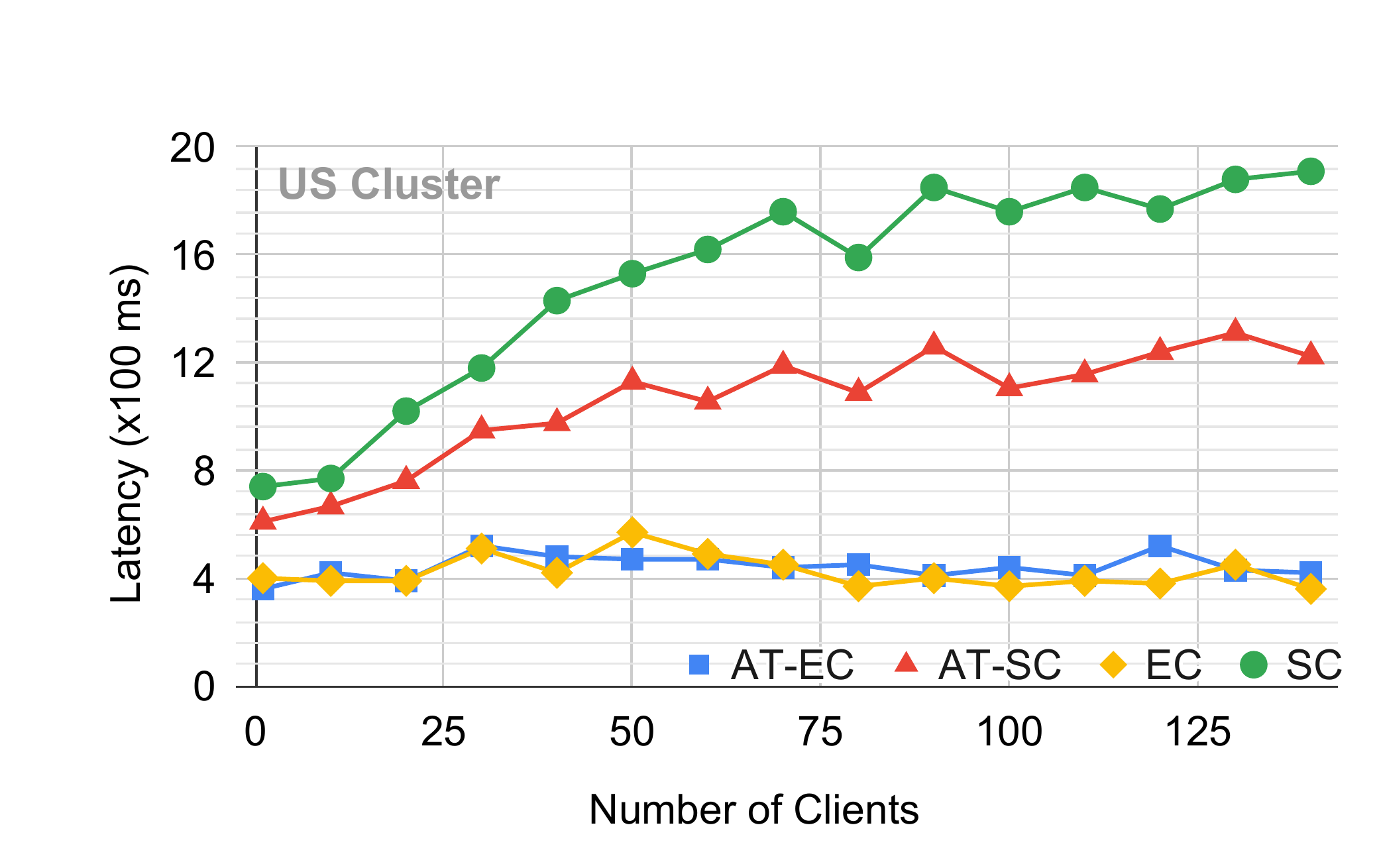}
    \end{minipage}
 \caption{TPC-C}
\end{subfigure}
\end{footnotesize}
\vspace{-5mm}
  \caption{Performance evaluation of SmallBank, SEATS and TPC-C benchmarks running on US cluster (see the supplementary materials for results of experiments on other two clusters). }
  \label{fig:perf}
  \end{minipage}
\end{figure*}

\subsection{Performance}
To evaluate the performance impact of schema refactoring, we conducted
further experiments on a real-world geo-replicated database cluster,
consisting of three AWS machines (M10 tier with 2 vCPUs and 2GB of
memory) located across US in N. Virginia, Ohio and Oregon. Similar
results were exhibited by experiments on a single data center and
globally distributed clusters; see supplementary materials
for more details.  Each node runs MongoDB (v.4.2.9), a modern document
database management system that supports a variety of data-model
design options and consistency enforcement levels. MongoDB documents
are equivalent to records and a collection of documents is equivalent
to a table instance, making all our techniques applicable to MongoDB
clients.

\autoref{fig:perf} presents the latency (top) and throughput (bottom)
of concurrent executions of SmallBank (left), SEATS (middle) and TPC-C
(right) benchmarks.  These benchmarks are representative of the kind
of OLTP applications best suited for our refactoring approach.
Horizontal axes show the number of clients, where each client
repeatedly submits transactions to the database according to each
benchmark's specification. Each experiment was run for 90 seconds and
the average performance results are presented. For each benchmark,
performance of four different versions of the program are compared:
\begin{enumerate*}[label=(\roman*)]
\item original version running under EC
  (\textcolor{darkyello}{$\blacklozenge$ EC}),
\item refactored version running under EC
  (\textcolor{cobalt}{$\blacksquare$ AT-EC}),
\item original version running under SC
  (\textcolor{darkgreen}{\CIRCLE\ SC}) and
\item refactored version where transactions with at least one anomaly
  are run under SC and the rest are run under EC
  (\textcolor{red}{$\blacktriangle$ AT-SC})
\end{enumerate*}.
Across all benchmarks, SC results in poor performance compared to EC,
due to lower concurrency and additional synchronization required
between the database nodes.  On the other hand, AT-EC programs show
negligible overhead with respect to their EC counterparts, despite
having fewer anomalies. Most interestingly, refactored programs show
an average of 120\% higher throughput and 45\% lower latency compared
to their counterparts under SC, while offering the \emph{same level of
  safety}. These results provide evidence that automated schema
refactoring can play an important role in improving both the
correctness and performance of modern database programs.

\begin{comment}
\vspace{2mm}
\noindent\fbox{%
    \parbox{0.465\textwidth}{%
    {\bf Summary. }
       Our experiments show that refactored programs incur negligible overhead while eliminating many concurrency bugs. Refactored programs are also shown to be faster (up to 100\%) when all problematic data-accesses are eliminated by enforcing strong consistency.
    }%
}
\end{comment}

\section{Related Work}
\label{sec:related}

\citet{Wang:2019:Synthesizing} describe a synthesis
procedure for generating programs consistent with a database
refactoring,
%Given a source database program $P$, a source schema
%$S$, and a target schema $S'$, their approach synthesizes a new
%program $P'$ that operates over $S'$ but which is semantically
%equivalent to $P$
as determined by a verification procedure that
establishes database program equivalence~\cite{Wang:2017:Verifying}.
Their synthesis procedure performs enumerative search over a template
whose structure is derived by value correspondences extracted by
reasoning over the structure of the original and refactored schemas.  Our approach
%, while superficially related,
has several important differences.  First, our
search for a target program is driven by anomalous access pairs that
identify serializability anomalies in the original program and does
not involve enumerative search over the space of \emph{all} equivalent
candidate programs.  This important distinction eliminates the need
for generating arbitrarily-complex templates or sketches.  Second,
because we \emph{simultaneously} search for a target schema and
program consistent with that schema given these access pairs, our
technique does not need to employ conflict-driven
learning~\cite{FM+18} or related mechanisms to guide a general
synthesis procedure as it recovers from a failed synthesis attempt.
Instead, value correspondences derived from anomalous access pairs
help define a restricted class of schema refactorings (e.g.,
aggregation and logging) that directly informs the structure of the
target program.  

Identifying serializability anomalies in database systems is a
well-studied topic that continues to garner
attention~\cite{BE95,BE87,LBL04,FE05b,JO07}, although the issue of
automated repair is comparatively less explored.  A common approach in
all these techniques is to model interactions among concurrently
executing database transactions as a graph, with edges connecting
transactions that have a data dependency with one another; cycles in
the graph indicate a possible serializability violation.  Both
dynamic~\cite{WB17,BDM+17} and
static~\cite{BR18,KR18,Rahmani:2019:Clotho} techniques have been
developed to discover these violations in various domains and
settings.

%Techniques have been
%developed to discover these violations dynamically.  For example,
%\citet{WB17} use program traces to identify potential vulnerabilitis in
%Web applications that exploit weak isolation while
%\citet{BDM+17} present a dynamic analysis
%technique for discovering serializability in an eventually consistent
%distributed setting.  Follow-on work~\cite{BR18} develops scalable
%static methods under stronger causally-consistent
%assumptions.  \citet{Rahmani:2019:Clotho} present a test generation
%tool for triggering serializability anomalies that builds upon a
%static detection framework described in~\cite{KR18}.

%The idea of exploring alterations to data explored here, rather than
%changes to control, is reminiscent of data-centric synchronization
%proposed by \citet{DH+12} that builds atomic sets with associated units
%of work.  The context of their investigation, concurrent Java
%programs, is quite different from ours; in particular, their solution
%does not consider sound schema refactorings, an integral part of our
%approach.
An alternative approach to eliminating serializability anomalies is to
develop correct-by-construction methods.  For example, to safely
develop applications for eventually-consistent distributed
environments, conflict-free replicated data-types (CRDTs)~\cite{crdt} 
have been proposed.  CRDTs are abstract data-types (e.g. sets,
counters) equipped with commutative operations whose semantics are
invariant with respect to the order in which operations are applied on
their state.  Alternatively, there have been recent efforts which
explore enriching specifications, rather than applications, with
mechanisms that characterize notions of correctness in the presence of
replication~\cite{HL19,pldi15}, using these specifications to guide
safe implementations.  These techniques, however, have not been
applied to reasoning on the correctness of concurrent relational
database programs which have highly-specialized structure and
semantics, centered on table-based operations over inter-related
schema definitions, rather than control- and data-flow operations over
a program heap.

\section{Conclusions}
\label{sec:conc}

This paper presents \tool, a database refactoring tool
intended to repair 
serializability violations. 
We have formalized the refactoring
procedure
%articulated how refactorings can be guided by a repair
%mechanism, 
and demonstrated experimental results that 
%justify the
%utility of our approach.  
%Our results show that
schema refactoring is a viable strategy for concurrency bug repair in
modern database applications.

%
%
%%%%%%%%%%%%%%%%%%%%%%%%%%%%%%%%%%%%%%%%%%%%%%%%%%%%%%

%% %% Acknowledgments
%% \begin{acks}                            %% acks environment is optional
%%                                         %% contents suppressed with 'anonymous'
%%   %% Commands \grantsponsor{<sponsorID>}{<name>}{<url>} and
%%   %% \grantnum[<url>]{<sponsorID>}{<number>} should be used to
%%   %% acknowledge financial support and will be used by metadata
%%   %% extraction tools.
%%   This material is based upon work supported by the
%%   \grantsponsor{GS100000001}{National Science
%%     Foundation}{http://dx.doi.org/10.13039/100000001} under Grant
%%   No.~\grantnum{GS100000001}{nnnnnnn} and Grant
%%   No.~\grantnum{GS100000001}{mmmmmmm}.  Any opinions, findings, and
%%   conclusions or recommendations expressed in this material are those
%%   of the author and do not necessarily reflect the views of the
%%   National Science Foundation.
%% \end{acks}

%% Bibliography
\bibliography{ref}

%% Appendix
%\newpage
\appendix
\section{Complete Evaluation Results}
\label{app:eval}
In this section we present additional evaluation results omitted from the paper.

\subsection{Performance on Global and Local Database Clusters}
Figures \ref{fig:perf_smallbank}, \ref{fig:perf_seats}
and \ref{fig:perf_tpcc} present our experimental results from running SmallBank, SEATS and TPC-C benchmark on three different distributed database clusters. Results for US cluster (middle) are also presented in the paper. 
The VA cluster (left) consists of three AWS nodes all located in the same data center in N. Virginia. The global cluster (right) consists of three nodes located in N. Virgina, London and Tokyo. As we mentioned in the paper, 
M10 tier machines with 2 vCPUs and 2GB 
memory are used and each node runs MongoDB (v.4.2.9).

\begin {figure*}[h]  
% VA
\begin{subfigure}[b]{0.32\textwidth}  
\centering
  \includegraphics[width=\textwidth]{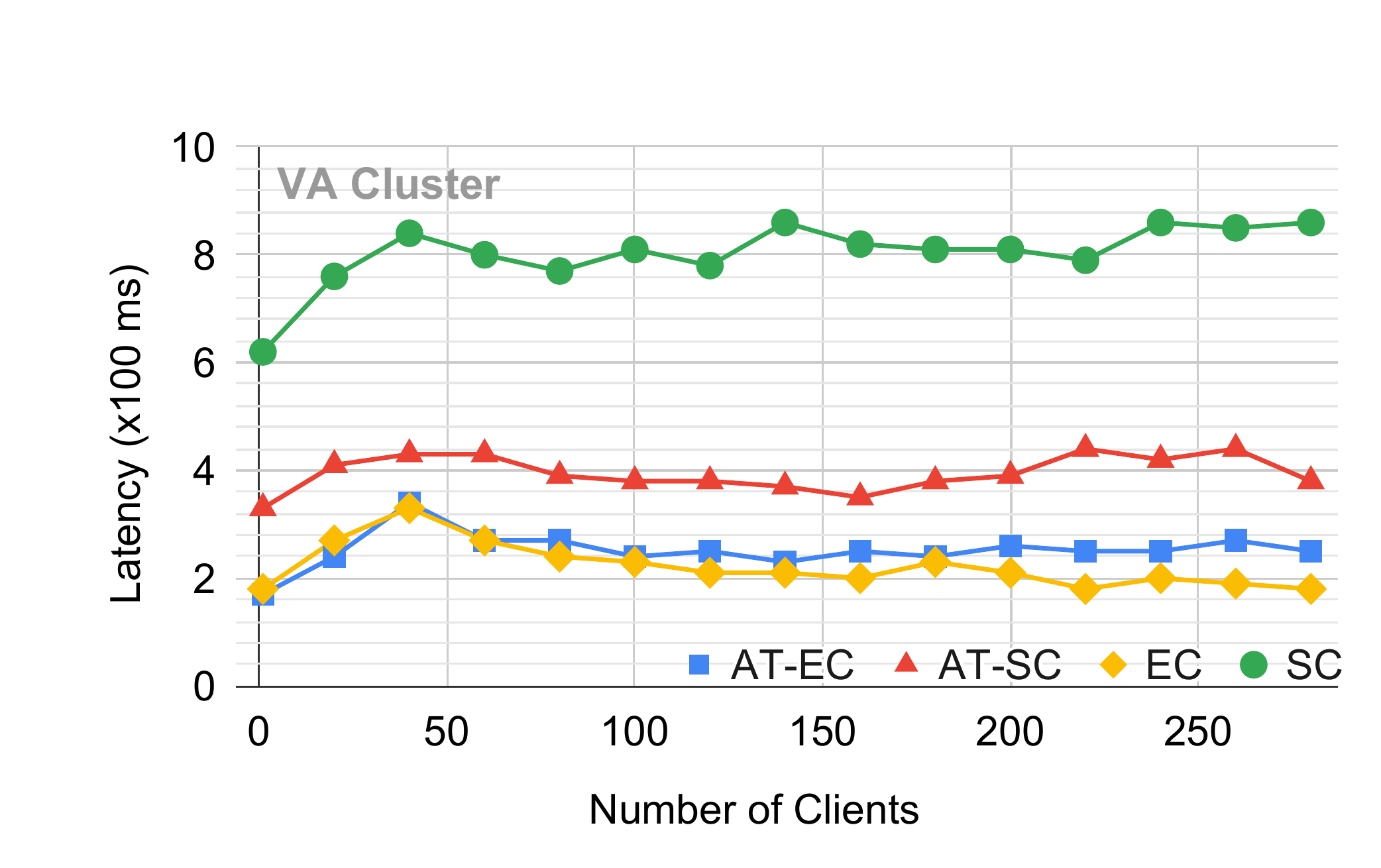}
\end{subfigure}
%%
% US
\begin{subfigure}[b]{0.32\textwidth}  
\includegraphics[width=\textwidth]{figures/evaluation/sb_2.pdf}
\end{subfigure}
%%
% GLOBAL
\begin{subfigure}[b]{0.32\textwidth}  
\centering
  \includegraphics[width=\textwidth]{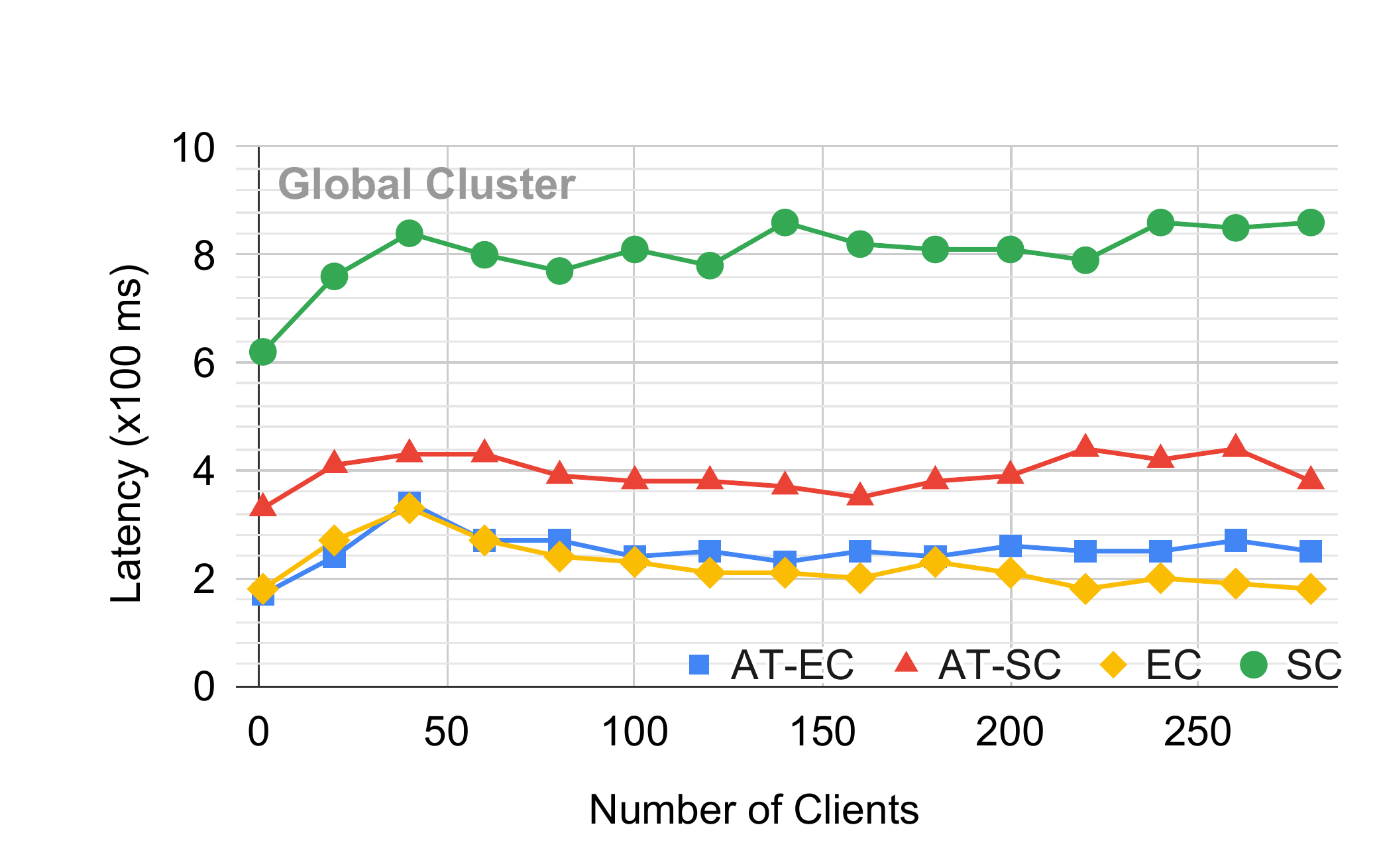}
\end{subfigure}
% VA
\begin{subfigure}[b]{0.32\textwidth}  
\centering
  \includegraphics[width=\textwidth]{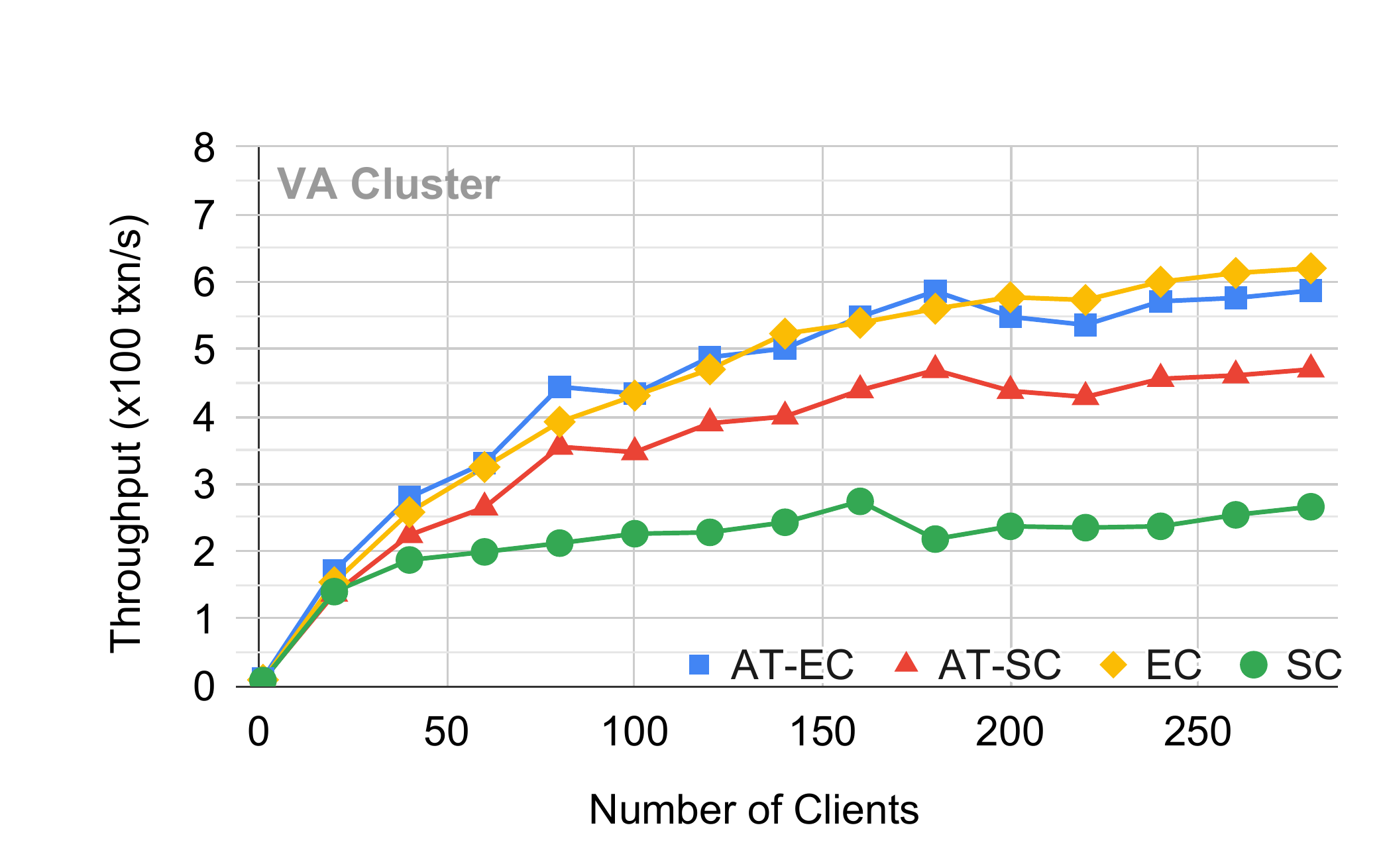}
\end{subfigure}
%%
% US
\begin{subfigure}[b]{0.32\textwidth}  
  \includegraphics[width=\textwidth]{figures/evaluation/sb_5.pdf}
\end{subfigure}
%%
% GLOBAL
\begin{subfigure}[b]{0.32\textwidth}  
\centering
  \includegraphics[width=\textwidth]{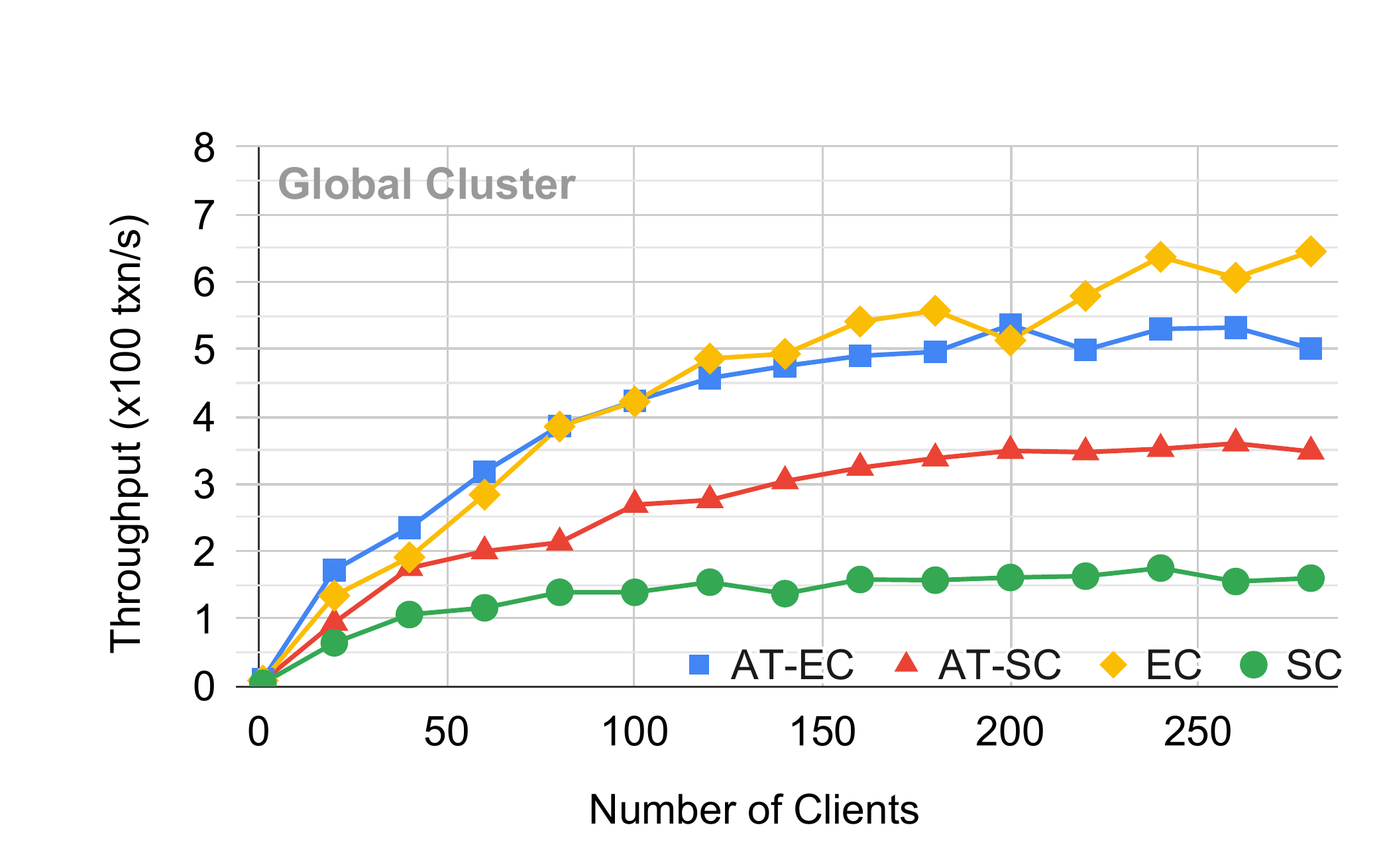}
\end{subfigure}
  \caption{Performance evaluation of SmallBank benchmark}
  \label{fig:perf_smallbank}
\end{figure*}

\begin {figure*}[h]  
% VA
\begin{subfigure}[b]{0.32\textwidth}  
\centering
  \includegraphics[width=\textwidth]{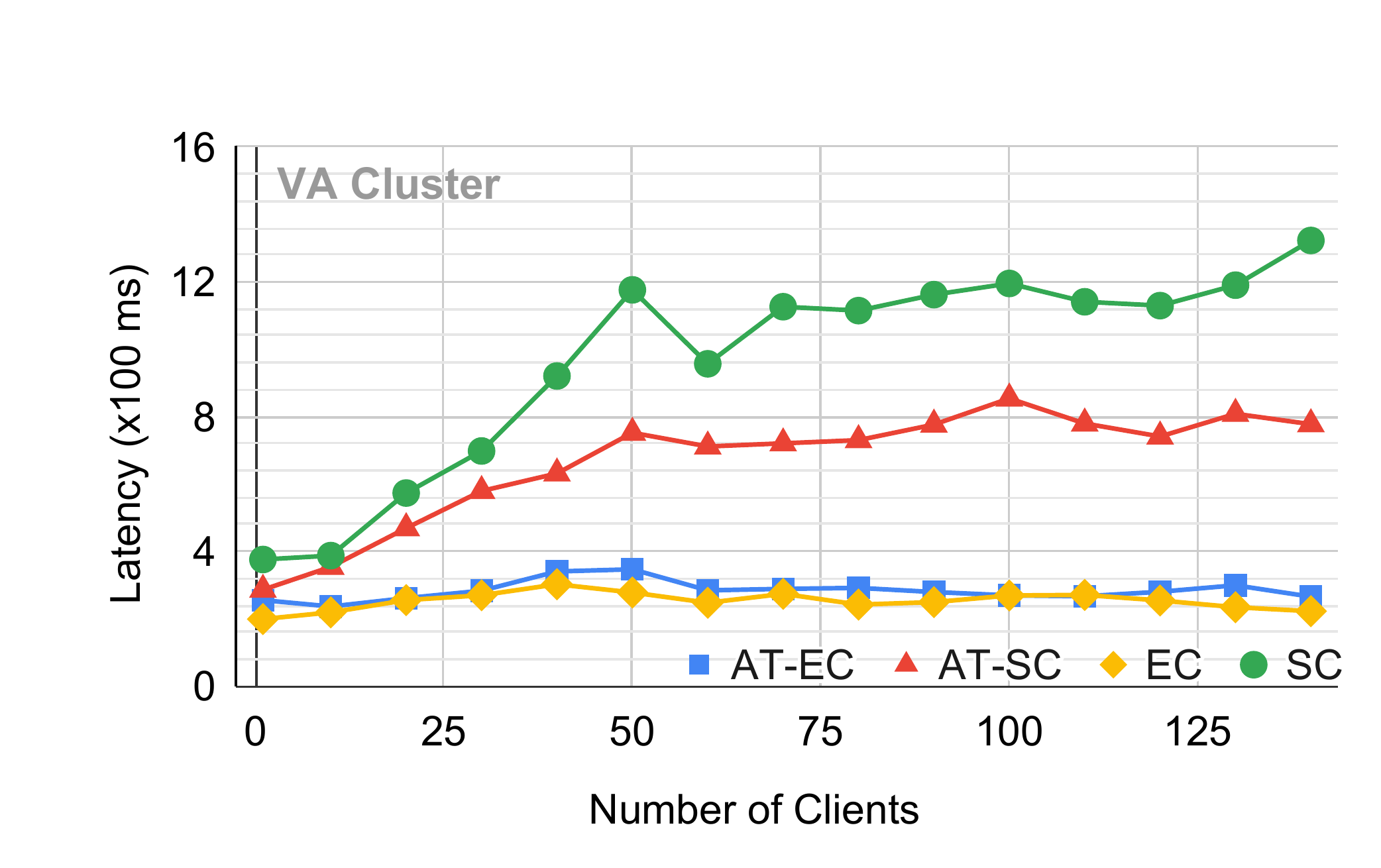}
\end{subfigure}
%%
% US
\begin{subfigure}[b]{0.32\textwidth}  
\includegraphics[width=\textwidth]{figures/evaluation/seats_2.pdf}
\end{subfigure}
%%
% GLOBAL
\begin{subfigure}[b]{0.32\textwidth}  
\centering
  \includegraphics[width=\textwidth]{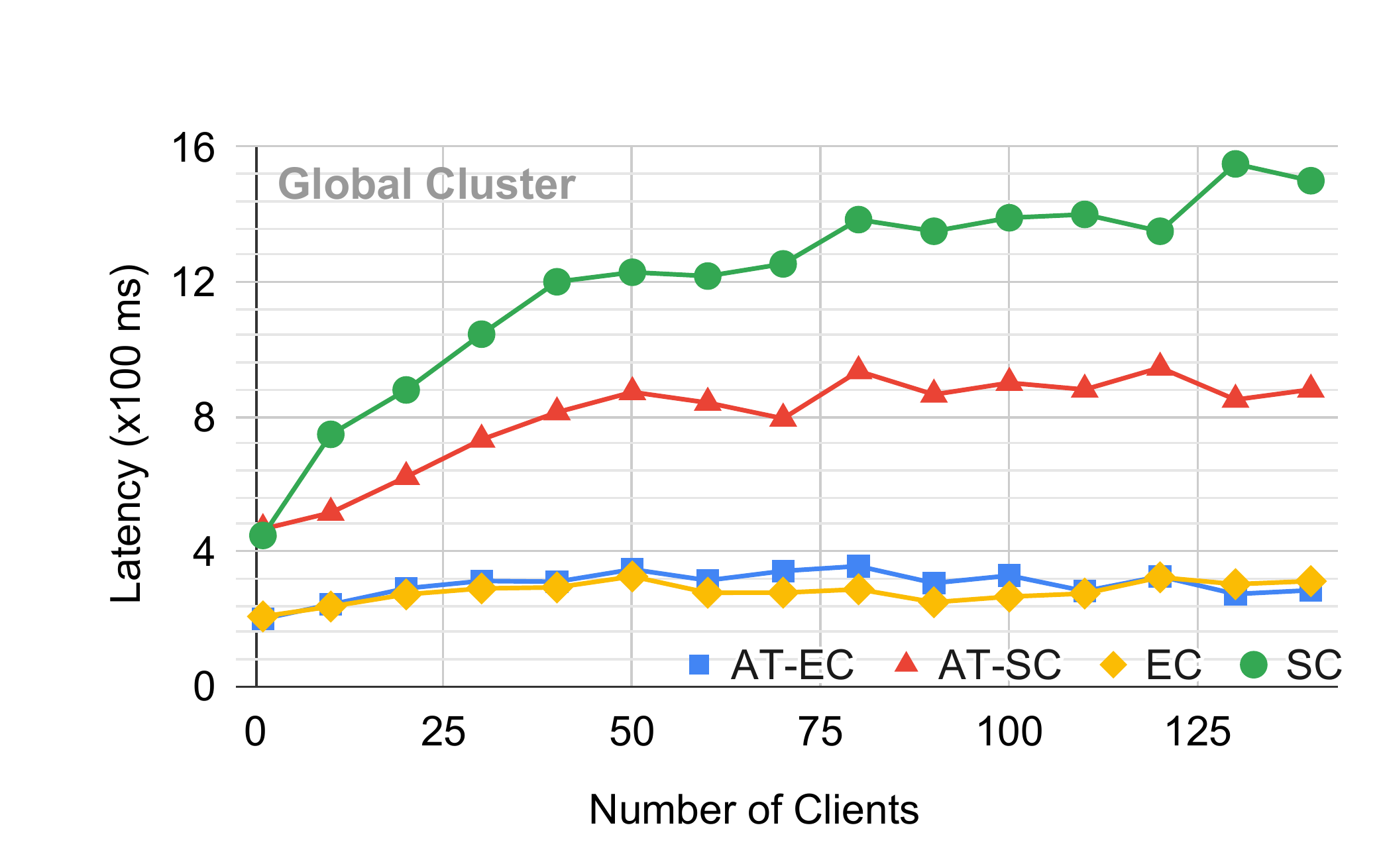}
\end{subfigure}
% VA
\begin{subfigure}[b]{0.32\textwidth}  
\centering
  \includegraphics[width=\textwidth]{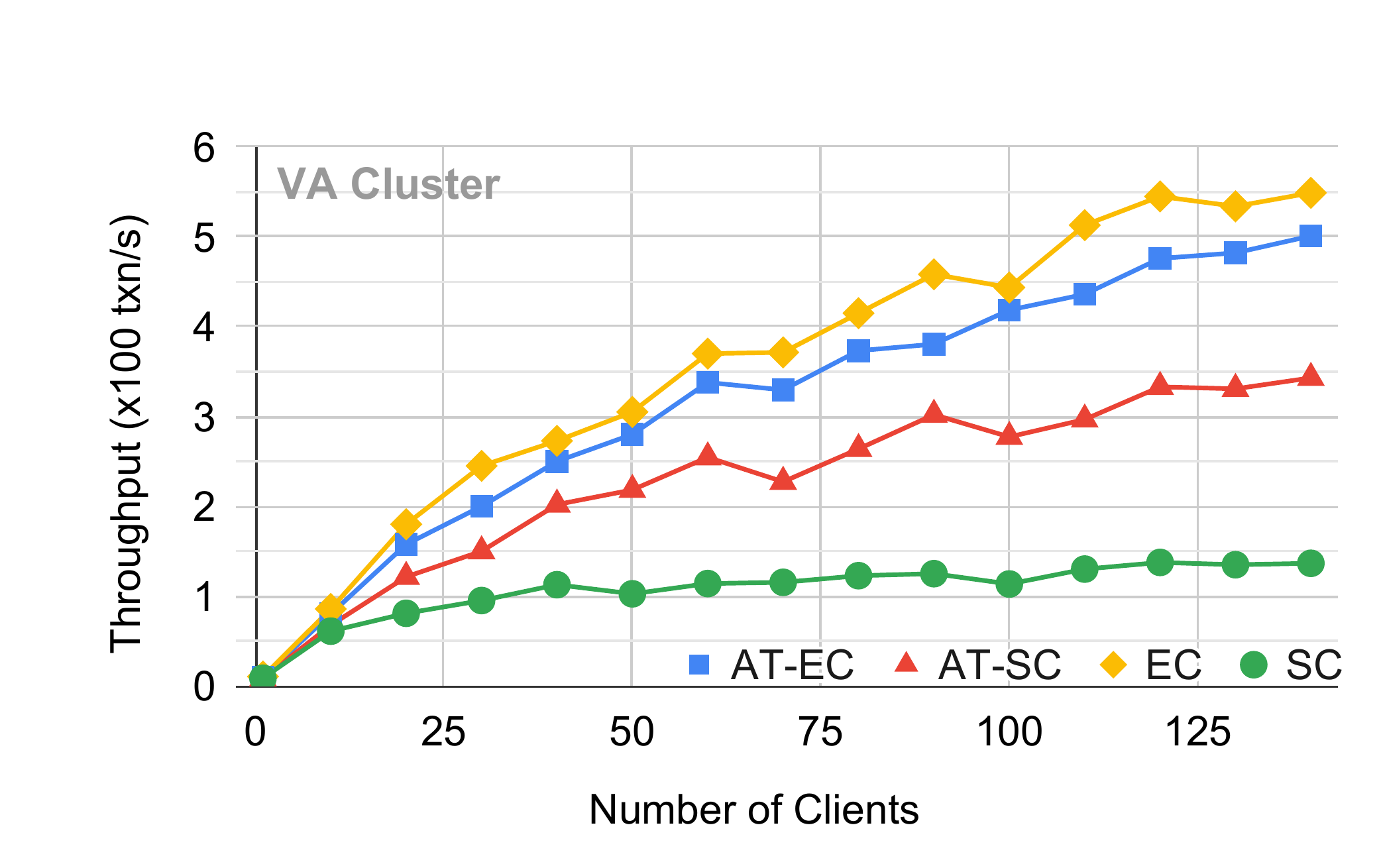}
\end{subfigure}
%%
% US
\begin{subfigure}[b]{0.32\textwidth}  
  \includegraphics[width=\textwidth]{figures/evaluation/seats_5.pdf} 
\end{subfigure}
%%
% GLOBAL
\begin{subfigure}[b]{0.32\textwidth}  
\centering
  \includegraphics[width=\textwidth]{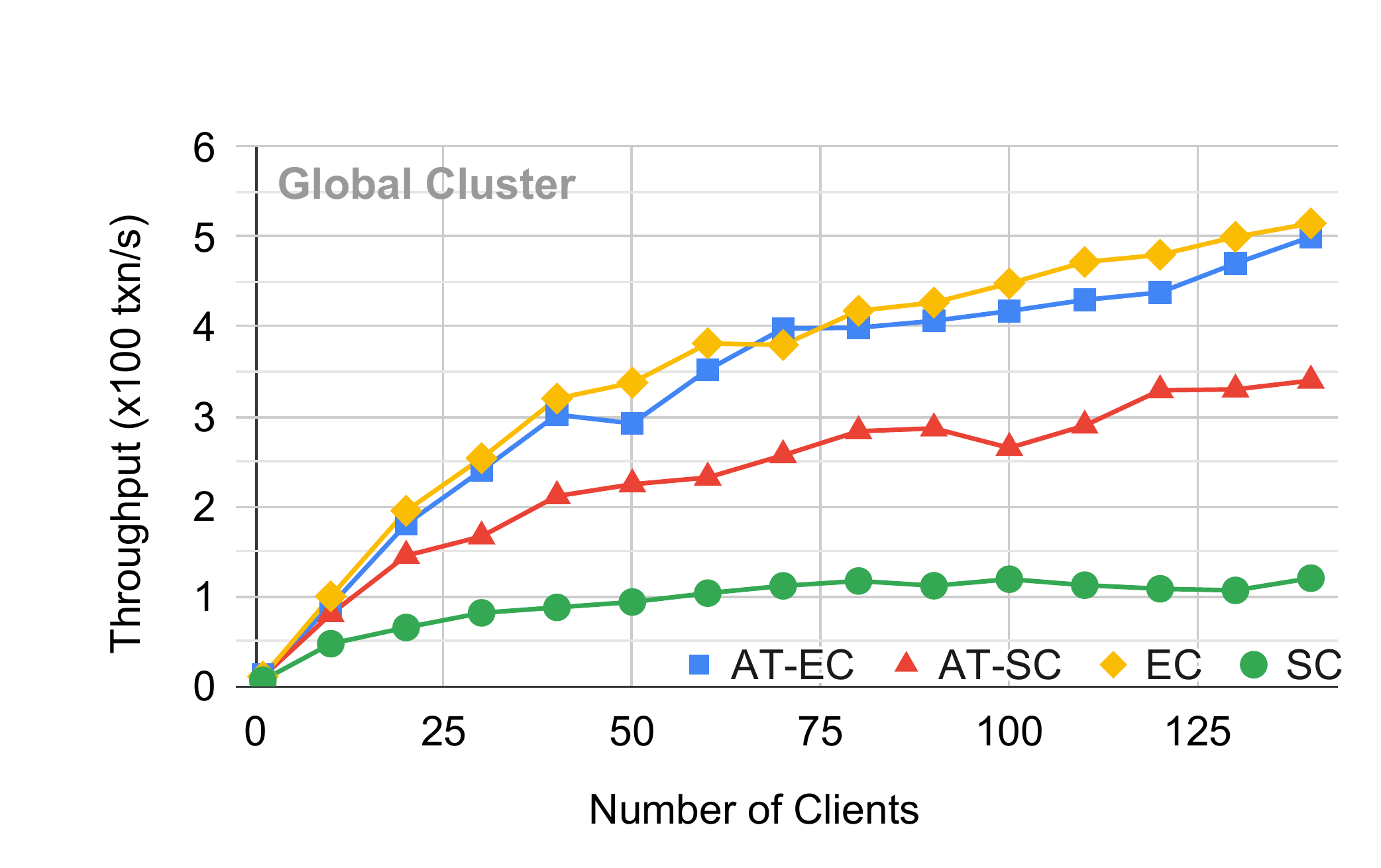}
\end{subfigure}
  \caption{Performance evaluation of SEATS benchmark}
  \label{fig:perf_seats}
\end{figure*}

\begin {figure*}[h]  
% VA
\begin{subfigure}[b]{0.32\textwidth}  
\centering
  \includegraphics[width=\textwidth]{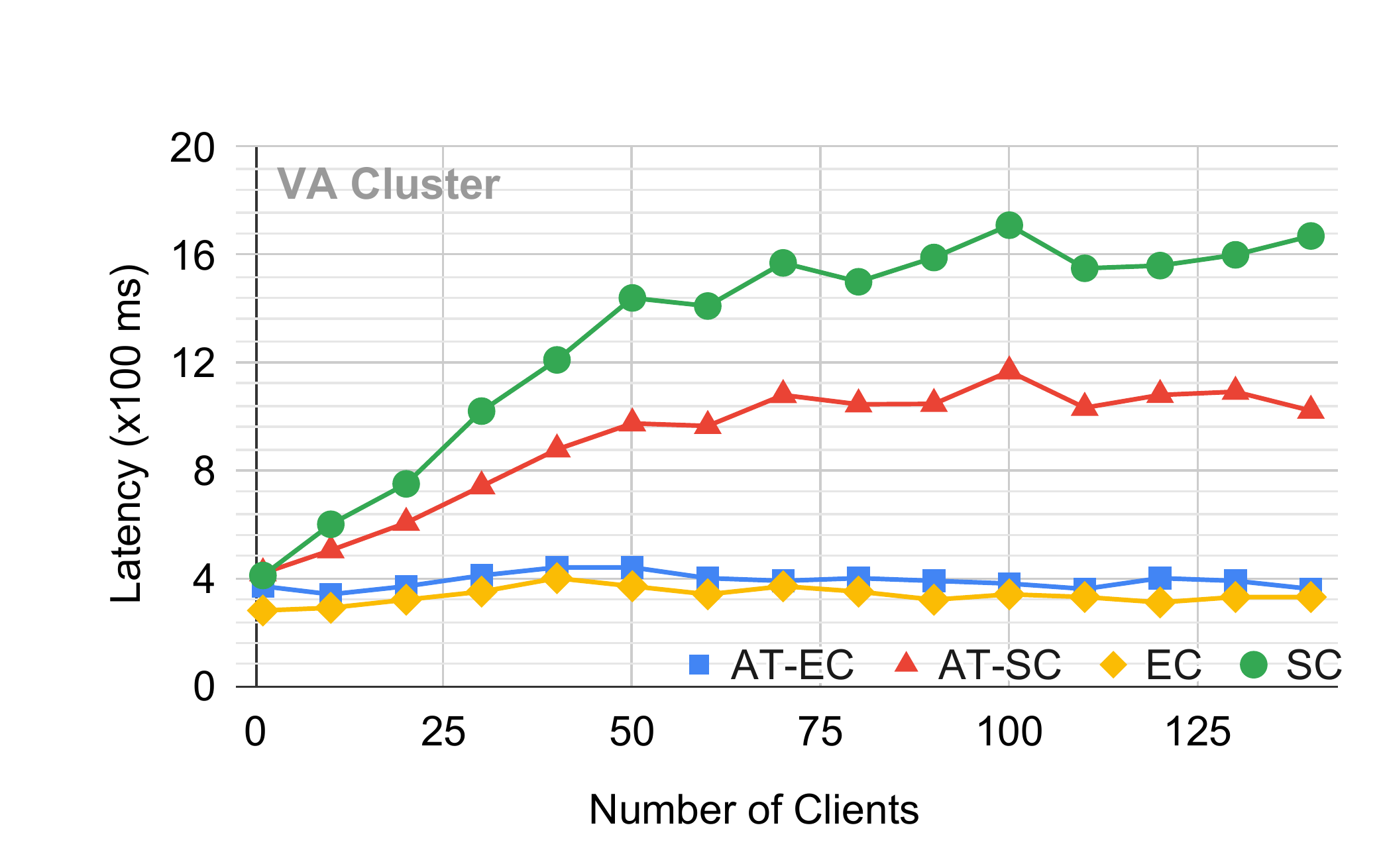}
\end{subfigure}
%%
% US
\begin{subfigure}[b]{0.32\textwidth}  
\includegraphics[width=\textwidth]{figures/evaluation/tpcc_3.pdf}
\end{subfigure}
%%
% GLOBAL
\begin{subfigure}[b]{0.32\textwidth}  
\centering
  \includegraphics[width=\textwidth]{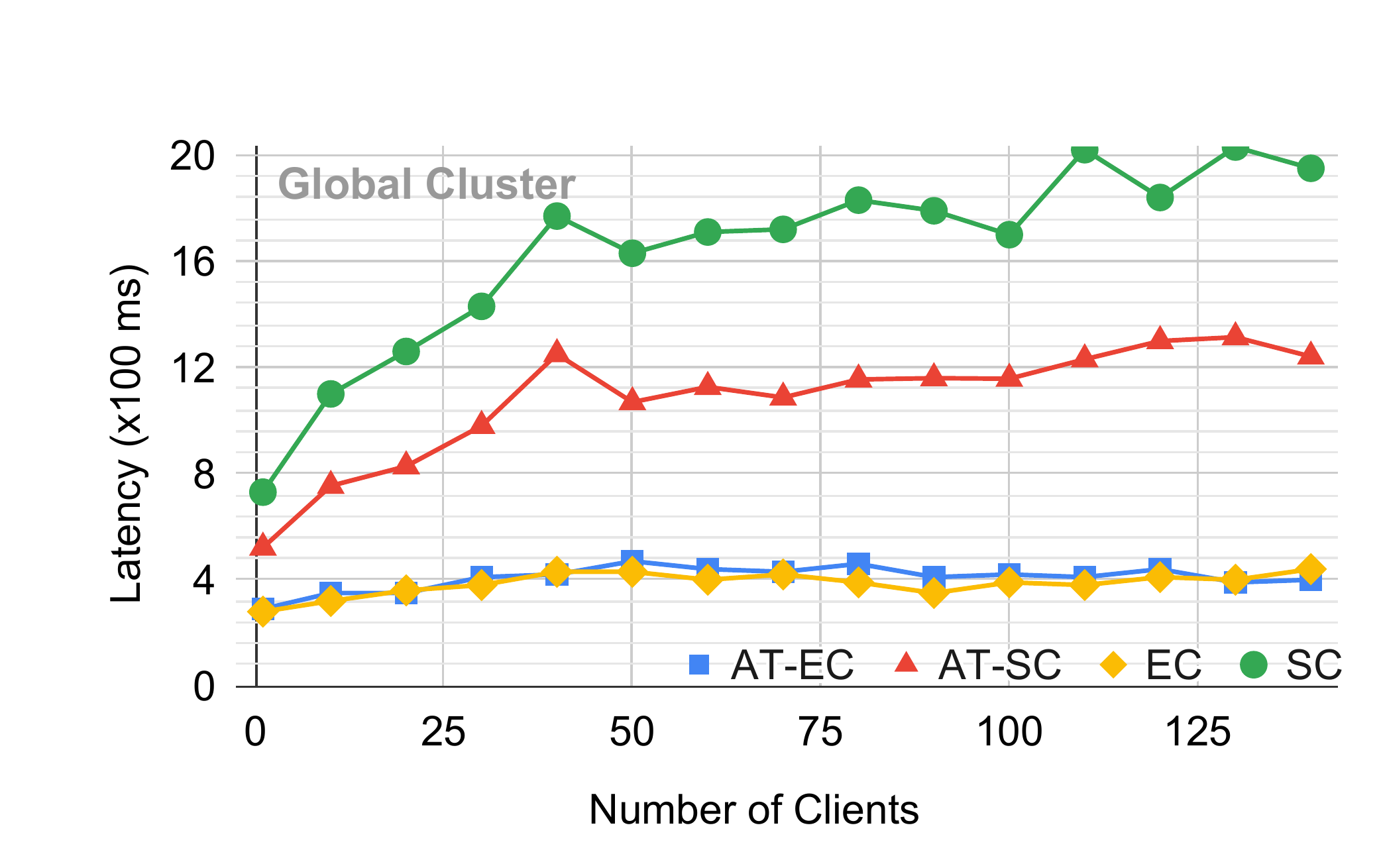}
\end{subfigure}
% VA
\begin{subfigure}[b]{0.32\textwidth}  
\centering
  \includegraphics[width=\textwidth]{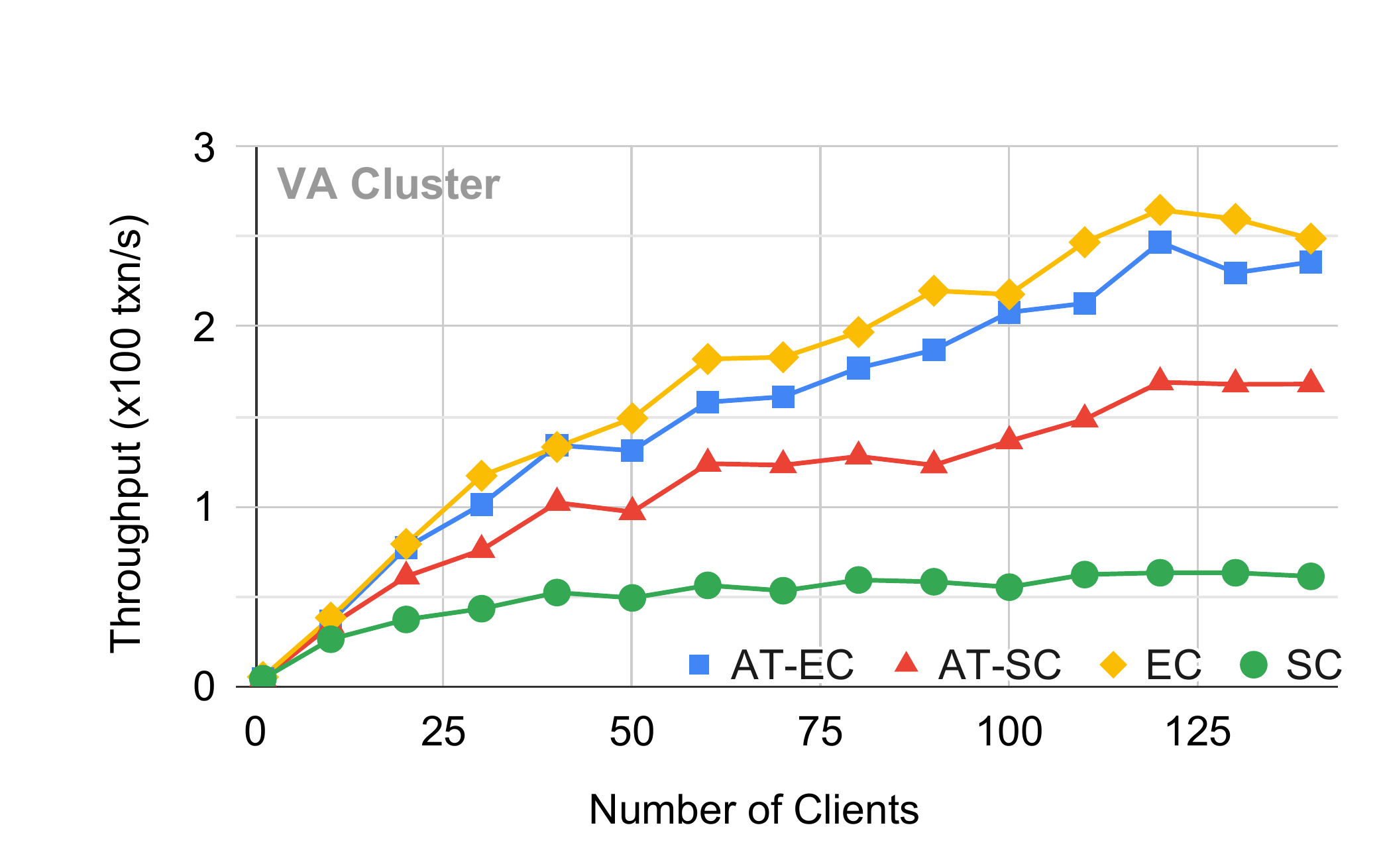}
\end{subfigure}
%%
% US
\begin{subfigure}[b]{0.32\textwidth}  
  \includegraphics[width=\textwidth]{figures/evaluation/tpcc_6.pdf}
\end{subfigure}
%%
% GLOBAL
\begin{subfigure}[b]{0.32\textwidth}  
\centering
  \includegraphics[width=\textwidth]{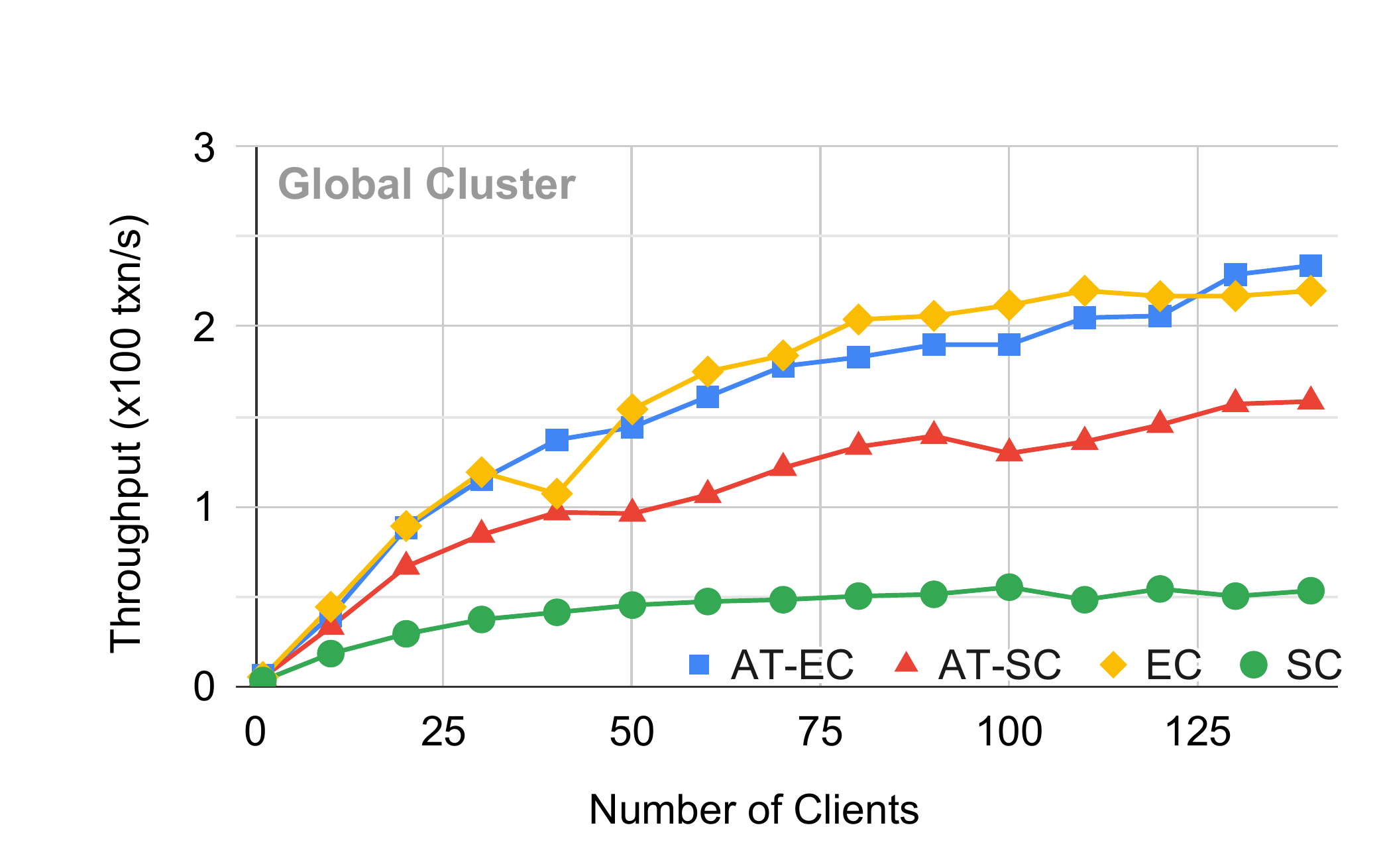}
\end{subfigure}
  \caption{Performance evaluation of TPC-C benchmark}
  \label{fig:perf_tpcc}
\end{figure*}

\subsection{Application-Level Invariants}
We ran further experiments on the
SmallBank benchmark.  This benchmark has six transactions which
maintain the details of customers and their accounts. Specifically,
each customer is assigned a checking entry and a savings entry in dedicated
tables.  By analysing this benchmark and similar
banking applications from the literature, we
defined the following application-level invariants which should be
preserved after the execution of any transaction:
  \begin{enumerate}
    \item The checking and saving balance of accounts must always be non-negative,
    \item Each account must reflect the correct total amounts based on the history of
      deposits performed on that account,
    \item Each client must always witness a consistent state of her checking and savings
      accounts. For example, when transferring money from one to the other, users
      must not witness an intermediate state where the money is deducted from
      the checking account, but is not deposited into savings.
  \end{enumerate}

\subsection{Comparison with Random Refactoring}
We investigated the utility of using the results of our oracle to
drive our repair procedure. For these experiments we removed the
initial phase of analysis and instead randomly introduced
tables and fields and prescribed value correspondences between them.
The results of these experiments for the three benchmarks with the
highest number of anomalies is presented in
\autoref{fig:eval_random2}. 
  Each
experiment was repeated for 5 hours, where at each round of experiments 10 random
refactorings were applied in the program. Each red dot in
\autoref{fig:eval_random2} records the number of anomalies at the end
of each round of (random) refactoring; the blue line is the number
of anomalies in the repaired program produced by \tool{}.  In our experiments, the vast majority of random
refactorings did not eliminate \emph{any} of the anomalies.  Even
those experiments that managed to repair some anomalies still resulted
in a program with many more serializability bugs than that returned by
\tool{}'s oracle-guided repair strategy.

\begin {figure*}[h]  
% TPC-C
\begin{subfigure}[h]{0.32\textwidth}  
\centering
  \includegraphics[width=\textwidth]{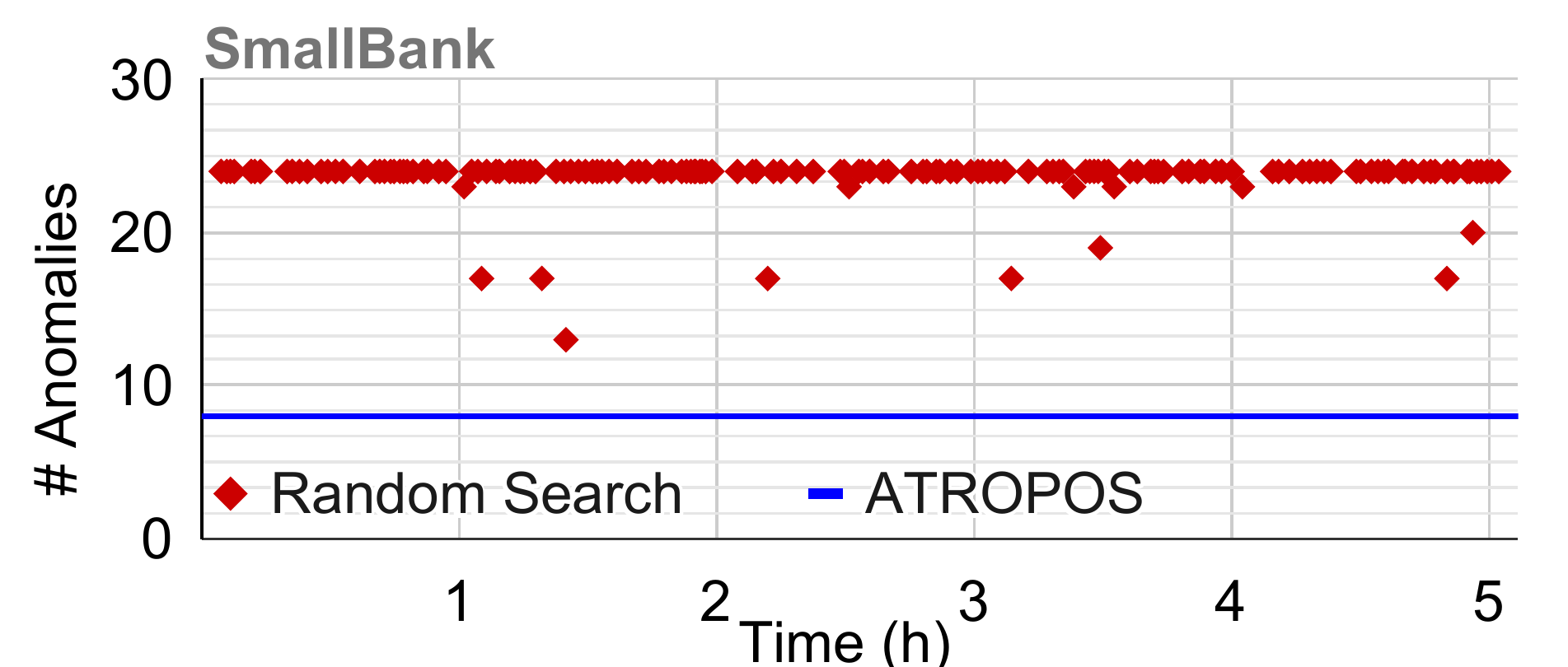}
 % \caption*{SmallBank}
\end{subfigure}
%%
% SEATS
\begin{subfigure}[h]{0.32\textwidth}  
\centering
  \includegraphics[width=\textwidth]{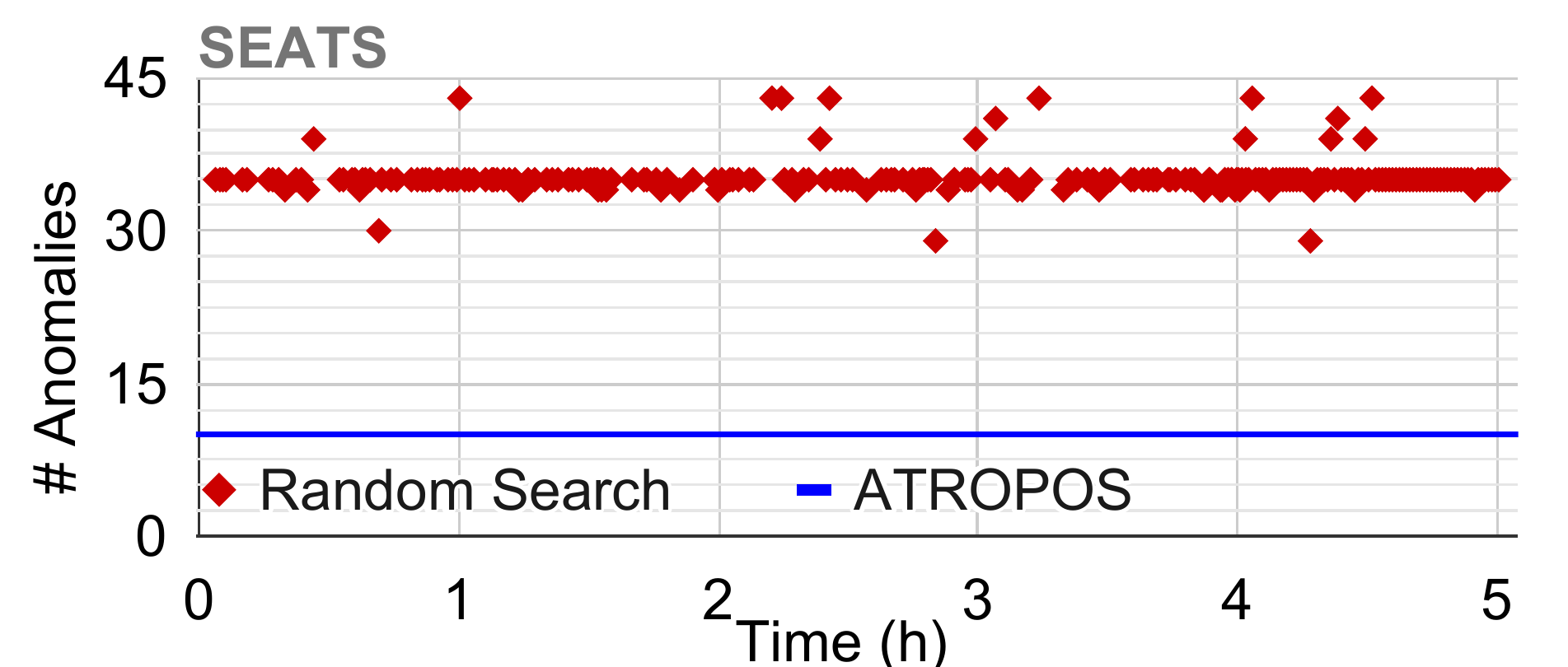}
 % \caption*{SEATS}
\end{subfigure}
%%
% TICKET
\begin{subfigure}[h]{0.32\textwidth}  
\centering
  \includegraphics[width=\textwidth]{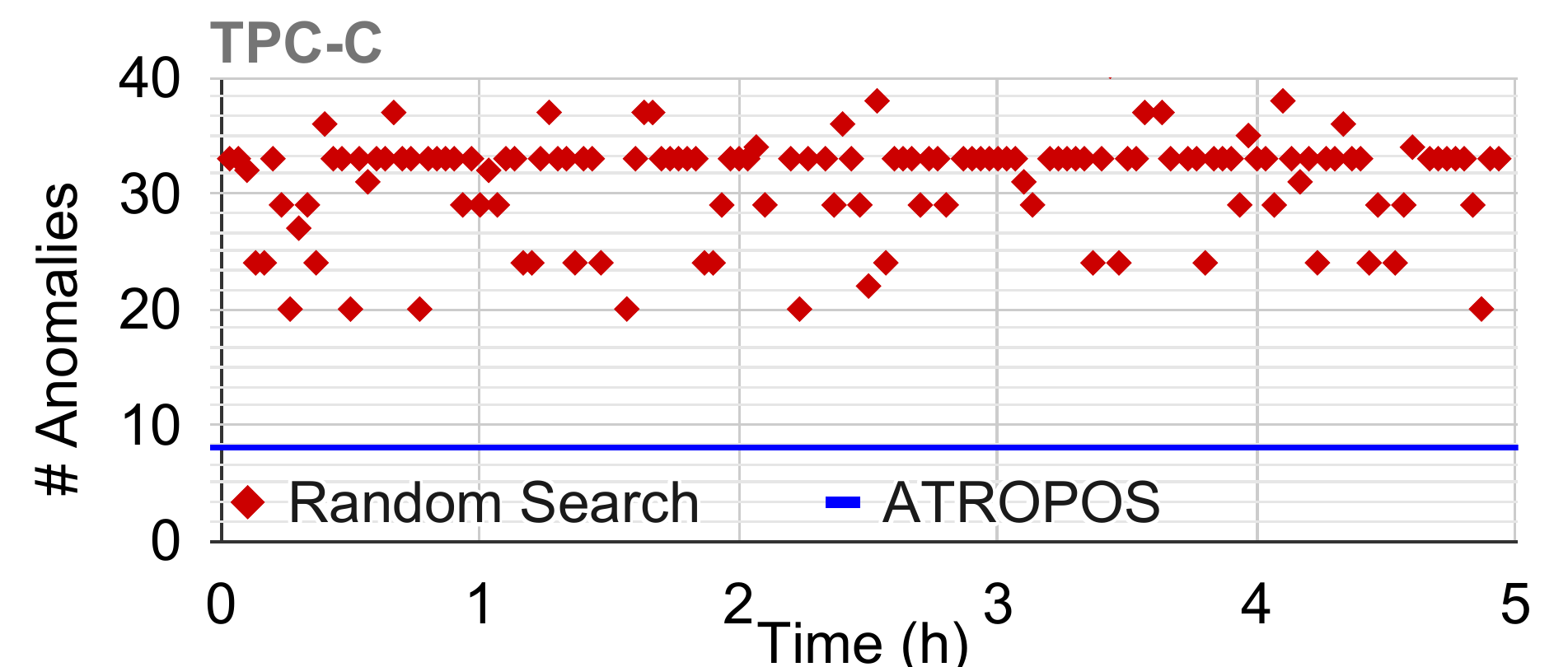}
%\caption*{TPC-C}
\end{subfigure}
  \caption{Number of anomalous access pairs in randomly
  refactored programs
}
  \label{fig:eval_random2}
\end{figure*}

\textbf{Definition of } $\Sigma(r.f)$: Given a database state $\Sigma = (\func{str},\func{vis},\func{cnt})$, a primary key $r \in R_{id}$ and a field $f$, we define $\Sigma(r.f)$ as follows:
$$
\Sigma(r.f) = v \Leftrightarrow \exists \eta \in \func{str}. \eta_r = r \wedge \eta_f = f \wedge \eta_v = v \wedge (\forall \eta' \in \func{str}. \eta'_r = r \wedge \eta'_f = f \Rightarrow \eta'_\tau \leq \eta_\tau)
$$

Given a program $P$ and a value correspondence $v=$ $(R, R',$ $f, f', \theta, \alpha)$, let $\mathcal{V}^{\func{id}}_{v,P}$ be the set containing all identity value correspondences for all fields of all relations of $P$ except the fields $f,f'$ and the value correspondence $v$. The notation also generalizes for a set of value correspondences $V$ (i.e. $\mathcal{V}^{\func{id}}_{V,P}$ is the union of the set of all identity value correspondences for all fields of all relations of $P$ except the fields which are source and target fields in $V$, and the set of the value correspondence $V$). In accordance with the definitions in the paper, here we define a containment relation between
sets of records as follows:
\begin{align*}
  X \sqsubseteq_V
  & X' ~~\equiv~~ \\
  &
    \forall (R,R',f,f',\theta,\alpha)\in V.\;
    \forall r\in R_\func{id}.\; r \in X_{id} \Leftrightarrow\\ & \theta(r) \cap X'_{\func{id}} \neq \emptyset \wedge \\
    & X(r.f)=\alpha(\{m\,|\,r'\in\theta(r) \cap X'_\func{id}~\wedge~ X'(r'.f')=m\})\\
\end{align*}

We begin the proofs by formalizing $\eval{.}$ and $R1-R3$ requirements in more detail. The transformation on select commands is defined as follows:
\begin{mathpar}
  \eval{c}_v = (x:=\,\sql{select} f'\; \sql{from} {R'}\ \sql{where}
  \func{redirect}(v, {\phi},)),
\end{mathpar}
 where  the function  $\func{redirect}: v \times \phi \rightarrow \phi$
 satisfies the following constraint:
 \ruleLabel{R1}
 \begin{equation*}
      %\tag{R1}
       \RULE{
         r\in R_\func{id}
 \qquad
   \Sigma\sqsubseteq_{\mathcal{V}^{\func{id}}_{v,P}} \Sigma'
   \qquad
   \Delta\sqsubseteq_{\mathcal{V}^{\func{id}}_{v,P}}\Delta'}
       {r \in \Sigma_{\func{id}} \wedge \Delta,\phi(\Sigma(r))\Downarrow \func{true} \Leftrightarrow
 \theta(r) \cap \{r' \in R'_\func{id}\,|\, r' \in \Sigma' \wedge
 \\
  \Delta',{\func{redirect}(v,\phi)}(\Sigma'(r')) \Downarrow \func{true}\} \neq \emptyset \wedge \\
 \Sigma(r.f) = \alpha(\{\Sigma'(r'.f')\ |\ r' \in \theta(r) \wedge \Delta',{\func{redirect}(v,\phi)}(\Sigma'(r')) \Downarrow \func{true}\})}
 \end{equation*}
 
 \ruleLabel{R2}
 \begin{equation*}
       \RULE{

       \Delta \sqsubseteq_{\mathcal{V}^{\func{id}}_{v,P}} \Delta'
        
       }{
       \Delta, e \Downarrow n \Leftrightarrow \Delta',\eval{e}_v \Downarrow n
     }
     \end{equation*}

The redirect function takes a value correspondence and a WHERE clause, and returns another WHERE clause. The rule (R1) ensures that if the database state $\Sigma$ and local state $\Delta$ are contained inside $\Sigma'$ and $\Delta'$ respectively, then the output of SELECT query in the original program must be contained in the output of the SELECT query in the refactored program. A sufficient condition for this is that for every record $r$ in $\Sigma$ which satisfies the original WHERE clause $\phi$ (expressed as $\Delta,\phi(\Sigma(r))\Downarrow \func{true}$), there must be records in $\theta(r)$ that satisfy the new WHERE clause (that is returned by $redirect$), and that the values in the selected fields in the original record $r$ can be determined using the records in $\theta(r)$.

The rule (R2) simply says that if the local database state $\Delta$ is contained inside $\Delta'$, then any expression $e$ on $\Delta$ or $\Delta'$ must evaluate to the same value.

 For any command
 $c=\sql{update} R\ \sql{set} f=e\ \sql{where}\phi$ in the original program, which is
 affected by the newly added  $v=(R,R',f,f',\theta,\alpha)$, the refactored command is given by:
\begin{equation}
\eval{c}_v :=  \sql{update} R'\ \sql{set} f' = 
\\
\func{duplicate}(e,v)\ \sql{where} \func{duplicate}(\phi,v)
\end{equation}
The refactored command must satisfy the following property:
 \ruleLabel{R3}
  \begin{equation*}
 \RULE
 {
   \Sigma\sqsubseteq_{\mathcal{V}^{\func{id}}_{v,P}} \Sigma' \qquad
   \Delta \sqsubseteq_{\mathcal{V}^{\func{id}}_{v,P}} \Delta'
   \qquad
   \Sigma',\Delta',\llbracket c \rrbracket_v \rightarrow \Sigma'',\Delta',\func{skip}
   \qquad
   r\in R_\func{id}
 }
 {
  r \in \Sigma_\func{id} \wedge \Delta,\phi(\Sigma(r)) \Downarrow \func{true} \Leftrightarrow \{r' \in R'_\func{id}\ |\ r' \in \Sigma'_\func{id} \wedge  \\
  \Delta', \func{duplicate}(\phi,v)(\Sigma(r')) \Downarrow \func{true}\} \cap \theta(r) \neq \emptyset
  \\    \Delta,e\Downarrow n \Leftrightarrow \alpha(\{ \Sigma''(r'.f')  \,|\, r'\in\theta(r) \wedge r' \in \Sigma''_\func{id} \}) = n
 }
 \end{equation*}

The rule (R3) states that if the database state $\Sigma$ and local state $\Delta$ are contained inside $\Sigma'$ and $\Delta'$ respectively, and if the execution of the refactored UPDATE leads to the new database state $\Sigma''$refactored program executes the UPDATE statement, then for every record $r$ whose field $f$ is modified by the original UPDATEprogram (modified to the valuation of the expression $e$), there must be records in $\theta(r)$ which must be modified by the refactored UPDATE, and that the modification in the original UPDATE (i.e. value of expression $e$) can be determined from the modified records obtained after the execution of the refactored UPDATE. This ensures that the original UPDATE is preserved by the refactored UPDATE. 

\begin{figure*}[h]
%    \begin{mdframed}[backgroundcolor=grey10,linecolor=grey10]
  \begin{minipage}{\textwidth}
  \footnotesize
\begin{itemize}
  
  \item[(comm)]
 $
    \begin{array}{llcl}
    (1) & \eval{c;c'}_v & \coloneqq & \eval{c}_v;\eval{c'}_v \\
    (2) & \eval{\func{skip}}_v & \coloneqq & \func{skip} \\
    (3) & \eval{\func{iterate}(e)\{c\}}_v & \coloneqq &
    (4)\func{iterate}(\eval{e}_v)\{\eval{c}_v\} \\
    (5) &  \eval{\func{if}(e)\{c\}}_v & \coloneqq &
    \func{if}(\eval{e}_v)\{\eval{c}_v\} \\
    \end{array}
    $
  
  \item [(whc)]
    $
     \begin{array}{llcl}
    (6) &   \eval{\phi\circ\phi'}_v & \coloneqq & \eval{\phi}_v\circ\eval{\phi'}_v \\
    (7) & \eval{\func{this}.f\odot e}_v & \coloneqq & \func{this}.f\odot \eval{e}_v \\
    \end{array}
    $

\item[(exp)] 
   $
    \begin{array}{llcl}
      (8) & \eval{\func{at}^e(x.f'')}_v & \coloneqq &
      \func{at}^{\eval{e}_v}(x.f'')\\
      (9) & \eval{\func{agg}(x.f'')}_v & \coloneqq &
      \func{agg}(x.f'')\\
      (10) & \eval{n}_v & \coloneqq & n \\
      (11) & \eval{a}_v & \coloneqq & a \\
      (12) & \eval{e\oplus e'}_v & \coloneqq & \eval{e}_v \oplus \eval{e'}_v\\
      (13) & \eval{e\odot e'}_v & \coloneqq &  \eval{e}_v \odot \eval{e'}_v \\
      (14) & \eval{e\circ e'}_v & \coloneqq &   \eval{e}_v \circ \eval{e'}_v\\
      (15) & \eval{\func{iter}}_v & \coloneqq & \func{iter}\\
  \end{array}
    $
  \item[(select)] 
    $
    \begin{array}{llcl}
      (16) & \eval{x:=\; \sql{select} f''\; \sql{from} {R}\ \sql{where} \stx{\phi}}_v
      & \coloneqq & (x:=\; \sql{select} {f''}\; \sql{from} {R}\ \sql{where}
  \eval{\phi}_v)\\
      (17) & \eval{x:=\; \sql{select} f''\; \sql{from} {R''}\ \sql{where} \stx{\phi}}_v
      & \coloneqq & (x:=\; \sql{select} {f''}\; \sql{from} {R''}\ \sql{where}
  \eval{\phi}_v)
%     \\
%      (18) & \eval{x:=\; \sql{select} f\; \sql{from} {R}\ \sql{where} \stx{\phi}}_v
%      &  \coloneqq  & (x:=\; \sql{select} {f'}\; \sql{from} {R'}\ \sql{where}
%     \param{\func{redirect}(\eval{\phi}_v,v)})
    \end{array}
    $

  \item[(update)]  $
    \begin{array}{llcl}
      (19) & \eval{\sql{update} R\; \sql{set} f''=e \;\sql{where}
  \phi}_v &  \coloneqq &  \sql{update} R\; \sql{set}
  f''=\eval{e}_v \;\sql{where}
  \eval{\phi}_v  \\
  (20) & \eval{\sql{update} R''\; \sql{set} f''=e \;\sql{where}
  \phi}_v &  \coloneqq &  \sql{update} R''\; \sql{set}
  f''=\eval{e}_v \;\sql{where}
  \eval{\phi}_v \\
%  (21) & \eval{\sql{update} R\; \sql{set} f=e \;\sql{where}
%  \phi}_v &\!\! \!\!\!\! \coloneqq & \!\!\! \!\!\!\sql{update} R'\; \sql{set} f'=\eval{e}_v \;\sql{where}
%  \param{\func{redirect}(\phi,v)}) \\
%  (22) & \eval{\sql{update} R\; \sql{set} f=e \;\sql{where}
%  \phi}_v & \!\!\! \!\!\! \coloneqq & \!\!\! \!\!\! \sql{update} R'\; \sql{set} \param{\func{duplicate}(f',\eval{e}_v)} \;\sql{where}
%  \param{\func{duplicate}(\eval{\phi}_v,v)})
    \end{array}
    $

\end{itemize}
\caption{Transformations triggered by a new value correspondence
$v=(R,R',f,f',\theta,\alpha)$}
  \label{fig:transformations}
\end{minipage}
%\end{mdframed}
\end{figure*}

\begin{lemma}
Given a program $P$, let $v=(R, R', f, f', \theta, \alpha)$ be a value correspondence and let $V = \mathcal{V}^{\func{id}}_{v,P}$. Then, if $(V_1,P) \hookrightarrow_{\textsc{v}} (V_1 \cup \{v\},P')$ and the refactoring semantics is equipped with the correct rewriting rules - which satisfy $R1$, $R2$ and $R3$, then $P' \preceq_V P$.
\end{lemma}
\begin{proof}
First we will show that for all histories $h' \in H(P')$, there exists a history $h \in H(P)$ such that $h' \preceq_{\mathcal{V}^{\func{id}}_{v,P}} h$. Given $h'$, the history $h$ is constructed by instantiating the same transactions as those in $h$ in the same order with the same arguments. Further, each command $\llbracket c \rrbracket_v$ of $P'$ executed in $h'$ corresponds to command $c$ of P. Hence, to construct the history $h$, we will execute the corresponding command for every execution step of $h'$. Formally, for a step $\Sigma'_1, \Delta'_1, \llbracket c \rrbracket_v \rightarrow \Sigma'_2, \Delta'_2, \llbracket c' \rrbracket_v$ in the history $h'$, we will perform the step $\Sigma_1, \Delta_1, c \rightarrow \Sigma_2, \Delta_2, c'$. We will show that the following invariant is maintained at every step: $\Sigma_2 \sqsubseteq_V \Sigma_2^{'}$, and $\Delta_2 \sqsubseteq_V \Delta_2^{'}$. We will use induction on the number of steps in the history $h'$ (which is same as the number of steps in history $h$). 

\textbf{Base Case}: In the beginning, both $\Sigma'$ and $\Delta'$ are empty. The first step in both $h$ and $h'$ is necessarily the application of the rule \textsc{txn-invoke}, which invokes an instance of some transaction $t$ in both $h$ and $h'$ with the same argument. Hence, after this step, both $\Sigma',\Delta'$ and $\Sigma,\Delta$ continue to remain empty and so the invariant holds trivially.

\textbf{Inductive Case}: Assume that the step $\Sigma_1^{'}, \Delta_1^{'}, \llbracket V,c_1 \rrbracket_v \rightarrow \Sigma_2^{'}, \Delta_2^{'}, \llbracket V,c_2 \rrbracket_v$ is performed in history $h'$ and the step $\Sigma_1, \Delta_1, c \rightarrow \Sigma_2, \Delta_2, c'$ is performed in the history $h$. By the inductive hypothesis, $\Sigma_1 \sqsubseteq_V \Sigma_1^{'}$, and $\Delta_1 \sqsubseteq_V \Delta_1^{'}$. We case split based on the type of step taken:

\textsc{txn-ret}: For a return expression $e$, by $(R2)$, since $\Delta_1 \sqsubseteq_V \Delta_1^{'}$, if $\Delta_1^{'}, \eval{e}_v \Downarrow n$, then $\Delta_1,e  \Downarrow n$. In this case, the corresponding transaction instances in $h$ and $h'$ return the same value $n$. Further, $\Sigma_2^{'} = \Sigma_1^{'}$, $\Delta_2^{'} = \Delta_1^{'}$ and $\Sigma_2 = \Sigma_1$, $\Delta_2 = \Delta_1$ and hence the invariant continues to hold. By similar reasoning, steps \textsc{cond-t}, \textsc{cond-f}, \textsc{iter} will also preserve the invariant.

\textsc{select}: We further case-split based on whether the $SELECT$ accesses the source field $f$ of the value correspondence $v$ or not. First, let us consider the case where the $SELECT$ accesses a different field. WLOG, let the command executed by $h'$ be $\eval{x:=\; \sql{select} f''\; \sql{from} {R''}\ \sql{where} \stx{\phi}}_v$. 

This is defined to be the command $$x:=\; \sql{select} {f''}\; \sql{from} {R''}\ \sql{where}\eval{\phi}_v$$. In $h'$, using the rule \textsc{consistent-view}, a view of the data store $\Sigma^{'*} \sqsubseteq \Sigma_1^{'}$ will be generated. Since the command accesses fields which are different from $f$, there are identity value correspondences for all the accessed fields in $V$, and since $\Sigma_1 \sqsubseteq_V \Sigma_1^{'}$, all the write events to fields accessed by the command in store $\Sigma_1^{'}$ will also be present in $\Sigma_1$. Hence, there exists a view $\Sigma^* \sqsubseteq \Sigma_1$ which contains the same write events to fields accessed by the command as $\Sigma^{'*}$. Finally, since $\Delta_1 \sqsubseteq_V \Delta_1^{'}$, by $(R2)$ all expressions $e$ present in the where clause $\phi$ will evaluate to same value in $h$ and $h'$. Hence, the same records will satisfy the where clauses $\eval{\phi}_v$ and $\phi$, leading to the same selection of records and hence the same binding to $x$, thus preserving $\Delta_2(x) \sqsubseteq_V \Delta_2^{'}(x)$.

Now, let us consider the case where the command executed by $h'$ is $$\eval{V,x:=\; \sql{select} f\; \sql{from} {R}\ \sql{where} \stx{\phi}}_v$$. This is defined to be the command $$x:=\; \sql{select} {f'}\; \sql{from} {R'}\ \sql{where} \param{\func{redirect}(V,\eval{\phi}_v,v)}$$.  In $h'$, using the rule \textsc{consistent-view}, a view of the data store $\Sigma^{'*} \sqsubseteq \Sigma_1^{'}$ will be generated to determine the records which satisfy the $WHERE$ clause $\func{redirect}(V,\eval{\phi}_v,v)$. Since $\Sigma_1 \sqsubseteq_V \Sigma_1^{'}$, for any record which satisfies $\func{redirect}(V,\eval{\phi}_v,v)$ and which is present in $\Sigma_1^{'}$, it corresponding record according to the value correspondence will also be present in $\Sigma_1$. Hence, we can construct a view of the data store $\Sigma^* \sqsubseteq \Sigma_1$ such that all fields of all records accessed in the $WHERE$ clause evaluate to the same value, i.e. $\Sigma^* \sqsubseteq \Sigma^{'*}$. Now, by $(R1)$, it is guaranteed that if records $\theta(r)$ are selected by $h'$, then the record $r$ must be selected by $h$. Further, since $\Sigma_1 \sqsubseteq_V \Sigma_1^{'}$ and $V$ includes the value correspondence $v$, the value of $f$ in the selected record in $h$ will satisfy the value correspondence $v$ w.r.t. the records $\theta(r)$ selected in $h'$. Hence, $\Delta_2(x) \sqsubseteq_V \Delta_2^{'}(x)$.

\textsc{update}: We case split based on whether the $UPDATE$ command modifies the source field $f$. First, let us consider the case where the $UPDATE$ modifies a different field. WLOG, let the command executed by $h$ be $\eval{\sql{update} R''\; \sql{set} f''=e \;\sql{where}\phi}_v$. This is defined to be the command $\sql{update} R''\; \sql{set} f''=\eval{e}_v \;\sql{where}\eval{\phi}_v$.

In $h'$, using the rule \textsc{consistent-view}, a view of the data store $\Sigma^{'*} \sqsubseteq \Sigma_1^{'}$ will be generated. Since the command accesses fields in the $WHERE$ clause which are different from $f$, there are identity value correspondences for all the accessed fields in $V$, and since $\Sigma_1 \sqsubseteq_V \Sigma_1^{'}$, all the write events to the accessed fields in store $\Sigma_1^{'}$ will also be present in $\Sigma_1$. Hence, there exists a view $\Sigma^* \sqsubseteq \Sigma_1$ which contains the same write events to the accessed fields as present in $\Sigma^{'*}$.  This guarantees that the same set of records in $R''$ will be selected by the $WHERE$ clauses in $h$ and $h'$ .Finally, since $\Delta_1 \sqsubseteq_V \Delta_1^{'}$, by $(R2)$, the expressions $\eval{V,e}_v$ and $e$ will evaluate to the same value. Hence, write events to the same field in the same set of records and writing the same value will be generated in $h$ and $h'$. This guarantees that $\Sigma_2 \sqsubseteq_V \Sigma_2^{'}$.

Now, let us consider the case where the command executed by $h'$ is $$\llbracket \sql{update} R\ \sql{set} f=e\ \sql{where}\phi \rrbracket_v$$. This is defined to be the command $ \sql{update} R'\ \sql{set} f' = \func{duplicate}(e,v)\ \sql{where} \func{duplicate}(\phi,v)$. n $h'$, using the rule \textsc{consistent-view}, a view of the data store $\Sigma^{'*} \sqsubseteq \Sigma_1^{'}$ will be generated to determine the records which satisfy the $WHERE$ clause $\func{duplicate}(\phi,v)$. Since $\Sigma_1 \sqsubseteq_V \Sigma_1^{'}$, if $\theta(r)$ is present in $\Sigma_1^{'}$, then $r$ would also be present in $\Sigma_1$. Hence, we can construct a corresponding view $\Sigma* \sqsubseteq \Sigma_1$ such that $\Sigma* \sqsubseteq_V \Sigma^{'*}$. Now, $(R3)$ guarantees that if any record in $\theta(r)$ is updated by $h'$, then the corresponding record $r$ would also be updated by $h$ (since it would satisfy the original $WHERE$ clause $\phi$). Further, $(R3)$ also guarantees the update $\func{duplicate}(e,v)$ in $h'$ would satisfy the value correspondence with the update in $h$. Hence, after the update $\Sigma_2 \sqsubseteq_V \Sigma_2^{'}$.

Note that the proof that for every serial execution of $P$, there exists an execution of $P'$ would follow the exact same pattern as above. In fact, we would actually construct a serial execution of $P'$ which would have the same behavior as the execution of $P$.
\end{proof}

\begin{lemma}
Given programs $P_1$, $P_2$ and $P_3$, a set of value-correspondences $V$ and a value correspondence $v$ such that the source and target fields of $v$ are not involved in any value correspondence in $V$, if $P_2 \preceq_{\mathcal{V}^{\func{id}}_{V,P_1} } P_1$ and $P_3 \preceq_{\mathcal{V}^{\func{id}}_{v,P_2}} P_2$, then $P_3 \preceq_{\mathcal{V}^{\func{id}}_{V \cup \{v\},P_1}} P_1$.
\end{lemma}
\begin{proof}
We are given $H_1: P_2 \preceq_{\mathcal{V}^{\func{id}}_{V,P_1} } P_1$ and $H_2:P_3 \preceq_{\mathcal{V}^{\func{id}}_{v,P_2}} P_2$. First we will show that for all histories $h \in H(P_3)$, there exists a history $h'' \in H(P_1)$ such that $h \preceq_{\mathcal{V}^{\func{id}}_{V \cup \{v\},P_1}} h''$. Let $h_{fin} = (\Sigma, \Gamma)$. By hypothesis $H_2$, there exists a history $h' \in H(P_2)$, $h' = (\Sigma', \Gamma')$ such that $\Gamma = \Gamma'$. By hypothesis $H_1$, there exists a history $h'' \in H(P_1)$, $h'' = (\Sigma'', \Gamma'')$ such that $\Gamma'' = \Gamma'$. Hence, $\Gamma'' = \Gamma$.

Now, by $H_2$, $H_3:\Sigma' \sqsubseteq_{\mathcal{V}^{\func{id}}_{v,P_2}} \Sigma$, and by $H_1$, we have $H_4:\Sigma'' \sqsubseteq_{\mathcal{V}^{\func{id}}_{V,P_1}} \Sigma'$. We will show that $\Sigma'' \sqsubseteq_{\mathcal{V}^{\func{id}}_{V\cup\{v\},P_1}}$.

Consider a value correspondence $u \in V$, such that $u = (R,R',f,f',\theta, \alpha)$. Let $r \in R_\func{id}$. Suppose $r \in \Sigma''_{\func{id}}$. Then, by $H_4$, $\theta(r) \subseteq \Sigma'_\func{id}$. Now, since $f'$ is a field of $P_2$ which is not involved in value correspondence $v$, there exists an identity value correspondence $u' \in \mathcal{V}^{\func{id}}_{v,P_2}$, such that $u' = (R', R', f', f', \func{id}, \func{id})$. Hence, by $H_3$, we have $\theta(r) \subseteq \Sigma_\func{id}$. In the other direction, suppose $\theta(r) \subseteq \Sigma_\func{id}$. Then, by $H_3$, we have $\theta(r) \subseteq \Sigma'_\func{id}$, and hence by $H_2$, we have $r \in \Sigma''_{\func{id}}$.

Now, by $H_4$, $\Sigma''(r.f) = \alpha(\{m\ |\ r' \in \theta(r) \wedge \Sigma'(r'.f') = m\})$. By $H_3$ and the existence of the value correspondence $u'$, we have $\Sigma'(r'.f') = \Sigma(r'.f')$. Thus, $\Sigma''(r.f) = \alpha(\{m\ |\ r' \in \theta(r) \wedge \Sigma(r'.f') = m\})$.

Consider the value correspondence $v$ itself. Let $$v = (R,R',f,f',\theta, \alpha)$$. Since $f$ is not involved in any value correspondence in $V$, there exists an identity value correspondence $v'=(R,R,f,f,\func{id}, \func{id})$ such that $v' \in \mathcal{V}^{\func{id}}_{V,P_1}$. Let $r \in R_\func{id}$. Suppose $r \in \Sigma''_\func{id}$. Then by $H_4$, $r \in Sigma'_\func{id}$. By $H_3$, $\theta(r) \in \Sigma_\func{id}$. The other direction is straightforward. 

Now, by $H_3$, $\Sigma'(r.f) = \alpha(\{m\ |\ r' \in \theta(r) \wedge \Sigma(r'.f') = m\})$. By $H_4$, $\Sigma''(r.f) = \Sigma'(r.f)$. 

Hence, $\Sigma''(r.f) =  \alpha(\{m\ |\ r' \in \theta(r) \wedge \Sigma(r'.f') = m\})$.

Finally, consider an identity value correspondence in $\mathcal{V}^{\func{id}}_{V \cup \{v\},P_1}$ whose source field is neither in $V$ or in $v$. Then, this value correspondence is present in both $\mathcal{V}^{\func{id}}_{V,P_1}$ and $\mathcal{V}^{\func{id}}_{v,P_2}$. Hence, by $H_3$ and $H_4$, this value correspondence will also be preserved between $\Sigma''$ and $\Sigma$. Hence, $\Sigma'' \sqsubseteq{\mathcal{V\cup\{v\}}^{\func{id}}_{V,P_1}} \Sigma$. Thus, $h \preceq_{\mathcal{V}^{\func{id}}_{V \cup \{v\},P_1}} h''$.

Now we will show that for all histories $h'' \in H(P_1)_{\func{ser}}$, there exists a history $h \in H(P_3)_{\func{ser}}$ such that $h \preceq_{\mathcal{V}^{\func{id}}_{V \cup \{v\},P_1}} h''$. Let $h''_{fin} = (\Sigma'', \Gamma'')$, such that $h'' \in H(P_1)_{\func{ser}}$. By hypothesis $H_1$, there exists a history $h' \in H(P_2)_\func{ser}$, $h' = (\Sigma', \Gamma')$ such that $\Gamma'' = \Gamma'$ and $\Sigma'' \sqsubseteq_{\mathcal{V}^{\func{id}}_{V,P_1}} \Sigma'$. By hypothesis $H_2$, there exists a history $h \in H(P_3)_\func{ser}$, $h = (\Sigma, \Gamma)$ such that $\Gamma' = \Gamma$ and $\Sigma' \sqsubseteq_{\mathcal{V}^{\func{id}}_{v,P_2}} \Sigma$. Hence, $\Gamma'' = \Gamma$, and $\Sigma'' \sqsubseteq{\mathcal{V\cup\{v\}}^{\func{id}}_{V,P_1}} \Sigma$. Thus, $h \preceq_{\mathcal{V}^{\func{id}}_{V \cup \{v\},P_1}} h''$.
\end{proof}

\begin{theorem}
The refactoring semantics equipped with correct rewriting rules - which
satisfy $R1$, $R2$ and $R3$ - is sound, i.e.
    $ \forall_{P,P'}.\; \hstepc{\emptyset, P}{}{V, P'} \Rightarrow
    P'\preceq_{\mathcal{V}^{\func{id}}_{V,P}} P
$.
    \begin{proof}
Note that in order to express the result more formally, we have introduced new notation in the Appendix which has resulted in a slight change in the theorem statement from the main text of the paper. However, both statements signify the same result, since $V'$ used in the main text is actually the set of value correspondences introduced by the refactoring rules (same as $V$ in the above statement), and $\mathcal{V}^{\func{id}}_{V,P}$ contains $V$.

We first assume that all additions to the schema (either new relations or new fields) are carried out first using the rules \textsc{intro $\rho$} and \textsc{intro $\rho.f$}. These rules do not change the program, and since the original program $P$ is guaranteed to not access the new additions, it is clear that the statement of the theorem continues to hold after the above sequence of refactorings. Let $\func{Fld}_o$ be the set of all fields in the original program, and $\func{Fld}_n$ be the set of newly added fields. The rest of refactoring steps will involve multiple applications of the \textsc{intro $v$} rule. That is,
$$
(\emptyset,P) \hookrightarrow_{v_1} (\{v_1\}, P_1) \ldots \hookrightarrow_{v_n} (V,P_n)
$$
where $v_1, \ldots, v_n$ are the value correspondences that are introduced. We assume that the source field $f_i$ of every $v_i$ is a field of the original program i.e. $f_i \in \func{Fld}_o$ and the target field is a newly added field. We also assume that the no two value correspondences have the same source field or the same target field. Let $V = \{v_1, \ldots, v_n\}$. We will show that $P_n \preceq_{\mathcal{V}^{\func{id}}_{V,P}} P$. We will use induction on the number of refactoring steps ($n$). 

\textbf{Base Case}: For the first step, by applying Lemma B.1, $P_1 \preceq_{\mathcal{V}^{\func{id}}_{v_1,P_1}} P$.

\textbf{Inductive Case}: Assume that after $k$ steps, $P_k \preceq_{\mathcal{V}^{\func{id}}_{V_k,P_1}} P$, where $V_k = \{v_1, \ldots, v_k\}$. For the $(k+1)$th step $(V_k,P_k) \hookrightarrow_{v_{k+1}} (V', P_{k+1})$. Then, by Lemma B.1, $P_{k+1} \preceq_{\mathcal{V}^{\func{id}}_{v_{k+1},P_k}} P_k$. Now, by Lemma B.2, $P_{k+1} \preceq_{\mathcal{V}^{\func{id}}_{V_k \cup \{v_{k+1}\},P_1}} P_1$.  
\end{proof}
  \end{theorem}
  
\begin{theorem}
  The rewrite rules described in this section (i.e. by (I1.1), (I1.2),
  (I2.1), (I2.2), (I3.1) and (I3.2))
  satisfy the correctness properties (R1), (R2) and (R3).
    \begin{proof}
We will first show that (I1.1) satisfies (R1). We are given a value correspondence $v=(R,R',f,f',\theta,\alpha)$. Let $V = \mathcal{V}^{\func{id}}_{v,P}$. The value correspondence $v$ satisfies the property that $\alpha = \func{any}$ and $\theta$ selects records in $R'$ based only equality predicates on all the primary-key fields of $R$. Assume that each primary key field $p \in R_{\func{pk}}$ is equated with expression $e_p$ in the where clause $\phi$. Given $\Sigma, \Sigma', \Delta, \Delta'$ such that $\Sigma \sqsubseteq \Sigma'$ and $\Delta \sqsubseteq \Delta'$, consider $r \in R_\func{id}$. Assume that $\Delta, e_p \Downarrow n_p$. By (R2), we have $\Delta, e_p \Downarrow n_p$. Suppose $r \in \Sigma_\func{id}$.

\begin{align*}
\Delta,\phi(\Sigma(r))\Downarrow \func{true} & \Rightarrow \bigwedge_{p \in R_\func{pk}} \Sigma(r.p) = n_p
\end{align*}

This in turn implies that $\theta(r) = \{r'\ |\ r' \in R'_\func{id} \wedge \forall p \in R_\func{pk}. \Sigma'(r'.\hat{\theta}(p)) = n_p\}$. Since $\Sigma \sqsubseteq_V \Sigma'$, at least one such record in $\theta(r)$ is also present in $\Sigma'$. Now, by definition of $\func{redirect}(v,\phi)$,

\begin{align*}
r' \in \Sigma' \wedge \Delta',{\func{redirect}(v,\phi)}(\Sigma'(r')) \Downarrow \func{true}\} \Rightarrow \bigwedge_{p \in R_\func{pk}} \Sigma'(r'.\hat{\theta}(p)) = n_p
\end{align*}

We have just shown that at least one such record is guaranteed to be present in $\Sigma'$, and hence the intersection of $\theta(r)$ and above set is non-empty. To prove the reverse direction, we follow the same steps in reverse order. Since $\func{redirect}$ picks exactly the records which satisfy the above condition, the containment relation $\Sigma \sqsubseteq_V \Sigma'$ guarantees that $\Sigma'(r'.f') = \Sigma(r.f)$. This completes the proof that (I1.1) satisfies (R1). 

Now we will show that (I2.1) satisfies (R2). Consider the simple expression $e = \func{at}^1(x.f)$. Since $\Delta(x) \sqsubseteq \Delta'(x)$, there exists at least one record bound to $x$ in $\Delta'$. Further,  the value at field $f'$ in this record must be equal to the value at field $f$ of any record bound to $x$ in $\Delta$. Hence, $\func{at}^1(x.f) = \func{at}^1(x.f')$. Thus, (R2) follows for $e$. For compound expressions, the proof directly follows using induction on the structure of the expression.

Now we will show that (I3.1) satisfies (R3). The first condition in the goal of (R3) is the same as the first condition in goal of (R1). Due to (R2), the second condition also directly holds.

Now, we will prove the results about the logger refactoring. Note that this for $SELECT$ statement and expression transformations, this refactoring uses the same $\func{redirect}$ definition as above. Hence, the above proofs for (R1) and (R2) directly apply in this case for (I1.2) and (I2.2). Let us now prove that (I3.2) satisfies (R3). The first condition in the antecedent of (R3) holds directly by the same reasoning as applied in the cases above. For the second condition, note that due to the $\func{sum}$ value correspondence, the following holds:

\begin{align*}
\Sigma(r.f) = \sum_{r' \in \theta(r) \cap \Sigma'_\func{id}} \Sigma(r'.f')
\end{align*}

Assuming that $\Delta, e \Downarrow n$ (also $\Delta', e \Downarrow n$ due to (R2)), the updated value in record $r$ satisfying $\phi$ would be $\Sigma(r.f) + n$. On the R.H.S, a new record in $\theta(r)$ would be added in $\Sigma''$, with the value in the field $f'$ being $n$. Thus, the same value is added on both sides of the above equation, resulting in equal values again. This completes the proof that (I3.2) satisfies (R3). 
  \end{proof}
\end{theorem}

\end{document}